\newtheorem{theorem}{Theorem}[section]
\newtheorem{proposition}[theorem]{Proposition}
\newtheorem{lemma}[theorem]{Lemma}
\newtheorem{corollary}[theorem]{Corollary}
\newtheorem{assumption}[theorem]{Assumption}
\newtheorem{definition}[theorem]{Definition}
\theoremstyle{remark}
\newtheorem{example}[theorem]{Example}
\newtheorem{remark}[theorem]{Remark}
\newcommand{\probp}{P}
\newcommand{\probq}{Q}
\newcommand{\R}{{\mathbb R}}
\begin{document}

\title{Systemic Optimal Risk Transfer Equilibrium}
\date{\today }
\author{Francesca Biagini\thanks{%
Department of Mathematics, University of Munich, Theresienstra{\ss }e 39,
80333 Munich, Germany, \emph{francesca.biagini@math.lmu.de.}} \and %
Alessandro Doldi \thanks{%
Dipartimento di Matematica, Universit\`a degli Studi di Milano, Via Saldini
50, 20133 Milano, Italy, $\,\,$\emph{alessandro.doldi@unimi.it}. } \and %
Jean-Pierre Fouque \thanks{%
Department of Statistics \& Applied Probability, University of California,
Santa Barbara, CA 93106-3110, \emph{fouque@pstat.ucsb.edu}. Work supported
by NSF grant DMS-1814091.} \and Marco Frittelli\thanks{%
Dipartimento di Matematica, Universit\`a degli Studi di Milano, Via Saldini
50, 20133 Milano, Italy, \emph{marco.frittelli@unimi.it}.} \and Thilo
Meyer-Brandis\thanks{%
Department of Mathematics, University of Munich, Theresienstra{\ss }e 39,
80333 Munich, Germany, \emph{meyerbr@math.lmu.de}.}}
\maketitle

\begin{abstract}
\noindent We propose a novel concept of a Systemic Optimal Risk Transfer
Equilibrium (SORTE), which is inspired by the B\"{u}hlmann's classical
notion of an Equilibrium Risk Exchange. We provide sufficient general
assumptions that guarantee existence, uniqueness, and Pareto optimality of
such a SORTE. In both the B\"{u}hlmann and the SORTE definition, each agent
is behaving rationally by maximizing his/her expected utility given a budget
constraint. The two approaches differ by the budget constraints. In B\"{u}%
hlmann's definition the vector that assigns the budget constraint is given a
priori. On the contrary, in the SORTE approach, the vector that assigns the
budget constraint is endogenously determined by solving a systemic utility
maximization. SORTE gives priority to the systemic aspects of the problem,
in order to optimize the overall systemic performance, rather than to
individual rationality.
\end{abstract}

\begin{equation*}
\,
\end{equation*}

\noindent \textbf{Keywords}: Equilibrium, Systemic Utility Maximization,
Optimal Risk Sharing, Systemic Risk.\newline
\noindent \textbf{Mathematics Subject Classification (2010):} 91G99; 91B30;
60A99; 91B50; 90B50.\newline
\noindent \textbf{JEL Classification:} C02; D5.
\newline

\parindent=0em \noindent

\section{Introduction}

We introduce the concept of Systemic Optimal Risk Transfer Equilibrium,
denoted by SORTE, that conjugates the classical B\"{u}hlmann's notion of an
equilibrium risk exchange with capital allocation based on systemic expected
utility optimization.

The capital allocation and risk sharing equilibrium that we consider can be
applied to many contexts, such as: equilibrium among financial institutions,
agents, or countries; insurance and reinsurance markets; capital allocation
among business units of a single firm; wealth allocation among investors.

In this paper we will refer to a participant in these problems (financial
institution or firms or countries) as an \textbf{agent}; the class
consisting of these $N$ agents as the \textbf{system}; the individual risk
of the agents (or the random endowment or future profit and loss) as the 
\textbf{risk vector} $\mathbf{X}:=(X^{1},...,X^{N})$; the amount $\mathbf{Y}%
:=(Y^{1},...,Y^{N})$ that can be exchanged among the agents as random 
\textbf{allocation}. We will generically refer to a central regulator
authority, or CCP, or executive manager as a \textbf{central bank} (CB).

We now present the main concepts of our approach and leave the details and
the mathematical rigorous presentation to the next sections. In a one period
framework, we consider $N$ agents, each one characterized by a concave,
strictly monotone utility function $u_{n}:\mathbb{R\rightarrow R}$ and by
the original risk $X^{n}\in L^{0}(\Omega ,\mathcal{F},P),$ for $n=1,...,N$.
Here, $(\Omega ,\mathcal{F},P\mathbf{)}$ is a probability space and $%
L^{0}(\Omega ,\mathcal{F},P)$ is the vector space of real valued $\mathcal{F}
$-measurable random variables. The sigma-algebra $\mathcal{F}$ represents
all possible measurable events at the final time $T$. $\mathbb{E}\left[
\cdot \right] $ denotes the expectation under $P$. Given another probability
measure $Q$, $E_{Q}\left[ \cdot \right] $ denotes the expectation under $Q$.
For the sake of simplicity and w.l.o.g., we are assuming zero interest rate.
We will use the bold notation to denote vectors.

\begin{enumerate}
\item \textbf{B\"{u}hlmann's risk exchange equilibrium}

We recall B\"{u}hlmann's definition of a risk exchange equilibrium in a pure
exchange economy (or in a reinsurance market). The initial wealth of agent $%
n $ is denoted by $x^{n}\in \mathbb{R}$ and the variable $X^{n}$ represents
the original risk of this agent. In this economy each agent is allowed to
exchange risk with the other agents. Each agent has to agree to receive (if
positive) or to provide (if negative) the amount $\widetilde{Y}^{n}(\omega )$
at the final time in exchange of the amount $E_{Q}[\widetilde{Y}^{n}]$ paid
(if positive) or received (if negative) at the initial time, where $Q$ is
some pricing probability measure. Hence $\widetilde{Y}^{n}$ is a time $T$
measurable random variable. In order that at the final time this risk
sharing procedure is indeed possible, the exchange variables $\widetilde{Y}%
^{n}$ have to satisfy the \emph{clearing condition}

\begin{equation*}
\sum_{n=1}^{N}\widetilde{Y}^{n}=0\text{ \ }P\text{-}a.s.\,\,\,.
\end{equation*}

As in B\"{u}hlmann \cite{Buhlmann1} and \cite{Buhlmann}, we say that a pair (%
$\widetilde{\mathbf{Y}}_{\mathbf{X}},Q_{\mathbf{X}})$ is an \textbf{risk
exchange equilibrium} if:

(a) for each $n$, $\widetilde{Y}_{\mathbf{X}}^{n}$ maximizes: $\mathbb{E}%
\left[ u_{n}(x^{n}+X^{n}+\widetilde{Y}^{n}-E_{Q_{\mathbf{X}}}[\widetilde{Y}%
^{n}])\right] $ among all variables $\widetilde{Y}^{n}$;

(b) $\sum_{n=1}^{N}\widetilde{Y}_{\mathbf{X}}^{n}=0$ $P-$a.s. .

It is clear that only for some particular choice of the equilibrium pricing
measure $Q_{\mathbf{X}}$, the optimal solutions $\widetilde{Y}_{\mathbf{X}%
}^{n}$ to the problems in (a) will also satisfy the condition in (b). 

In addition it is evident that the clearing condition in (b) requires that all agents accept to exchange the amount $\widetilde{Y}_{%
\mathbf{X}}^{n}(\omega )$ at the final time $T$.


Define 
\begin{equation}
\mathcal{C}_{\mathbb{R}}:=\left\{ \mathbf{Y}\in (L^{0}(\Omega ,\mathcal{F}%
,P))^{N}\mid \sum_{n=1}^{N}Y^{n}\in \mathbb{R}\right\}   \label{Cr}
\end{equation}%
that is, $\mathcal{C}_{\mathbb{R}}$ is the set of random vectors such that
the sum of the components is $P$-a.s. a deterministic number.

Observe that with the change of notations $Y^{n}:=x^{n}+\widetilde{Y}%
^{n}-E_{Q_{\mathbf{X}}}[\widetilde{Y}^{n}]$, we obtain variables with $E_{Q_{%
\mathbf{X}}}[Y^{n}]=x^{n}$ for each $n,$ and an optimal solution $Y_{\mathbf{%
X}}^{n}$ still belonging to $\mathcal{C}_{\mathbb{R}}$ and satisfying%
\begin{equation}
\sum_{n=1}^{N}Y_{\mathbf{X}}^{n}=\sum_{n=1}^{N}x^{n}\quad P\text{-}a.s.\,\,.
\label{sumY}
\end{equation}%
As can be easily checked%
\begin{equation*}
\sup_{\widetilde{Y}^{n}}\mathbb{E}\left[ u_{n}(x^{n}+X^{n}+\widetilde{Y}%
^{n}-E_{Q_{\mathbf{X}}}[\widetilde{Y}^{n}])\right] =\sup_{Y^{n}}\left\{ 
\mathbb{E}\left[ u_{n}(X^{n}+Y^{n})\right] \mid E_{Q_{\mathbf{X}%
}}[Y^{n}]\leq x^{n}\right\} .
\end{equation*}%
Hence the two above conditions in the definition of a risk exchange
equilibrium may be equivalently reformulated as

(a') for each $n$, $Y_{\mathbf{X}}^{n}$ maximizes: $\mathbb{E}\left[
u_{n}(X^{n}+Y^{n})\right] $ among all variables satisfying $E_{Q_{\mathbf{X}%
}}[Y^{n}]\leq x^{n}$;

(b') $\mathbf{Y}_{\mathbf{X}}\in \mathcal{C}_{\mathbb{R}}$\quad and\quad $%
\sum_{n=1}^{N}Y_{\mathbf{X}}^{n}=\sum_{n=1}^{N}x^{n}$ $P$-a.s.

We remark that here the quantity $x^{n}\in \mathbb{R}$ is preassigned to
each agent.

\item \textbf{Systemic Optimal (deterministic) Allocation}

To simplify the presentation, we now suppose that the initial wealth of each
agent is already absorbed in the notation $X^{n}$, so that $X^{n}$
represents the initial wealth plus the original risk of agent $n$. We assume
that the system has at disposal a total amount of capital $A\in \mathbb{R}$
to be used at a later time in case of necessity. This amount could have been
assigned by the Central Bank, or could have been the result of the previous
trading in the system, or could have been collected ad hoc by the agents. 
The amount A could represent an insurance pot or a fund collected (as guarantee for future investments) in a community of homeowners.
For further interpretation of $A$, see also the related discussion in
Section 5.2 of Biagini et al. \cite{bffm}. In any case, we consider the
quantity $A$ as exogenously determined. This amount is allocated among the
agents in order to optimize the overall systemic satisfaction. If we denote
with $a^{n}\in \mathbb{R}$ the cash received (if positive) or provided (if
negative) by agent $n$, then the time $T$ wealth at disposal of agent $n$
will be $(X^{n}+a^{n})$. The optimal vector $\mathbf{a_{\mathbf{X}}\in }%
\mathbb{R}^{N}$ could be determined according to the following aggregate
time-$T$ criterion 
\begin{equation}
\sup \left\{ \sum_{n=1}^{N}\mathbb{E}\left[ u_{n}(X^{n}+a^{n})\right] \mid 
\mathbf{a\in }\mathbb{R}^{N}\text{ s.t. }\sum_{n=1}^{N}a^{n}=A\right\} .
\label{ProblemOptimum}
\end{equation}%
Note that each agent is not optimizing his own utility function. As the vector $%
\mathbf{a\in }\mathbb{R}^{N}$ is deterministic, it is known at time $t=0$
and therefore the agents have to agree to provide or receive money only at such
initial time.
\end{enumerate}

However, under the assumption that also at the final time the agents 
 have confidence in the overall reliability of
the other agents, one can combine the two approaches outlined in Items 1 and
2 above to further increase the optimal total expected systemic utility and
simultaneously guarantee that each agent will optimize his/her own single
expected utility, taking into consideration an aggregated budget constraint
assigned by the system. Of course an alternative assumption to 
trustworthiness could be that the rules are enforced by the CB.

We denote with $\mathcal{L}^{n}\subseteq L^{0}(\Omega ,\mathcal{F},P)$ a
space of admissible random variables and assume that $\mathcal{L}^{n}+%
\mathbb{R=}\mathcal{L}^{n}$. We will consider maps $p^{n}:\mathcal{L}%
^{n}\rightarrow \mathbb{R}$ that represent the pricing or cost functionals,
one for each agent $n$. As we shall see, in some relevant cases, all agents
will adopt the same functional $p^{1}=...=p^{N}$, which will then be
interpreted as the equilibrium pricing functional, as in B\"{u}hlmann's
setting above, where $p^{n}(\cdot ):=E_{Q}[\cdot ]$ for all $n$. However, we
do not have to assume this a priori. Instead we require that the maps $p^n$
satisfy for all $n=1,...,N$:

i) $p^{n}$ is monotone increasing;

ii) $p^{n}(0)=0;$

iii) $p^{n}(Y+c)=p^{n}(Y)+c$ for all $c\in \mathbb{R}$ and $Y\in \mathcal{L}%
^{n}$.

Such assumptions in particular imply $p^{n}(c)=c$ for all constants $c\in 
\mathbb{R}$. A relevant example of such functionals are 
\begin{equation}
p^{n}(\cdot ):=E_{Q^{n}}[\cdot ]\,,  \label{pricingFunctional}
\end{equation}%
where $Q^n$ are probability measures for $n=1,...,N$. Another example could be $p^{n}=-\rho^{n}$, for convex risk measures $\rho^{n}$.

Now we will apply both approaches, outlined in Items 1 and 2 above, to
describe the concept of a Systemic Optimal Risk Transfer Equilibrium.

\begin{enumerate}
\item[3.] \textbf{Systemic Optimal Risk Transfer Equilibrium.}

As explained in Item 1, given some amount $a^{n}$ assigned to agent $n$,
this agent may buy $\widetilde{Y}^{n}$ at the price $p^{n}(\widetilde{Y}%
^{n}) $ in order to optimize 
\begin{equation*}
\mathbb{E}\left[ u_{n}(a^{n}+X^{n}+\widetilde{Y}^{n}-p^{n}(\widetilde{Y}%
^{n}))\right] \text{.}
\end{equation*}%
The pricing functionals $p^{n}$, $n=1,...,N$ have to be selected so that the
optimal solution verifies the clearing condition 
\begin{equation*}
\sum_{n=1}^{N}\widetilde{Y}^{n}=0\quad P\text{-a.s.}
\end{equation*}%
However, as in Item 2, $a^{n}$ is not exogenously assigned to each agent,
but only the total amount $A$ is at disposal of the whole system. Thus the
optimal way to allocate $A$ among the agents is given by the solution $(%
\widetilde{Y}_{\mathbf{X}}^{n},p_{\mathbf{X}}^{n},a_{\mathbf{X}}^{n})$ of
the following problem:%
\begin{eqnarray}
&&\sup_{\mathbf{a\in }\mathbb{R}^{N}}\left\{ \sum_{n=1}^{N}\sup_{\widetilde{Y%
}^{n}}\left\{ \mathbb{E}\left[ u_{n}(a^{n}+X^{n}+\widetilde{Y}^{n}-p_{%
\mathbf{X}}^{n}(\widetilde{Y}^{n}))\right] \right\} \,\middle%
|\sum_{n=1}^{N}a^{n}=A\right\} ,  \label{Problem2} \\
&&\sum_{n=1}^{N}\widetilde{Y}_{\mathbf{X}}^{n}=0\,\,\,P-\text{a.s.}\,\,.
\label{Problem2a}
\end{eqnarray}%
From \eqref{Problem2} and %
\eqref{Problem2a} it easily follows that an optimal solution $(\widetilde{Y}_{%
\mathbf{X}}^{n},p_{\mathbf{X}}^{n},a_{\mathbf{X}}^{n})$ fulfills 
\begin{equation}
\sum_{n=1}^{N}p_{\mathbf{X}}^{n}(\widetilde{Y}_{\mathbf{X}}^{n})=0.
\label{PriceZero}
\end{equation}%
Further, letting $Y^{n}:=a^{n}+\widetilde{Y}^{n}-p_{\mathbf{X}}^{n}(%
\widetilde{Y}^{n})$, from the cash additivity of $p_{\mathbf{X}}^{n}$ we
deduce $p_{\mathbf{X}}^{n}(Y^{n})=a^{n}+p_{\mathbf{X}}^{n}(\widetilde{Y}%
^{n})-p_{\mathbf{X}}^{n}(\widetilde{Y}^{n})=a^{n}$ and $\sum_{n=1}^{N}Y_{%
\mathbf{X}}^{n}=\sum_{n=1}^{N}a^{n}+\sum_{n=1}^{N}\widetilde{Y}_{\mathbf{X}%
}^{n}-\sum_{n=1}^{N}p_{\mathbf{X}}^{n}(\widetilde{Y}_{\mathbf{X}%
}^{n})=\sum_{n=1}^{N}a^{n}$ and, as before, the above optimization problem
can be reformulated as 
\begin{eqnarray}
&&\sup_{\mathbf{a\in }\mathbb{R}^{N}}\left\{
\sum_{n=1}^{N}\sup_{Y^{n}}\left\{ \mathbb{E}\left[ u_{n}(X^{n}+Y^{n})\right]
\mid p_{\mathbf{X}}^{n}(Y^{n})\leq a^{n}\right\} \,\middle%
|\sum_{n=1}^{N}a^{n}=A\right\} ,  \label{Problem} \\
&&\sum_{n=1}^{N}Y_{\mathbf{X}}^{n}=A\,\,\,P-\text{a.s.}\,\,,
\label{ProblemY}
\end{eqnarray}%
where analogously to \eqref{PriceZero} we have that a solution $(Y_{\mathbf{X%
}}^{n},p_{\mathbf{X}}^{n},a_{\mathbf{X}}^{n})$ satisfies $\sum_{n=1}^{N}p_{%
\mathbf{X}}^{n}(Y_{\mathbf{X}}^{n})=A$, by \eqref{Problem} and %
\eqref{ProblemY}.

The two optimal values in (\ref{Problem2}) and (\ref{Problem}) coincide. We
see that while each agent is behaving optimally according to his
preferences, the budget constraint $p_{\mathbf{X}}^{n}(Y^{n})\leq a^{n}$ are
not a priori assigned, but are endogenously determined through an aggregate
optimization problem. The optimal value $a_{\mathbf{X}}^{n}$ determines the
optimal risk allocation of each agent. It will turn out that $a_{\mathbf{X}%
}^{n}=p_{\mathbf{X}}^{n}(Y_{\mathbf{X}}^{n})$. Obviously, the optimal value
in (\ref{Problem2}) is greater than (or equal to) the optimal value in (\ref%
{ProblemOptimum}), which can be economically translated into the statement
that \emph{allowing for exchanges also at terminal time increases the systemic performance}.

In addition to the condition in (\ref{ProblemY}), we introduce further
possible constraints on the optimal solution, by requiring that 
\begin{equation}
\mathbf{Y}_{\mathbf{X}}\in \mathcal{B},  \label{BB}
\end{equation}%
where $\mathcal{B}\subseteq \mathcal{C}_{\mathbb{R}}$.
\end{enumerate}

In the paper, see Section \ref{SecSORTE}, we formalize the above discussion
and show the existence of the solution $(Y_{\mathbf{X}}^{n}$, $p_{\mathbf{X}%
}^{n},a_{\mathbf{X}}^{n})$ to (\ref{Problem}), (\ref{ProblemY}) and (\ref{BB}%
), which we call Systemic Optimal Risk Transfer Equilibrium (SORTE). We show
that $p_{\mathbf{X}}^{n}$ can be chosen to be of the particular form $p_{%
\mathbf{X}}^{n}(\cdot ):=E_{Q_{\mathbf{X}}^{n}}[\cdot ]$, for a probability
vector $\mathbf{Q}_{\mathbf{X}}=(Q_{\mathbf{X}}^{1},...,Q_{\mathbf{X}}^{N})$%
. The crucial step, Theorem \ref{thmoptimumexists}, is the proof of the dual
representation and the existence of the optimizer of the associated problem (%
\ref{defpi}). The optimizer of the dual formulation provides the optimal
probability vector $\mathbf{Q}_{\mathbf{X}}$ that determines the functional $%
p_{\mathbf{X}}^{n}(\cdot ):=E_{Q_{\mathbf{X}}^{n}}[\cdot ]$. The
characteristics of the optimal $\mathbf{Q}_{\mathbf{X}}$ depend on the
feasible allocation set $\mathcal{B}$. When no constraints are enforced,
i.e., when $\mathcal{B}=\mathcal{C}_{\mathbb{R}}$, then all the components
of $\mathbf{Q}_{\mathbf{X}}$ turn out to be equal. Hence we find that the
implicit assumption of one single equilibrium pricing measure, made in the B%
\"{u}hlmann's framework, is in our theory a consequence of the particular
selection $\mathcal{B}=\mathcal{C}_{\mathbb{R}}$, but for general $\mathcal{B%
}$ this in not always the case.\
At this point it might be convenient for the reader to have at hand the example of the
exponential utility function that is described in Section \ref{expnew} and Section \ref{sectionexp}, where we obtain
an explicit formulation of the optimal solution $\mathbf{Y}_{\mathbf{X}}$,
of the equilibrium pricing measure $\mathbf{Q}_{\mathbf{X}}$ and of the
optimal vector $\mathbf{a}_{\mathbf{X}}$.

\begin{remark}
We emphasize that the existence of multiple equilibrium pricing measures $%
\mathbf{Q}_{\mathbf{X}}=(Q_{\mathbf{X}}^{1},...,Q_{\mathbf{X}}^{N})$ is a
natural consequence of the presence of the - non trivial - constraints set $%
\mathcal{B}$. Indeed, even in the B\"{u}hlmann setting, if we add
constraints, of a very simple nature, a single equilibrium pricing measure
might not exists any more. Consider the following extension of a B\"{u}%
hlmann risk exchange equilibrium.

Let $\mathcal{B}\subseteq \mathcal{C}_{{\mathbb{R}}}$ be fixed. We say that
a pair ($\widetilde{\mathbf{Y}}_{\mathbf{X}},Q_{\mathbf{X}})$ is a \textbf{%
constrained }risk exchange equilibrium if:

(a2) for each $n$, $\widetilde{Y}_{\mathbf{X}}^{n}$ maximizes: $\mathbb{E}%
\left[ u_{n}(x^{n}+X^{n}+\widetilde{Y}^{n}-E_{Q_{\mathbf{X}}}[\widetilde{Y}%
^{n}])\right] $ among all variables $\widetilde{Y}^{n}$;

(b2) $\widetilde{\mathbf{Y}}_{\mathbf{X}}\in \mathcal{B}$ and $\sum_{n=1}^{N}%
\widetilde{Y}_{\mathbf{X}}^{n}=0$ $P-$a.s. .

We show with the next example that such an equilibrium (with one single
probability $Q_{\mathbf{X}})$ \textbf{does not exist in general}. The
example we present is rather simple, yet instructive, since it shows that
the absence of the equilibrium arises not from technical assumptions, like
integrability conditions, but is rather a structural problem caused by the
presence of additional constraints. Here we provide the intuition for it.
Suppose that two isolated systems of agents have, under suitable
assumptions, their own (unconstrained) equilibria, and that such two
equilibria do not coincide. As shown in the next example, we might then
consider the two systems as one single larger system consisting of two
isolated clusters, expressing this latter property with the addition of
constraints. Then it is evident that an equilibrium (with a unique pricing
measure) cannot exist for such unified system.
\end{remark}

\begin{example}
In order to ignore all integrability issues, in this example we assume that $%
\Omega $ is a finite set, endowed with the sigma algebra of all its subsets
and the uniform probability measure. Consider $N=4,$ $u_{n}(x):=(1-e^{-%
\alpha _{n}x}),$ $\alpha _{n}>0,$ $\,n=1,\dots ,4$, \ and some vectors $%
\mathbf{x}\in {\mathbb{R}}^{4}$, and $\mathbf{X}\in (L^{\infty })^{4}$.
Moreover take 
\begin{equation*}
\mathcal{B}=\left\{ \mathbf{Y}\in \mathcal{C}_{{\mathbb{R}}}\mid
Y^{1}+Y^{2}=0\text{, }Y^{3}+Y^{4}=0\right\} .
\end{equation*}%
Thus $\mathbf{X}$ and $\mathcal{B}$ model a single system of $4$ agents
which can exchange the risk only in a restricted way (agent $1$ with agent $%
2 $, and agent $3$ with agent $4)$, so that in effect the system consists of
two isolated clusters of agents.\ Then a constrained risk exchange
equilibrium in general does not exists. By contradiction, suppose that ($%
\widetilde{\mathbf{Y}}_{\mathbf{X}},Q_{\mathbf{X}})$ is a constrained risk
exchange equilibrium. It is easy to verify that $([\widetilde{{Y}}_{\mathbf{X%
}}^{1},\widetilde{{Y}}_{\mathbf{X}}^{2}],Q_{\mathbf{X}})$ is a
(unconstrained) risk exchange equilibrium with respect to $[X^{1},X^{2}]$
and $[x^{1},x^{2}]$ (i.e. it satisfies (a) and (b) for $N=2$). Similarly, $([%
\widetilde{{Y}}_{\mathbf{X}}^{3},\widetilde{{Y}}_{\mathbf{X}}^{4}],Q_{%
\mathbf{X}})$ is a (unconstrained) risk exchange equilibrium with respect to 
$[X^{3},X^{4}]$ and $[x^{3},x^{4}]$. This implies using equation (2) in B\"{u}hlmann \cite%
{Buhlmann} that 
\begin{equation*}
\frac{\exp {\left( \eta (X^{1}+X^{2})\right) }}{\mathbb{E}\left[ \exp {%
\left( \eta (X^{1}+X^{2})\right) }\right] }=\frac{\mathrm{d}Q_{\mathbf{X}}}{%
\mathrm{d}P}=\frac{\exp {\left( \theta (X^{3}+X^{4})\right) }}{\mathbb{E}%
\left[ \exp {\left( \theta (X^{3}+X^{4})\right) }\right] },\,\,\,\,\,\eta =%
\frac{1}{\alpha _{1}}+\frac{1}{\alpha _{2}},\text{ }\theta =\frac{1}{\alpha
_{3}}+\frac{1}{\alpha _{4}},
\end{equation*}%
which clearly gives a contradiction, since $\mathbf{X}$ is arbitrary.

Observe, however, that in this example a constrained equilibrium exists 
\textbf{if we allow for possibly different pricing measures}, namely if we
may replace the measure $Q_{\mathbf{X}}$ with a vector $\mathbf{Q}_{\mathbf{X%
}}$. This would amount to replacing (a2) with (a3) below, namely to require
that:

(a3) for each $n$, $\widetilde{Y}_{\mathbf{X}}^{n}$ maximizes: $\mathbb{E}%
\left[ u_{n}(x^{n}+X^{n}+\widetilde{Y}^{n}-E_{Q_{\mathbf{X}}^{n}}[\widetilde{%
Y}^{n}])\right] $ among all variables $\widetilde{Y}^{n}$;

(b2) $\widetilde{\mathbf{Y}}_{\mathbf{X}}\in \mathcal{B}$ and $\sum_{n=1}^{N}%
\widetilde{Y}_{\mathbf{X}}^{n}=0$ $P-$a.s. .

Then such an equilibrium exists. Indeed, by the results in B\"{u}hlmann \cite%
{Buhlmann}, we can guarantee the existence of the risk exchange equilibrium $%
([\widetilde{{Y}}_{\mathbf{X}}^{1},\widetilde{{Y}}_{\mathbf{X}}^{2}],Q_{%
\mathbf{X}}^{12})$ with respect to $[X^{1},X^{2}]$ and $[x^{1},x^{2}]$, and
the risk exchange equilibrium $([\widetilde{{Y}}_{\mathbf{X}}^{3},\widetilde{%
{Y}}_{\mathbf{X}}^{4}],Q_{\mathbf{X}}^{34})$ with respect to $[X^{3},X^{4}]$
and $[x^{3},x^{4}]$. Then \newline
$([\widetilde{{Y}}_{\mathbf{X}}^{1},\widetilde{{Y}}_{\mathbf{X}}^{2},%
\widetilde{{Y}}_{\mathbf{X}}^{3},\widetilde{{Y}}_{\mathbf{X}}^{4}],[Q_{%
\mathbf{X}}^{12},Q_{\mathbf{X}}^{12},Q_{\mathbf{X}}^{34},Q_{\mathbf{X}%
}^{34}])$ satisfies (a3) and (b2). \textbf{The conclusion is that, even in
the B\"{u}hlmann case, the presence of constraints implies multiple
equilibrium pricing measures}.

From the mathematical point of view, this fact is very easy to understand in
our setup, described in Assumption \ref{A00}. More constraints implies
a smaller set $\mathcal{B}_{0}$ of feasible vectors $\widetilde{\mathbf{Y}}%
\in \mathcal{B}$\ such that $\sum_{n=1}^{N}\widetilde{Y}_{\mathbf{X}}^{n}=0$
and this in turn implies a larger polar set of $\mathcal{B}_{0}$ (which we
will denote with $\mathcal{Q}$, see the definition in Section \ref%
{secproof} item 4. The equilibrium exists only if we are allowed to pick
the pricing vector $\mathbf{Q}_{\mathbf{X}}$ in this larger set $\mathcal{Q}$%
, but the elements in $\mathcal{Q}$ don't need to have all equal components.
Economically, multiple pricing measures may arise because the risk exchange
mechanism may be restricted to clusters of agents, as in this example, and
agents from different clusters may well adopt a different equilibrium
pricing measure. For further details on clustering, see the Examples \ref%
{exCh} and \ref{exmixtures}.
\end{example}

\bigskip

B\"{u}hlmann's equilibrium ($\mathbf{Y}_{\mathbf{X}}$) satisfies two
relevant properties: \emph{Pareto optimality} (there are no feasible
allocation $\mathbf{Y}$ such that all agents are equal or better off -
compared with $\mathbf{Y}_{\mathbf{X}}$ - and at least one of them is better
off) and \emph{Individual Rationality} (each agent is better off with $Y_{%
\mathbf{X}}^{n}$ than without it). Any feasible allocation satisfying these
two properties is called an \emph{optimal risk sharing rule}, see Barrieu
and El Karoui \cite{BE05} or Jouini et al. \cite{JST07}.

We show that a SORTE is unique (once the class of pricing functionals is
restricted to those in the form $p^{n}(\cdot )=E_{Q^{n}}[\cdot ]$). We also
prove Pareto optimality, see the Definition \ref{defparetos} and the exact
formulation in Theorem \ref{thmsorteuniqueA}.

However, a SORTE lacks Individual Rationality. This is shown in the toy
example of Section \ref{Ex}, but it is also evident from the expression in
equation (\ref{Problem}). As already mentioned, each agent is performing
rationally, maximizing her expected utility, but under a budget constraint $%
p_{\mathbf{X}}^{n}(Y^{n})\leq a_{\mathbf{X}}^{n}$\ that is determined
globally via an additional systemic maximization problem ($\sup_{\mathbf{%
a\in }\mathbb{R}^{N}}\{...\mid \sum_{n=1}^{N}a^{n}=A\}$) that assigns
priority to the systemic performance, rather than to each individual agent.
In the SORTE we replace individual rationality with such a \emph{systemic
induced individual rationality}, which also shows the difference between the
concepts of SORTE and of an optimal risk sharing rule. We also point out
that the participation in the risk sharing mechanism may be appropriately
mitigated or enforced by the use of adequate sets $\mathcal{B}$, see e.g.
Example \ref{exmixtures} for risk sharing restricted to subsystems. From the
technical point of view, we will not rely on any of the methods and results
related to the notion of inf-convolution, which is a common tool to prove
existence of optimal risk sharing rules (see for example \cite{BE05} or \cite%
{JST07}) in the case of monetary utility functions, as we do not require the
utility functions to be cash additive. Our proofs are based on the dual
approach to (systemic) utility maximization. This is summarized in Section %
\ref{SecScheme}. Furthermore, the exponential case is treated in detail in
Section \ref{sectionexp}.

\begin{remark}
\label{remgamma} As customary in the literature on general equilibrium and
risk sharing, we could have considered, in place of \eqref{Problem} and %
\eqref{ProblemY}, the more general problem 
\begin{eqnarray}
&&\sup_{\mathbf{a\in }\mathbb{R}^{N}}\left\{
\sum_{n=1}^{N}\sup_{Y^{n}}\left\{ \mathbb{E}\left[ \gamma
_{n}u_{n}(X^{n}+Y^{n})\right] \mid p_{\mathbf{X}}^{n}(Y^{n})\leq
a^{n}\right\} \,\middle|\sum_{n=1}^{N}a^{n}=A\right\} ,  \label{Problemgamma}
\\
&&\sum_{n=1}^{N}Y_{\mathbf{X}}^{n}=A\,\,\,P-\text{a.s.}\,\,,
\label{ProblemYgamma}
\end{eqnarray}%
where the positive weights $\gamma =(\gamma _{1},...,\gamma _{N})\in \mathbb{%
R}^{N}$ could have been selected exogenously, say by a social planner. In
such more general problems, equilibria will generally depend on the selected
weights. However, in this paper we are focused on existence, uniqueness and
Pareto optimality of the equilibrium and for this analysis we may restrict,
without loss of generality, our attention to the utilitarian choice $\gamma
_{1}=...=\gamma _{N}=1$, as we now explain. It is easy to check that given $%
u_{1},\dots ,u_{N}$ satisfying our assumptions (namely Assumption \ref{A00}%
.(a)), the associated functions $x\mapsto u_{n}^{\gamma }(x):=\gamma
_{n}u_{n}(x),\,n=1,\dots ,N$ will satisfy the same Assumption \ref{A00}.(a)
and so \eqref{Problemgamma} can be written as%
\begin{equation}
\sup_{\mathbf{a\in }\mathbb{R}^{N}}\left\{ \sum_{n=1}^{N}\sup_{Y^{n}}\left\{ 
\mathbb{E}\left[ u_{n}^{\gamma }(X^{n}+Y^{n})\right] \mid p_{\mathbf{X}%
}^{n}(Y^{n})\leq a^{n}\right\} \,\middle|\sum_{n=1}^{N}a^{n}=A\right\} ,
\label{Problemgamma1}
\end{equation}%
Thus, technically speaking, the study of the existence, uniqueness and
Pareto optimality of the equilibrium in a non-utilitarian setup ($\gamma
\neq 1$) boils down to the one in \eqref{Problem} and \eqref{ProblemY}. Of
course it could be of interest to study the dependence of the optimal
solution from the vector $\gamma $ and to analyze the stability properties
of the equilibrium with respect to the utility functions. In Section \ref%
{secdeponweight} we address this problem for exponential utility functions,
but the general case is left for future investigation.
\end{remark}

\textbf{Review of literature:} This paper originates from the systemic risk
approach developed in Biagini et al. \cite{bffm0} and \cite{bffm}. In \cite{bffm} 
 the main focus was the analysis of the systemic risk measure  
\begin{equation}
\rho (\mathbf{X}):=\inf_{\mathbf{Y}\in \mathcal{B}\subset \mathcal{C}_{%
\mathbb{R}}}\left\{ \sum_{n=1}^{N}Y^{n}\mid \mathbb{E}\left[
\sum_{n=1}^{N}u_{n}(X^{n}+Y^{n})\right] \geq B\right\} ,\text{ }B\in \mathbb{%
R},  \label{99}
\end{equation}%
which computes systemic risk as the minimal capital $\sum_{n=1}^{N}Y^{n}\in 
\mathbb{R}$ that secures the aggregated system $(\mathbb{E}\left[
\sum_{n=1}^{N}u_{n}(X^{n}+Y^{n})\right] \geq B)$ by injecting the random
allocation $Y^{n}$ into the single institution $X^{n}$. 

The notion of a SORTE is inspired by the following utility maximization
problem, associated to the risk minimization problem (\ref{99}),    
\begin{equation}
\sup_{\mathbf{Y}\in \mathcal{B}\subset \mathcal{C}_{\mathbb{R}}}\left\{ 
\mathbb{E}\left[ \sum_{n=1}^{N}u_{n}(X^{n}+Y^{n})\right] \mid
\sum_{n=1}^{N}Y^{n}\leq A\right\} ,\text{ }A\in \mathbb{R},  \label{piAbis}
\end{equation}%
that was also introduced in \cite{bffm}. Related papers on systemic risk
measures are Feinstein et al. \cite{FeinsteinRudloffWeber}, Acharya et al. 
\cite{pedersen}, Armenti et al. \cite{Drapeau}, Chen et al. \cite{chen},
Kromer et al. \cite{Kromer}. For an exhaustive overview on the literature on
systemic risk, see Hurd \cite{Hurd} and Fouque and Langsam \cite{JP_Langsam}.

For a review on Arrow-Debreu Equilibrium (see Debreu \cite{Debreu}; Mas
Colell and Zame \cite{MasColell} for the infinite dimensional case) we refer
to Section 3.6 of F\"{o}llmer and Schied \cite{Follmer}, which is close to
our setup. In the spirit of the Arrow-Debreu Equilibrium, B\"{u}hlmann \cite%
{Buhlmann1} and \cite{Buhlmann} proved the existence of risk exchange
equilibria in a pure exchange economy. Such risk sharing equilibria had been
studied in different forms starting from the seminal papers of Borch \cite%
{Borch}, where Pareto-optimal allocations were proved to be comonotonic for
concave utility functions, and 
B\"{u}hlmann and Jewell \cite{BJ79}. The differences with B\"{u}hlmann's
setup and our approach have been highlighted before in detail. \newline
In Barrieu and El Karoui \cite{BE05} inf-convolution of convex risk measures
has been introduced as a fundamental tool for studying risk sharing.
Existence of optimal risk sharing for law-determined monetary utility
functions is obtained in Jouini et al. \cite{JST07} and then generalized to
the case of non-monotone risk measures by Acciaio \cite{Acciaio} and Filipovi%
\'{c} and Svindland \cite{Filipovic_Svindland}, to multivariate risks by
Carlier and Dana \cite{Carlier1} and Carlier et al. \cite{Carlier2}, to
cash-subadditive and quasi-convex measures by Mastrogiacomo and Rosazza
Gianin \cite{MRG14}. Further works on risk sharing are also Dana and Le Van 
\cite{Dana_LeVan}, Heath and Ku \cite{Heath_Ku}, Tsanakas \cite{Tsanakas},
Weber \cite{Weber17}. Risk sharing problems with quantile-based risk
measures are studied in Embrechts et al. \cite{ELW18} by explicit
construction, and in \cite{ELW182} for heterogeneous beliefs. In Filipovi%
\'{c} and Kupper \cite{Kupper} Capital and Risk Transfer is modelled as
(deterministically determined) redistribution of capital and risk by means
of a finite set of non deterministic financial instruments. Existence issues
are studied and related concepts of equilibrium are introduced. Recent
further extensions have been obtained in Liebrich and Svindland \cite{LS18}.



\section{Notations}

Let $(\Omega ,\mathcal{F},P\mathbf{)}$ be a probability space and consider
the following set of probability vectors on $(\Omega ,\mathcal{F})$ 
\begin{equation*}
\mathcal{P}^{N}:=\left\{ \mathbf{Q=(}Q^{1},...,Q^{N}\mathbf{)}\mid \text{
such that }Q^{j}\ll P\text{ for all }j=1,...,N\right\} .
\end{equation*}

For a vector of probability measures $\mathbf{Q}$ we write $\mathbf{Q}\ll P$
to denote $Q^{1}{\ll P},\dots ,Q^{N}\ll P$. Similarly for $\mathbf{Q}\sim P $%
. Set $L^{0}(\Omega ,\mathcal{F},P;\mathbb{R}^{N}\mathbf{)}=(L^{0}(P))^{N}$.
For $Q\in \mathcal{P}^{1}$ let $L^{1}(Q\mathbf{):=}L^{1}(\Omega ,\mathcal{F}%
,Q;\mathbb{R}\mathbf{)}$ be the vector space of $Q-$ integrable random
variables and $L^\infty(Q):=L^{\infty}(\Omega ,\mathcal{F},Q;\mathbb{R}%
\mathbf{)}$ be the space of $Q-$ essentially bounded random variables. Set $%
L^1_+(Q)=\left\{Z\in L^1(Q)\,\,\middle|Z\geq 0\,\,Q-\text{a.s.}\right\}$ and 
$L^\infty_+(Q)=\left\{Z\in L^\infty(Q)\,\,\middle|\,\,Z\geq 0\,\,Q-\text{a.s.%
}\right\}$. For $\mathbf{Q}\in \mathcal{P}^{N}$ let%
\begin{equation*}
L^{1}(\mathbf{Q):=}L^1(Q^{1}\mathbf{)\times ...\times }L^1(Q^{N})\,,\,\,\,\,%
\,\,\,\,L^{1}_+(\mathbf{Q):=}L^1_+(Q^{1}\mathbf{)\times ...\times }%
L^1_+(Q^{N})\,,
\end{equation*}

\begin{equation*}
L^\infty(\mathbf{Q}):=L^\infty(Q^1)\times\dots\times
L^\infty(Q^N)\,,\,\,\,\,\,\,\,\,L^\infty_+(\mathbf{Q}):=L^\infty_+(Q^1)%
\times\dots\times L^\infty_+(Q^N).
\end{equation*}

For each $j=1,...,N$ consider a vector subspace $\mathcal{L}^{j}$ with ${%
\mathbb{R}}\subseteq \mathcal{L}^{j}\subseteq L^{0}(\Omega ,\mathcal{F},P;%
\mathbb{R}\mathbf{)}$ and set 
\begin{equation*}
\mathcal{L}\mathbf{:=}\mathcal{L}^{1}\times ...\times \mathcal{L}^{N}\mathbf{%
\subseteq }(L^{0}(P))^{N}.
\end{equation*}%
Consider now a subset $\mathscr{\probq}\subseteq \mathcal{P}^{N}$ and assume
that the pair $(\mathcal{L},\mathscr{Q})$ satisfies that for every $\mathbf{Q%
}\in \mathscr{Q}$ 
\begin{equation*}
\mathcal{L}\subseteq L^{1}(\mathbf{Q}).
\end{equation*}%
One could take as $\mathcal{L}^{j}$, for example, $L^{\infty }$ or some
Orlicz space. Our optimization problems will be defined on the vector space $%
\mathcal{L}$ to be specified later.

For each $n=1,...,N$, let $u_{n}:\mathbb{R\rightarrow R}$ be concave and
strictly increasing. Fix $\mathbf{X=(}X^{1},...,X^{N}\mathbf{)\in }\mathcal{L%
}$.

For $(\mathbf{Q},\,\mathbf{a,}\,A)\in \mathscr{Q}\mathbf{\times }\mathbb{R}%
^{N}\mathbf{\times }\mathbb{R}$ define 
\begin{eqnarray}
U_{n}^{Q^{n}}(a^{n}) &:&=\sup \left\{ \mathbb{E}\left[ u_{n}(X^{n}+Y)\right]
\mid Y\in \mathcal{L}^{n}\text{, }E_{Q^{n}}[Y]\leq a^{n}\right\} \,,
\label{UQ} \\
S^{\mathbf{Q}}(A) &:&=\sup \left\{ \sum_{n=1}^{N}U_{n}^{Q^{n}}(a^{n})\mid 
\mathbf{a\in }\mathbb{R}^{N}\text{ s.t. }\sum_{n=1}^{N}a^{n}\leq A\right\}
\,,  \label{SQ} \\
\Pi ^{\mathbf{Q}}(A) &:&=\sup \left\{ \mathbb{E}\left[
\sum_{n=1}^{N}u_{n}(X^{n}+Y^{n})\right] \mid \mathbf{Y}\in \mathcal{L}%
,\,\sum_{n=1}^{N}E_{Q^{n}}[Y^{n}]\leq A\right\} \,.  \label{PiQ}
\end{eqnarray}%
Obviously, such quantities depend also on $\mathbf{X}$, but as $\mathbf{X}$
will be kept fixed throughout most of the analysis, we may avoid to
explicitly specify this dependence in the notations. As $u_{n}$ is
increasing we can replace, in the definitions of $U_{n}^{Q^{n}}(a^{n}),$ $S^{%
\mathbf{Q}}(A)$ and $\Pi ^{\mathbf{Q}}(A)$ the inequality in the budget
constraint with an equality.

When a vector $\mathbf{Q}\in \mathscr{Q}$ is assigned, we can consider two
problems. First, for each $n$, $U_{n}^{Q^{n}}(a^{n})$ is the optimal value
of the classical one dimensional expected utility maximization problem with
random endowment $X^{n}$ under the budget constraint $E_{Q^{n}}[Y]\leq a^{n}$%
, determined by the real number $a^{n}$ and the valuation operator $%
E_{Q^{n}}[\cdot ]$ associated to $Q^{n}$. Second, if we interpret the
quantity $\sum_{n=1}^{N}u_{n}(\cdot )$ as the aggregated utility of the
system, then $\Pi ^{\mathbf{Q}}(A)$ is the maximal expected utility of the
whole system $\mathbf{X,}$ among all $\mathbf{Y}\in {\mathcal{L}}$
satisfying the overall budget constraint $\sum_{n=1}^{N}E_{Q^{n}}\left[ Y^{n}%
\right] \leq A$. Notice that in these problems the vector $\mathbf{Y}$ is
not required to belong to $\mathcal{C}_{\mathbb{R}}$, but only to the vector
space $\mathcal{L}$. We will show in Lemma \ref{lemmalinkpiS} the quite
obvious equality $S^{\mathbf{Q}}(A)=\Pi ^{\mathbf{Q}}(A).$

\section{On several notions of Equilibrium}

\label{seconseveralnotions}

\subsection{Pareto Allocation}

\begin{definition}
\label{defparetos}Given a set of feasible allocations $\mathscr{V}\subseteq 
\mathcal{L}$ and a vector $\mathbf{X}\in \mathcal{L}$, $\widehat{\mathbf{Y}}%
\in \mathscr{V}$ is a \emph{Pareto allocation for} $\mathscr{V}$ if 
\begin{equation}
\mathbf{Y}\in \mathscr{V}\text{\quad and\quad }\mathbb{E}\left[
u_{n}(X^{n}+Y^{n})\right] \geq \mathbb{E}\left[ u_{n}(X^{n}+\widehat{Y}^{n})%
\right] \text{ for all }n  \label{eqpareto}
\end{equation}%
imply $\mathbb{E}\left[ u_{n}(X^{n}+Y^{n})\right] =\mathbb{E}\left[
u_{n}(X^{n}+\widehat{Y}^{n})\right] $ for all $n$.
\end{definition}

In general Pareto allocations are not unique and, not surprisingly, the
following version of the First Welfare Theorem holds true. Define the
optimization problem 
\begin{equation}
\Pi (\mathscr{V}):=\sup_{Y\in \mathscr{V}}\sum_{n=1}^{N}\mathbb{E}\left[
u_{n}(X^{n}+Y^{n})\right] .  \label{piV}
\end{equation}

\begin{proposition}
\label{ABQ} Whenever $\widehat{\mathbf{Y}}\in \mathscr{V}$ is the unique
optimal solution of $\Pi (\mathscr{V})$, then it is a Pareto allocation for $%
\mathscr{V}$.
\end{proposition}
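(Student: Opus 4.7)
My plan is to argue by contradiction in one short step. Assume that $\widehat{\mathbf{Y}}$ is the unique optimizer of $\Pi(\mathscr{V})$ but fails to be a Pareto allocation. Then, negating the Pareto condition in Definition \ref{defparetos}, there exists some $\mathbf{Y}\in\mathscr{V}$ for which
\begin{equation*}
\mathbb{E}\left[u_n(X^n+Y^n)\right]\geq \mathbb{E}\left[u_n(X^n+\widehat{Y}^n)\right]\quad\text{for all }n=1,\dots,N,
\end{equation*}
and at least one of these inequalities is strict. Summing over $n$ preserves the strict inequality and yields
\begin{equation*}
\sum_{n=1}^N \mathbb{E}\left[u_n(X^n+Y^n)\right] > \sum_{n=1}^N \mathbb{E}\left[u_n(X^n+\widehat{Y}^n)\right] = \Pi(\mathscr{V}),
\end{equation*}
contradicting the optimality of $\widehat{\mathbf{Y}}$ in \eqref{piV} since $\mathbf{Y}\in\mathscr{V}$.

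As a slight variant that genuinely uses the uniqueness hypothesis, one can instead observe that the $N$ weak inequalities sum to $\sum_n \mathbb{E}[u_n(X^n+Y^n)]\geq \Pi(\mathscr{V})$, so $\mathbf{Y}$ is also an optimizer; by uniqueness $\mathbf{Y}=\widehat{\mathbf{Y}}$, which forces all the inequalities to be equalities. Either route works.

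There is no real obstacle here: the proposition is a direct consequence of the fact that the aggregate objective in $\Pi(\mathscr{V})$ is simply the (unweighted) sum of the individual utilities, so strict improvement of any one coordinate — together with no worsening of the others — is automatically visible in the aggregate. I would note parenthetically that uniqueness is in fact stronger than necessary (mere optimality of $\widehat{\mathbf{Y}}$ suffices for the contradiction argument), but it is convenient to keep the hypothesis as stated since the uniqueness version of the SORTE will be invoked later (cf.\ Theorem \ref{thmsorteuniqueA}).
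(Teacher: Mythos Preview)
Your proof is correct. In fact, your ``slight variant'' is precisely the paper's own argument: one sums the weak inequalities in \eqref{eqpareto} to see that $\mathbf{Y}$ is also an optimizer of $\Pi(\mathscr{V})$, and then uniqueness forces $\mathbf{Y}=\widehat{\mathbf{Y}}$, hence equality for all $n$. Your first (contradiction) route is an equally valid and marginally shorter alternative, and your parenthetical remark that mere optimality would suffice for that route is accurate.
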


\begin{proof}
Let $\widehat{\mathbf{Y}}$ be optimal for $\Pi (\mathscr{V})$, so that $%
\mathbb{E}\left[ \sum_{n=1}^{N}u_{n}(X^{n}+\widehat{Y}^{n})\right] =\Pi (%
\mathscr{V}).$ Suppose that there exists $\mathbf{Y}$ such that (\ref%
{eqpareto}) holds true. As $\mathbf{Y}\in \mathscr{V}$ we have:%
\begin{equation*}
\mathbb{E}\left[ \sum_{n=1}^{N}u_{n}(X^{n}+\widehat{Y}^{n})\right] =\Pi (%
\mathscr{V})\geq \mathbb{E}\left[ \sum_{n=1}^{N}u_{n}(X^{n}+Y^{n})\right]
\geq \mathbb{E}\left[ \sum_{n=1}^{N}u_{n}(X^{n}+\widehat{Y}^{n})\right] ,
\end{equation*}%
by (\ref{eqpareto}). Hence also $\mathbf{Y}$ is an optimal solution to $\Pi (%
\mathscr{V})$. Uniqueness of the optimal solution implies $\mathbf{Y}=%
\widehat{\mathbf{Y}}$.
\end{proof}

\subsection{Systemic utility maximization\label{secMax}}

The next definition is the utility maximization problem, in the case of a
system of $N$ agents.

\begin{definition}
\label{def}Fix $\mathbf{Q}\in \mathscr{Q}$. The pair $(\mathbf{Y}_{\mathbf{X}%
},\mathbf{a}_{\mathbf{X}})\in \mathcal{L}\mathbf{\times }\mathbb{R}^{N}$ is
a $\mathbf{Q}-$\textbf{Optimal Allocation} with budget $A\in \mathbb{R}$ if

1) for each $n$, $Y_{\mathbf{X}}^{n}$ is optimal for $U_{n}^{Q^{n}}(a_{%
\mathbf{X}}^{n}),$

2) $\mathbf{a}_{\mathbf{X}}$ is optimal for $S^{\mathbf{Q}}(A),$

3) $\mathbf{Y}_{\mathbf{X}}\in \mathcal{L}$.
\end{definition}

Note that in the above definition the vector $\mathbf{Q}\in \mathscr{Q}$ is
exogenously assigned. Given a total budget $A\in \mathbb{R}$, the vector $%
\mathbf{a}_{\mathbf{X}}\in \mathbb{R}^{N}$ maximizes the systemic utility $%
\sum_{n=1}^{N}U_{n}^{Q^{n}}(a^{n})$ among all feasible $\mathbf{a\in }%
\mathbb{R}^{N}$ ( $\sum_{n=1}^{N}a^{n}\leq A$) and $Y_{\mathbf{X}}^{n}$
maximizes the single agent expected utility $\mathbb{E}\left[ u_{n}(X^{n}+Y)%
\right] $ among all feasible allocations $Y\in \mathcal{L}^{n}$ s.t. $%
E_{Q^{n}}[Y]\leq a_{\mathbf{X}}^{n}$. Since $\mathbf{Q}\in \mathscr{Q}$ is
given, the budget constraint $E_{Q^{n}}[Y]\leq a_{\mathbf{X}}^{n}$ is well
defined for all $\mathbf{Y}\in \mathcal{L}$ and we do not need additional
conditions of the form $\mathbf{Y}\in \mathcal{C}_{\mathbb{R}}$. A
generalization of the classical single agent utility maximization yields the
following existence result.

\begin{proposition}
\label{propUtmax}Under Assumption \ref{A00} (a) select $\mathscr{Q}=\{%
\mathbf{Q}\}$ for some $\mathbf{Q}\in \mathcal{Q}_{v}$ (see \eqref{defqv})
with $\mathbf{Q}\sim P$. Set $\mathcal{L}=L^{1}(Q^{1})\times \dots \times
L^{1}(Q^{N})$ and let $\mathbf{X} \in M^{\Phi}$ (see \eqref{orly-product}).
Then a $\mathbf{Q}-$Optimal Allocation exists.
\end{proposition}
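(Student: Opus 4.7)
The plan is to decouple the problem: first solve each scalar utility maximization $U_n^{Q^n}(a^n)$ for fixed $a^n$, then optimize in the finite-dimensional variable $\mathbf{a} \in \mathbb{R}^N$ over the budget hyperplane, and finally glue the pieces together.

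First I would establish, for each $n$ and each fixed $a^n \in \mathbb{R}$, existence of an optimizer $\widehat{Y}^n(a^n) \in L^1(Q^n)$ for $U_n^{Q^n}(a^n)$. Because $\mathbf{Q} \sim P$, the budget set $\{Y \in L^1(Q^n) : E_{Q^n}[Y] \le a^n\}$ is convex and closed in $\sigma(L^1(Q^n), L^\infty(Q^n))$. The hypothesis $X^n \in M^\Phi$ together with $\mathbf{Q} \in \mathcal{Q}_v$ (i.e.\ $dQ^n/dP$ lies in the Orlicz polar class) provides the integrable majorant needed to run a Komlos/Fatou argument: the concave functional $Y \mapsto \mathbb{E}[u_n(X^n+Y)]$ is upper semi-continuous on the budget set and coercive in the direction $Y \to -\infty$, so the supremum is attained; strict concavity (or an arbitrary selection otherwise) gives a well-defined optimizer.

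Next I would study the scalar value function $f_n(a) := U_n^{Q^n}(a)$. From concavity of $u_n$ and linearity of the constraint, $f_n$ is concave; from strict monotonicity of $u_n$, $f_n$ is strictly increasing; and the existence argument above shows $f_n$ is finite on $\mathbb{R}$, hence automatically continuous. Then the finite-dimensional problem
\begin{equation*}
S^{\mathbf{Q}}(A) \;=\; \sup\left\{\sum_{n=1}^{N} f_n(a^n) \,\middle|\, \mathbf{a}\in\mathbb{R}^N,\ \sum_{n=1}^{N} a^n \le A\right\}
\end{equation*}
has a concave, upper semi-continuous objective on the hyperplane $\{\sum a^n = A\}$, where the sup is attained in view of monotonicity. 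Coercivity on this hyperplane follows because, along any sequence $\mathbf{a}_k$ with $\|\mathbf{a}_k\|\to\infty$, at least one coordinate tends to $-\infty$ (forcing $f_n(a^n_k) \to -\infty$ under Assumption \ref{A00}(a), which ensures $u_n(-\infty)=-\infty$) while the other coordinates are bounded above by the constraint. Standard closed-proper-concave attainment in $\mathbb{R}^N$ then yields some maximizer $\mathbf{a}_{\mathbf{X}}$.

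Finally I would set $Y_{\mathbf{X}}^n := \widehat{Y}^n(a_{\mathbf{X}}^n)$ and verify the three items of Definition \ref{def} by construction: (1) each $Y_{\mathbf{X}}^n$ optimizes $U_n^{Q^n}(a_{\mathbf{X}}^n)$; (2) $\mathbf{a}_{\mathbf{X}}$ optimizes $S^{\mathbf{Q}}(A)$; (3) $\mathbf{Y}_{\mathbf{X}} \in \mathcal{L}$ by the choice of $\mathcal{L}$. The main obstacle I expect is the first step, i.e.\ the Fatou/Komlos step for the upper semi-continuity of $Y \mapsto \mathbb{E}[u_n(X^n+Y)]$ on $L^1(Q^n)$, where the growth of $u_n$ on the positive side has to be controlled by the Orlicz pair $(M^\Phi, \mathcal{Q}_v)$; the finite-dimensional attainment in the second step is routine once the one-dimensional step and the coercivity remark are secured.
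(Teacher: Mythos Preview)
Your decoupling strategy (solve each scalar problem $U_{n}^{Q^{n}}(a^{n})$ first, then optimize over $\mathbf{a}$ on the hyperplane) is exactly the right structure, and the paper itself only refers to Section~4.2 of \cite{bffm} for the details. The difference lies in how the scalar step is executed.

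The paper's route (here and throughout Section~\ref{secproof}) is \emph{dual}, not primal: for fixed $Q^{n}\in\mathcal{Q}_{v}$ one shows via Fenchel duality that
\[
U_{n}^{Q^{n}}(a^{n})=\min_{\lambda>0}\Big(\lambda\big(E_{Q^{n}}[X^{n}]+a^{n}\big)+\mathbb{E}\big[v_{n}\big(\lambda\tfrac{dQ^{n}}{dP}\big)\big]\Big),
\]
and then the candidate $\widehat{Y}^{n}(a^{n}):=-X^{n}-v_{n}^{\prime}\big(\lambda_{n}\tfrac{dQ^{n}}{dP}\big)$, with $\lambda_{n}$ chosen from the first-order condition, is checked directly to be feasible and to attain the value (cf.\ equation~\eqref{YY} and Step~4 of Theorem~\ref{thmoptimumexists}). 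The integrability $\widehat{Y}^{n}(a^{n})\in L^{1}(Q^{n})$ and the existence of the right $\lambda_{n}$ come for free from Lemma~\ref{lemmaintegra}, which is precisely where the RAE condition~\eqref{RAE} and $\mathbf{Q}\in\mathcal{Q}_{v}$ enter. This bypasses any compactness argument.

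Your primal sketch, by contrast, has a genuine gap in Step~1. The budget set $\{Y\in L^{1}(Q^{n}):E_{Q^{n}}[Y]\le a^{n}\}$ is a closed half-space, hence not weakly compact, so upper semicontinuity plus ``coercivity in the direction $Y\to-\infty$'' (whatever that means for $Y\in L^{1}$) does not by itself yield attainment. To make Koml\'os work you would need (i) $L^{1}(Q^{n})$-boundedness of a maximizing sequence $(Y_{k})_{k}$, and then (ii) a reverse-Fatou inequality along the Ces\`aro limit $\widehat{Y}$. Part~(i) is obtainable but requires separate control of $E_{Q^{n}}[Y_{k}^{-}]$ from the coercivity of $u_{n}$ at $-\infty$ and of $E_{Q^{n}}[Y_{k}^{+}]$ from the budget. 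Part~(ii) requires uniform integrability of $\big(u_{n}(X^{n}+Y_{k})\big)^{+}$ under $P$, and the pointwise Fenchel bound $u_{n}(X^{n}+Y_{k})\le\tfrac{dQ^{n}}{dP}(X^{n}+Y_{k})+v_{n}\big(\tfrac{dQ^{n}}{dP}\big)$ does \emph{not} give a fixed integrable majorant (the right-hand side still depends on $Y_{k}$). You correctly flag this as the main obstacle, but you do not resolve it; the dual construction is precisely what makes the proof short.

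A smaller issue in Step~2: the claim ``the other coordinates are bounded above by the constraint'' is false for the $a^{n}$ themselves, and $f_{n}(a^{n})$ need not be bounded above when $u_{n}(+\infty)=+\infty$. What saves coercivity on the hyperplane is the asymptotic rate comparison: $f_{n}(a)/a\to 0$ as $a\to+\infty$ (from the Fenchel upper bound with $\lambda\downarrow 0$) versus $f_{m}(a)/a\to+\infty$ as $a\to-\infty$ (from the same bound with $\lambda\uparrow\infty$, using~\eqref{RAE}), so the diverging negative coordinate dominates. Once this is said, your finite-dimensional attainment goes through.
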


\begin{proof}
The proof can be obtained with the same arguments employed in Section 4.2 \cite{bffm}. \end{proof}

Let $(\mathbf{Y}_{\mathbf{X}},\mathbf{a}_{\mathbf{X}})\in \mathcal{L}\mathbf{%
\times }\mathbb{R}^{N}$ be a $\mathbf{Q}-$Optimal Allocation. Due to Lemma %
\ref{lemmalinkpiS}, $\Pi ^{\mathbf{Q}}(A)=S^{\mathbf{Q}}(A)$ and 
\begin{eqnarray*}
\Pi ^{\mathbf{Q}}(A) &=&S^{\mathbf{Q}}(A)=\sup_{\mathbf{a}\in \mathbb{R}^{N}%
\text{, }\sum_{n=1}^{N}a^{n}=A}\sum_{n=1}^{N}\sup_{Y^{n}\in \mathcal{L}%
^{n}}\left\{ \mathbb{E}\left[ u_{n}(X^{n}+Y^{n})\right] \mid
E_{Q^{n}}[Y^{n}]=a^{n}\right\} \\
&=&\sum_{n=1}^{N}\sup_{Y^{n}\in \mathcal{L}^{n}}\left\{ \mathbb{E}\left[
u_{n}(X^{n}+Y^{n})\right] \mid E_{Q^{n}}[Y^{n}]=a_{X}^{n}\right\} ,
\end{eqnarray*}%
where we replaced the inequalities with equalities in the budget constraints, as $u_{n}$ are monotone.
Hence the systemic utility maximization problem $\Pi ^{\mathbf{Q}}(A)$ with
overall budget constraint $A$ reduces to the sum of $n$ single agent
maximization problems, where, however, the budget constraint of each agents
is assigned by $a_{\mathbf{X}}^{n}=E_{Q^{n}}[Y_{\mathbf{X}}^{n}]$ and the
vector $\mathbf{a}_{\mathbf{X}}$\ maximizes the overall performance of the
system. We will also recover this feature in the notion of a SORTE, where
the probability vector $\mathbf{Q}$ will be endogenously determined, instead
of being a priori assigned, as in this case.

\subsection{Risk Exchange Equilibrium\label{SecEq}}

We here formalize B\"{u}hlmann's risk exchange equilibrium in a pure
exchange economy, \cite{Buhlmann1} and \cite{Buhlmann}, already mentioned in
conditions (a') and (b'), Item 1 of the Introduction. Let $\mathcal{Q}^{1}$ be the set of vectors of probability
measures having all components equal:
\begin{equation*}
\mathcal{Q}^{1}:=\left\{ \mathbf{Q}\in \mathcal{P}^{N}\mid
Q^{1}=...=Q^{N}\right\} .
\end{equation*}

To be consistent
with Definition \ref{def} we keep the same numbering for the corresponding
conditions. %

\begin{definition}
\label{def4} Fix $A\in{\mathbb{R}},\,\mathbf{a}\in \mathbb{R}^{N}$ such that 
$\sum_{n=1}^{N}a^{n}=A$. The pair $(\mathbf{Y}_{\mathbf{X}},\mathbf{Q}_{%
\mathbf{X}})\in \mathcal{L}\mathbf{\times }\mathcal{Q}^{1}$ is a \textbf{%
risk exchange equilibrium (with budget $A$ and allocation $\mathbf{a}\in 
\mathbb{R}^{N}$)} if:

1) for each $n$, $Y_{\mathbf{X}}^{n}$ is optimal for $U_{n}^{Q_{\mathbf{X}%
}^{n}}(a^{n})$,

3) $\mathbf{Y}_{\mathbf{X}}\in \mathcal{C}_{\mathbb{R}}$, $\sum_{n=1}^{N}Y_{%
\mathbf{X}}^{n}=A$ $P\text{-}a.s.$
\end{definition}

\begin{theorem}[B\"{u}hlmann, \protect\cite{Buhlmann}]
\label{ThBuhlmann}For twice differentiable, concave, strictly increasing
utilities $u_{1},\dots ,u_{n}:{\mathbb{R}}\rightarrow {\mathbb{R}}$ such
that their risk aversions are positive Lipschitz and for $\mathcal{L}%
=(L^{\infty }(P))^{N},$ $\mathscr{Q}=\mathcal{Q}^{1}$ and $\mathbf{X} \in 
\mathcal{L }$, there exists a unique risk exchange equilibrium that is
Pareto optimal.
\end{theorem}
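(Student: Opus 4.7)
The plan is to follow the classical Lagrangian/representative-agent route pioneered by Borch and Bühlmann. First I would write the necessary optimality conditions for each agent: given a candidate equilibrium density $Z=\mathrm{d}Q_{\mathbf{X}}/\mathrm{d}P$, problem $U_n^{Q_{\mathbf{X}}^n}(a^n)$ is a strictly concave maximization over $L^\infty$ with one linear budget constraint. Because $u_n$ is $C^2$, strictly increasing, and the Arrow--Pratt coefficient $-u_n''/u_n'$ is strictly positive, $u_n$ is strictly concave; the Lagrange multiplier rule yields $u_n'(X^n+Y_{\mathbf{X}}^n)=\lambda_n Z$ for some $\lambda_n>0$. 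Inverting and setting $I_n:=(u_n')^{-1}$ gives
\begin{equation*}
Y_{\mathbf{X}}^n = I_n(\lambda_n Z) - X^n,\qquad n=1,\dots,N.
\end{equation*}

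Next I would impose the clearing/budget conditions simultaneously. Summing the above relation and using $\sum_n Y_{\mathbf{X}}^n=A$ $P$-a.s.\ produces the pointwise equation $\sum_{n=1}^N I_n(\lambda_n Z)=A+\sum_n X^n=:W$. Since each $I_n$ is continuous and strictly decreasing (as $u_n''<0$), for every fixed vector $\boldsymbol{\lambda}\in\mathbb{R}_{>0}^N$ and every value of $W$ there is a unique $Z=Z(W;\boldsymbol{\lambda})>0$ solving this equation, which is a measurable function of $\omega$ through $W$. This reduces the problem to finding $\boldsymbol{\lambda}$ so that the $N$ budget constraints $E[Z(W;\boldsymbol{\lambda})(I_n(\lambda_n Z(W;\boldsymbol{\lambda}))-X^n)]=a^n$ hold together with the normalization $E[Z(W;\boldsymbol{\lambda})]=1$. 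Only $N$ of these $N+1$ scalar equations are independent once $\sum_n a^n=A$ is used.

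The bulk of the work, and the main obstacle, is the existence and uniqueness of such a $\boldsymbol{\lambda}\in\mathbb{R}_{>0}^N$. Here the hypothesis that the risk aversions are positive and Lipschitz is crucial: it gives uniform two-sided bounds on $-I_n'=1/u_n''\circ I_n$ on compact intervals, which turn $\boldsymbol{\lambda}\mapsto (\lambda_n E[Z(W;\boldsymbol{\lambda})(I_n(\lambda_n Z)-X^n)])_{n}$ into a continuous map with suitable monotonicity and boundary behaviour on $\mathbb{R}_{>0}^N$, so a Brouwer/degree or monotone-operator argument (as in Bühlmann \cite{Buhlmann}) produces a fixed point, and strict concavity of each $u_n$ makes it unique. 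The boundedness of $\mathbf{X}\in(L^\infty)^N$ is used to keep $W$ essentially bounded so that all moments and inversions are well defined.

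Finally I would derive Pareto optimality and uniqueness of the equilibrium allocation. The first-order relations $u_n'(X^n+Y_{\mathbf{X}}^n)=\lambda_n Z$ coincide with the necessary and sufficient optimality conditions for the weighted-sum problem $\max \sum_n \frac{1}{\lambda_n}\mathbb{E}[u_n(X^n+Y^n)]$ subject to $\sum_n Y^n=A$; by Borch's theorem (concavity and strict monotonicity) any such maximizer is Pareto optimal in the sense of Definition \ref{defparetos} restricted to the feasibility set $\{\mathbf{Y}\in\mathcal{C}_{\mathbb{R}}:\sum Y^n=A\}$. Uniqueness of $\mathbf{Y}_{\mathbf{X}}$ then follows from the uniqueness of the multipliers $\boldsymbol{\lambda}$ and the strict concavity of each $u_n$, while uniqueness of $Q_{\mathbf{X}}$ follows from $Z=u_n'(X^n+Y_{\mathbf{X}}^n)/\lambda_n$ being determined $P$-a.s.\ by any one agent's optimum.
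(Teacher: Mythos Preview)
The paper does not prove this theorem: its entire proof is the single line ``See \cite{Buhlmann}.'' The statement is quoted as classical background, so there is no in-paper argument to compare against.

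Your sketch is a reasonable reconstruction of the Borch--B\"uhlmann approach and would, if carried out carefully, yield the result. One remark on the comparison with the cited source: B\"uhlmann's original argument in \cite{Buhlmann} does not proceed by a fixed-point search over the multiplier vector $\boldsymbol{\lambda}$. Instead, after obtaining the first-order conditions $u_n'(X^n+Y^n_{\mathbf X})=\lambda_n Z$, he differentiates with respect to the aggregate endowment $W=\sum_n X^n$ and derives an ODE for the density $Z$ (equivalently for the optimal sharing rules) in terms of the risk tolerances $\tau_n=-u_n'/u_n''$; the Lipschitz hypothesis on the risk aversions is precisely what guarantees existence and uniqueness of the solution to this ODE. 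Your Brouwer/monotone-operator route in $\boldsymbol{\lambda}$ is a legitimate alternative, closer in spirit to the modern representative-agent formulation, but you should be aware that the Lipschitz assumption in the statement is tailored to the ODE argument rather than to a degree-theoretic one; if you pursue your route you would need to argue separately why that hypothesis gives you the required continuity and boundary behaviour of the map $\boldsymbol{\lambda}\mapsto(\text{budgets})$.
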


\begin{proof}
See \cite{Buhlmann}.
\end{proof}

In a risk exchange equilibrium with budget $A$, the vector $\mathbf{a}\in 
\mathbb{R}^{N}$ such that $\sum_{n=1}^{N}a^{n}=A$ is exogenously assigned,
while both the optimal exchange variable $\mathbf{Y}_{\mathbf{X}}$ and the
equilibrium price measure $Q_{\mathbf{X}}$ are endogenously determined. On
the contrary, in a $\mathbf{Q}-$Optimal Allocation the pricing measure is
assigned \emph{a priori}, while the optimal allocation $\mathbf{Y}_{\mathbf{X%
}}$ and optimal budget $\mathbf{a}_{\mathbf{X}}$ are endogenously
determined. We shall now introduce a notion which requires to endogenously
recover the triple $(\mathbf{Y}_{\mathbf{X}},\mathbf{Q}_{\mathbf{X}},\mathbf{%
a}_{\mathbf{X}})$ from the systemic budget $A$.

\subsection{Systemic Optimal Risk Transfer Equilibrium (SORTE)}

\label{SecSORTE} The novel equilibrium concept presented in equations (\ref%
{Problem}) (\ref{ProblemY}) and (\ref{BB}) can now be formalized as follows.
To this end, recall from (\ref{Cr}) the definition of $\mathcal{C}_{\mathbb{R%
}}$ and fix a convex cone 
\begin{equation*}
\mathcal{B}\subseteq \mathcal{C}_{{\mathbb{R}}}
\end{equation*}%
of admissible allocations such that ${\mathbb{R}}^{N}+\mathcal{B}=\mathcal{B}
$.

\begin{definition}[SORTE]
\label{carte}The triple $(\mathbf{Y}_{\mathbf{X}},\mathbf{Q}_{\mathbf{X}},%
\mathbf{a}_{\mathbf{X}})\in \mathcal{L}\mathbf{\times }\mathscr{Q}\mathbf{%
\times }\mathbb{R}^{N}$ is a \textbf{Systemic Optimal Risk Transfer
Equilibrium} with budget $A\in \mathbb{R}$ if:

1) for each $n$, $Y_{\mathbf{X}}^{n}$ is optimal for $U_{n}^{Q_{\mathbf{X}%
}^{n}}(a_{\mathbf{X}}^{n}),$

2) $\mathbf{a}_{\mathbf{X}}$ is optimal for $S^{\mathbf{Q}_{\mathbf{X}}}(A),$

3) $\mathbf{Y}_{\mathbf{X}}\in \mathcal{B}\subseteq \mathcal{C}_{\mathbb{R}}$
\ and $\sum_{n=1}^{N}Y_{\mathbf{X}}^{n}=A$ $P\text{-}a.s.$
\end{definition}

\begin{remark}
\label{remarkmonot} It follows from the monotonicity of each $u_{n}$ that $%
\sum_{n=1}^{N}a_{\mathbf{X}}^{n}=A$ and $E_{Q_{\mathbf{X}}^{n}}[Y_{\mathbf{X}%
}^{n}]=a_{\mathbf{X}}^{n}.$ Hence 
\begin{equation*}
\sum_{n=1}^{N}E_{Q_{\mathbf{X}}^{n}}[Y_{\mathbf{X}}^{n}]=\sum_{n=1}^{N}a_{%
\mathbf{X}}^{n}=A,
\end{equation*}%
and 
\begin{equation}
\sum_{n=1}^{N}Y_{\mathbf{X}}^{n}=\sum_{n=1}^{N}E_{Q_{\mathbf{X}}^{n}}[Y_{%
\mathbf{X}}^{n}] \text{ } P\text{-}a.s.   \label{eqformulaa}
\end{equation}
\end{remark}

The main aim of the paper is to provide sufficient general assumptions that
guarantee existence and uniqueness as well as good properties of a SORTE.

\begin{remark}
We will show the existence of a triple $(\mathbf{Y}_{\mathbf{X}},\mathbf{Q}_{%
\mathbf{X}},\mathbf{a}_{\mathbf{X}})\in \mathcal{L}\mathbf{\times }%
\mathscr{Q}\mathbf{\times }\mathbb{R}^{N}$ verifying the three conditions in
Definition \ref{carte}. Hence, we also obtain the existence of the SORTE in
the formulations given in (\ref{Problem2}), (\ref{Problem2a}), (\ref{BB}) or
in (\ref{Problem}), (\ref{ProblemY}), (\ref{BB}), for generic functional $%
p^{n}$ verifying the conditions (i), (ii) and (iii) stated in the
Introduction (see also Remark \ref{remQ}).
\end{remark}



In the sequel we will work under the following Assumption \ref{A00}.

\begin{assumption}
\label{A00} $\,$

\begin{enumerate}
\item[(a)] {\textbf{Utilities}: $u_{1},\dots ,u_{N}:{\mathbb{R}}\rightarrow {%
\mathbb{R}}$ are concave, strictly increasing differentiable functions with 
\begin{equation*}
\lim_{x\rightarrow -\infty }\frac{u_{n}(x)}{x}=+\infty
\,\,\,\,\,\lim_{x\rightarrow +\infty }\frac{u_{n}(x)}{x}=0, \text{    for any } n\in
\{1,\dots ,N\}.
\end{equation*}%
Moreover we assume that the following property holds: for any $n\in
\{1,\dots ,N\}$ and $Q^{n}\ll P$ 
\begin{equation}
\mathbb{E}\left[ v_{n}\left( \lambda \frac{\mathrm{d}Q^{n}}{\mathrm{d}P}%
\right) \right] <+\infty \text{ for some }\lambda
>0\,\,\,\Longleftrightarrow \,\,\,\mathbb{E}\left[ v_{n}\left( \lambda \frac{%
\mathrm{d}Q^{n}}{\mathrm{d}P}\right) \right] <+\infty \text{ for all }%
\lambda >0,  \label{RAE}
\end{equation}%
}where $v_{n}(y):=\sup_{x\in \mathbb{R}}\left\{ u_{n}(x)-xy\right\} $
denotes the convex conjugate of $u_{n}$.

\item[(b)] {\textbf{Constraints}: $\mathcal{B}\subseteq \mathcal{C}_{\mathbb{%
R}}$ is a convex cone, closed in probability, such that ${\mathbb{R}}^{N}+%
\mathcal{B}=\mathcal{B}$.}
\end{enumerate}
\end{assumption}

\begin{remark}
In particular, Assumptions \ref{A00} (b) implies that all constant vectors
belong to $\mathcal{B}$. The condition (\ref{RAE}) is related to the
Reasonable Asymptotic Elasticity condition on utility functions, which was
introduced in \cite{S01}. This assumption, even though quite weak (see \cite%
{bf} Section 2.2), is fundamental to guarantee the existence of the optimal
solution to classical utility maximization problems (see \cite{bf} and \cite%
{S01}).
\end{remark}

\begin{theorem}
\label{TH1} A \textbf{Systemic
Optimal Risk Transfer Equilibrium } $(\mathbf{Y}_{\mathbf{X}},\mathbf{Q}_{%
\mathbf{X}}, \mathbf{a}_{\mathbf{X}})$ exists, with ${Q}^1_{\mathbf{X}%
},\dots,{Q}^N_{\mathbf{X}}$ equivalent to $P$.
\end{theorem}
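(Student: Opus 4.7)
The overall strategy is to reduce the existence of a SORTE to the duality theory for the single aggregated systemic utility maximization
\[
\pi(A) := \sup\left\{\mathbb{E}\left[\sum_{n=1}^N u_n(X^n + Y^n)\right] \;\middle|\; \mathbf{Y} \in \mathcal{B},\ \sum_{n=1}^N Y^n \leq A\right\},
\]
i.e.\ problem (\ref{piAbis}), and then to read off the SORTE triple from the primal/dual optimizers. The plan is to invoke Theorem \ref{thmoptimumexists} to obtain simultaneously a primal optimizer $\mathbf{Y}_{\mathbf{X}} \in \mathcal{B}$ (with $\sum_n Y_{\mathbf{X}}^n = A$ $P$-a.s., the budget binding by strict monotonicity of the $u_n$) and a dual optimizer $\mathbf{Q}_{\mathbf{X}} \in \mathcal{Q}$ equivalent to $P$, together with strong duality $\pi(A) = \Pi^{\mathbf{Q}_{\mathbf{X}}}(A)$. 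The dual set $\mathcal{Q}$ is constructed as the polar of $\mathcal{B}_0 := \{\mathbf{Y} \in \mathcal{B} \mid \sum_n Y^n = 0\}$, so that $\sum_n E_{Q^n}[Y^n] = c$ for every $\mathbf{Q} \in \mathcal{Q}$ and every $\mathbf{Y} \in \mathcal{B}$ with $\sum_n Y^n = c \in \mathbb{R}$.

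Granting Theorem \ref{thmoptimumexists}, the three SORTE conditions follow mechanically. I would set $a_{\mathbf{X}}^n := E_{Q_{\mathbf{X}}^n}[Y_{\mathbf{X}}^n]$ for $n=1,\dots,N$, so that $\sum_n a_{\mathbf{X}}^n = A$ by the polarity just noted. Because $\mathbf{Y}_{\mathbf{X}}$ is feasible for the enlarged problem $\Pi^{\mathbf{Q}_{\mathbf{X}}}(A)$ (its components sit in $\mathcal{L}$ and its total $\mathbf{Q}_{\mathbf{X}}$-budget equals $A$), strong duality forces $\mathbf{Y}_{\mathbf{X}}$ to be optimal there as well. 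Lemma \ref{lemmalinkpiS} gives $\Pi^{\mathbf{Q}_{\mathbf{X}}}(A) = S^{\mathbf{Q}_{\mathbf{X}}}(A)$, and the decomposition of $S^{\mathbf{Q}_{\mathbf{X}}}(A)$ into single-agent subproblems with budgets $a_{\mathbf{X}}^n$ then pins down each $Y_{\mathbf{X}}^n$ as an optimizer of $U_n^{Q_{\mathbf{X}}^n}(a_{\mathbf{X}}^n)$, since any strict improvement in one coordinate would contradict the aggregate optimality; this yields condition 1). The identity $\sum_n U_n^{Q_{\mathbf{X}}^n}(a_{\mathbf{X}}^n) = \pi(A) = S^{\mathbf{Q}_{\mathbf{X}}}(A)$ then says precisely that $\mathbf{a}_{\mathbf{X}}$ attains $S^{\mathbf{Q}_{\mathbf{X}}}(A)$, giving condition 2); condition 3) is immediate from $\mathbf{Y}_{\mathbf{X}} \in \mathcal{B}$ and $\sum_n Y_{\mathbf{X}}^n = A$.

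The main obstacle is clearly Theorem \ref{thmoptimumexists} itself, and I expect three points to be delicate there: (i) identifying the correct polar set $\mathcal{Q}$ so that the dual of (\ref{defpi}) makes sense and strong duality holds over the non-trivial cone $\mathcal{B}$; (ii) extracting a primal optimizer $\mathbf{Y}_{\mathbf{X}} \in \mathcal{B}$ from a maximizing sequence in an infinite-dimensional non-compact set, relying on the reasonable asymptotic elasticity condition (\ref{RAE}), the closedness of $\mathcal{B}$ in probability (Assumption \ref{A00}(b)), and Koml\'os/Fatou-type compactifications standard in utility maximization; and (iii) upgrading the dual optimizer from $\mathbf{Q}_{\mathbf{X}} \ll P$ to $\mathbf{Q}_{\mathbf{X}} \sim P$, which I expect to follow from the Inada-type growth $\lim_{x \to -\infty} u_n(x)/x = +\infty$ together with the inclusion $\mathbb{R}^N \subseteq \mathcal{B}$ that allows constant perturbations to rule out null sets for each $Q_{\mathbf{X}}^n$.
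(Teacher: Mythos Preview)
Your reduction to Theorem \ref{thmoptimumexists} and the derivation of the three SORTE conditions from it is exactly the paper's approach in Theorem \ref{thmsorteexistsA}: define $\widehat{a}^{n}:=E_{\widehat{Q}^{n}}[\widehat{Y}^{n}]$, use the chain $\pi(A)=\pi^{\widehat{\mathbf{Q}}}(A)=\Pi^{\widehat{\mathbf{Q}}}(A)=S^{\widehat{\mathbf{Q}}}(A)$ from Lemma \ref{lemmalinkpiS} and Corollary \ref{propminimum}, and read off conditions 1)--3). Two small points deserve tightening. First, your polarity claim that $\sum_n E_{Q^n}[Y^n]=c$ for all $\mathbf{Y}\in\mathcal{B}$ with $\sum_n Y^n=c$ is an \emph{equality}, but Assumption \ref{A00}(b) only asks $\mathcal{B}$ to be a convex cone, not a linear space; Proposition \ref{propfairprob} gives only the inequality $\sum_n E_{Q^n}[Y^n]\leq\sum_n Y^n$. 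The equality $\sum_n \widehat{a}^n=A$ you actually need is not a general polarity fact but the first-order condition (\ref{EQ}) built into Theorem \ref{thmoptimumexists}. Second, ``budget binding by strict monotonicity'' is morally right but the paper's argument for $\sum_n \widehat{Y}^n=A$ is the sandwich $A=\sum_n E_{\widehat{Q}^n}[\widehat{Y}^n]\leq \sum_n \widehat{Y}^n\leq A$, where the middle inequality is Proposition \ref{propfairprob} applied to $\widehat{\mathbf{Y}}\in cl_{\widehat{\mathbf{Q}}}(\mathcal{B}_0\cap M^\Phi)$; your direct monotonicity argument would have to verify that the perturbed vector remains in the enlarged domain $\overline{\mathcal{B}_A}$, which is not immediate.

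The more substantive divergence is in your sketch of Theorem \ref{thmoptimumexists} itself. You anticipate a Koml\'os/Fatou compactification to extract the primal optimizer from a maximizing sequence. The paper does \emph{not} do this: it first establishes the minimax duality (via the Extended Namioka--Klee theorem, Theorem \ref{thmminimax}), obtains the dual optimizer $(\widehat{\lambda},\widehat{\mathbf{Q}})$, and then \emph{constructs} the primal optimizer explicitly by the gradient formula $\widehat{Y}^j=-X^j-v_j'\bigl(\widehat{\lambda}\,\frac{d\widehat{Q}^j}{dP}\bigr)$, checking afterwards (via variational arguments in $\mathbf{Q}$ and the bipolar-type Theorem \ref{thmweirdclosure}) that this $\widehat{\mathbf{Y}}$ lies in $\overline{\mathcal{B}_0}$ and in fact in $\mathcal{B}_0$. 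The equivalence $\widehat{\mathbf{Q}}\sim P$ likewise comes from $v_j'(0+)=-\infty$ together with the integrability $v_j'(\widehat{\lambda}\,d\widehat{Q}^j/dP)\in L^1(Q^j)$ for some $\mathbf{Q}\in\mathcal{Q}_v^e$, not from constant perturbations in $\mathcal{B}$. Your route might also work, but the paper's dual-first construction avoids compactness issues entirely and yields the explicit formula (\ref{YY}), which is what drives the exponential computations in Section \ref{sectionexp}.
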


\begin{theorem}
\label{TH2}Under the
additional Assumption that $\mathcal{B}$ is closed under truncation
(Definition \ref{DefTrunc}) the \textbf{Systemic Optimal Risk Transfer
Equilibrium } is unique and is a Pareto optimal allocation.
\end{theorem}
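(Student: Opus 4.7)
The plan is first to reduce any SORTE to the unique optimizer of a single systemic primal problem, then deduce Pareto optimality from Proposition~\ref{ABQ} and uniqueness of the full triple from the first-order conditions.

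\textbf{Reduction to a primal problem.} Given any SORTE $(\mathbf{Y}_{\mathbf{X}}, \mathbf{Q}_{\mathbf{X}}, \mathbf{a}_{\mathbf{X}})$, I would first show that $\mathbf{Y}_{\mathbf{X}}$ is an optimizer of
\begin{equation*}
\pi(A) := \sup\left\{ \mathbb{E}\left[\sum_{n=1}^N u_n(X^n + Y^n)\right] : \mathbf{Y}\in \mathscr{V}\right\}, \qquad \mathscr{V} := \left\{\mathbf{Y}\in \mathcal{B} : \sum_{n=1}^N Y^n = A\ P\text{-a.s.}\right\}.
\end{equation*}
The lower bound $\pi(A) \ge \sum_n \mathbb{E}[u_n(X^n + Y_{\mathbf{X}}^n)]$ is immediate since $\mathbf{Y}_{\mathbf{X}}\in\mathscr{V}$. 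For the reverse bound, the dual characterisation of $\pi(A)$ (Theorem \ref{thmoptimumexists}) identifies $\mathbf{Q}_{\mathbf{X}}$ as lying in the polar of the zero-sum cone $\mathcal{B}_0 := \{\mathbf{Y}\in\mathcal{B}:\sum_n Y^n = 0\}$, so every $\mathbf{Y}\in\mathscr{V}$ satisfies $\sum_n E_{Q^n_{\mathbf{X}}}[Y^n] \le A$; thus $\pi(A) \le \Pi^{\mathbf{Q}_{\mathbf{X}}}(A)$. Combining this with the SORTE identities $\Pi^{\mathbf{Q}_{\mathbf{X}}}(A) = S^{\mathbf{Q}_{\mathbf{X}}}(A) = \sum_n \mathbb{E}[u_n(X^n + Y_{\mathbf{X}}^n)]$, provided by Lemma \ref{lemmalinkpiS} and Definition \ref{carte}, closes the chain.

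\textbf{Uniqueness of the allocation.} Suppose two SORTE allocations $\mathbf{Y}_1, \mathbf{Y}_2$ exist. By the reduction above both solve $\pi(A)$, and by convexity of $\mathcal{B}$ so does $\mathbf{Y}^\star := \tfrac12(\mathbf{Y}_1 + \mathbf{Y}_2)$. Optimality of $\mathbf{Y}^\star$ and concavity of each $u_n$ force pointwise equality $u_n\!\left(X^n + \tfrac12(Y_1^n + Y_2^n)\right) = \tfrac12 u_n(X^n + Y_1^n) + \tfrac12 u_n(X^n + Y_2^n)$ $P$-a.s., so $u_n$ must be affine on the random segment joining $X^n + Y_1^n$ and $X^n + Y_2^n$. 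The main obstacle is excluding the possibility that this segment lies inside a flat region of $u_n'$. This is precisely where closure under truncation is used: on the event $\{Y_1^n \neq Y_2^n\}$ one may truncate $Y_1^n - Y_2^n$ to produce an admissible $\mathbf{Z}\in \mathcal{B}_0$, shift it through a small scalar perturbation $\mathbf{Y}^\star + \varepsilon \mathbf{Z}$ (which remains in $\mathscr{V}$), and use differentiability together with the Inada-type conditions of Assumption \ref{A00}(a) to expand the objective to second order. Combining with the cone structure and the fact that $\mathbf{Q}_{\mathbf{X}}$ separates $\mathcal{B}_0$ yields a strictly improving perturbation, contradicting the optimality of $\mathbf{Y}^\star$ unless $Y_1^n = Y_2^n$ $P$-a.s.\ for every $n$.

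\textbf{Pareto optimality and uniqueness of the full triple.} Since $\mathbf{Y}_{\mathbf{X}}$ is the unique maximizer of $\Pi(\mathscr{V})$, Proposition \ref{ABQ} (applied with $\mathscr{V}$ as above) immediately gives Pareto optimality. For the remaining components, the first-order conditions for each individual problem $U_n^{Q^n_{\mathbf{X}}}(a_{\mathbf{X}}^n)$ yield $\tfrac{\mathrm{d}Q^n_{\mathbf{X}}}{\mathrm{d}P} = \lambda_n\, u_n'(X^n + Y_{\mathbf{X}}^n)$ for a suitable normalisation constant $\lambda_n>0$; strict positivity of $u_n'$ and uniqueness of $\mathbf{Y}_{\mathbf{X}}$ therefore uniquely determine $\mathbf{Q}_{\mathbf{X}}$, and then $a_{\mathbf{X}}^n = E_{Q_{\mathbf{X}}^n}[Y_{\mathbf{X}}^n]$ by Remark \ref{remarkmonot} pins down $\mathbf{a}_{\mathbf{X}}$.
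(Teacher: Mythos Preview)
Your overall strategy matches the paper's: reduce any SORTE to the optimizer of a single primal problem, then deduce uniqueness and Pareto optimality via Proposition~\ref{ABQ}. However, you misplace the role of closure under truncation, and this creates a genuine gap.

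In the paper (Theorem~\ref{thmsorteuniqueA}), truncation enters \emph{only} through Lemma~\ref{lemmaclosed}, which establishes $\mathcal{B}_A\cap\mathcal{L}\subseteq\overline{\mathcal{B}_A}$. This is precisely what your reduction step needs: a SORTE allocation $\mathbf{Y}_{\mathbf{X}}$ lies in $\mathcal{B}_A\cap\mathcal{L}$ by Definition~\ref{carte}, but to compare $\sum_n\mathbb{E}[u_n(X^n+Y_{\mathbf{X}}^n)]$ with $\pi(A)$ (whose maximum, by Theorem~\ref{thmoptimumexists}, is attained on $\overline{\mathcal{B}_A}$) one must first know $\mathbf{Y}_{\mathbf{X}}\in\overline{\mathcal{B}_A}$. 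Likewise, your claim that ``every $\mathbf{Y}\in\mathscr{V}$ satisfies $\sum_n E_{Q^n_{\mathbf{X}}}[Y^n]\le A$'' is exactly Lemma~\ref{lemmaclosed} and requires truncation; you pass over it. (Your $\mathscr{V}$ also carries no integrability restriction, so these expectations may be undefined; the paper works with $\mathcal{B}_A\cap\mathcal{L}$.)

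By contrast, you invoke truncation in the uniqueness argument for $\mathbf{Y}_{\mathbf{X}}$, via an ad hoc perturbation ``truncate $Y_1^n-Y_2^n$ to produce an admissible $\mathbf{Z}\in\mathcal{B}_0$''. This does not match Definition~\ref{DefTrunc}: truncation there replaces the \emph{entire} vector $\mathbf{Y}$ by a deterministic $\mathbf{c}_Y$ on the large-value event; it does not manufacture single-component perturbations lying in $\mathcal{B}_0$, so your second-order expansion has no admissible direction to work with. The paper avoids this step altogether by appealing to strict concavity of $u_1,\dots,u_N$ (invoked in Step~6 of Theorem~\ref{thmoptimumexists} and in the proof of Theorem~\ref{thmsorteuniqueA}), so that the primal over the convex set $\overline{\mathcal{B}_A}$ has at most one maximizer.

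For uniqueness of $\mathbf{Q}_{\mathbf{X}}$ you propose the first-order relation $\frac{\mathrm{d}Q^n_{\mathbf{X}}}{\mathrm{d}P}=\lambda_n u_n'(X^n+Y^n_{\mathbf{X}})$. The paper takes a cleaner dual route: the reduction yields $\pi(A)=\pi^{\widetilde{\mathbf{Q}}}(A)$ for any SORTE measure $\widetilde{\mathbf{Q}}$, so by Corollary~\ref{propminimum} $\widetilde{\mathbf{Q}}$ is a minimizer in~\eqref{eqminimaxapplied3}, and that minimizer is unique by Theorem~\ref{thmoptimumexists}. Your FOC approach would require a separate justification that the individual optimizer over the non-standard domain $\mathcal{L}^n$ actually satisfies such a relation.
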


The formal statements and proofs are postponed to Section \ref{secproof},
Theorem \ref{thmsorteexistsA} and Theorem \ref{thmsorteuniqueA}.

\begin{remark}
A priori there are no reasons why a $\mathbf{Q}$-optimal allocation $\mathbf{%
Y}_{\mathbf{X}}$ in Definition \ref{def} would also satisfy the constraint $%
\sum_{n=1}^{N}Y_{\mathbf{X}}^{n}\in \mathbb{R}$. The existence of a SORTE is
indeed the consequence of the existence of a probability measure $\mathbf{Q}%
_{\mathbf{X}}$ such that the $\mathbf{Q}_{\mathbf{X}}$-optimal allocation $%
\mathbf{Y}_{\mathbf{X}}$ in Definition \ref{def} satisfies also the
additional risk transfer constraint $\sum_{n=1}^{N}Y_{\mathbf{X}}^{n}=A$ $P%
\text{-}a.s.$ .
\end{remark}

\begin{remark}
Without the additional feature expressed by 2) in the Definition \ref{carte}%
, for all choices of $\mathbf{a}_{\mathbf{X}}$ satisfying $\sum_{n=1}^{N}a_{%
\mathbf{X}}^{n}=A$ there exists an equilibrium $(\mathbf{Y}_{\mathbf{X}},%
\mathbf{Q}_{\mathbf{X}})$ in the sense of Definition \ref{def4} (see Theorem %
\ref{ThBuhlmann}). The uniqueness of a SORTE is then a consequence of the
uniqueness of the optimal solution in condition 2).
\end{remark}

\begin{remark}
Depending on which one of the three objects $(\mathbf{Y},\mathbf{Q},\mathbf{a%
})\in \mathcal{L}\mathbf{\times }\mathscr{Q}\times \mathbb{R}^{N}$ we keep a
priori fixed, we get a different notion of equilibrium (see the various
definitions above). The characteristic features of the risk exchange
equilibriums and of a SORTE, compared with the more classical utility
optimization problem in the systemic framework of Section \ref{secMax}, are
the condition $\sum_{n=1}^{N}Y_{\mathbf{X}}^{n}=A$ $P\text{-}a.s.$ and the
existence of the equilibrium pricing vector $\mathbf{Q}_{\mathbf{X}}$.
\end{remark}

For both concepts of equilibrium (Definitions \ref{def4} and SORTE), each
agent is behaving rationally by maximizing his expected utility given a
budget constraint. The two approaches differ by the budget constraints. In B%
\"{u}hlmann's definition the vector $\mathbf{a}\in \mathbb{R}^{N}$ that
assigns the budget constraint ($E_{Q_{\mathbf{X}}^{n}}[Y^{n}]\leq a_{n})$ is
prescribed a priori$\mathbf{.}$ On the contrary, in the SORTE approach, the
vector $\mathbf{a}\in \mathbb{R}^{N}$, with $\sum_{n=1}^{N}a_{n}=A,$ that
assigns the budget constraint $E_{Q_{\mathbf{X}}^{n}}[Y^{n}]\leq a_{n}$ is
determined by optimizing the problem in condition 2), hence by taking into
account the optimal systemic utility $S^{\mathbf{Q}_{\mathbf{X}}}(A)$, which
is (by definition) larger than the systemic utility $%
\sum_{n=1}^{N}U_{n}^{Q_{X}^{n}}(a^{n})$ in B\"{u}hlmann's equilibrium. The
SORTE\ gives priority to the systemic aspects of the problem in order to
optimize the overall systemic performance. A toy example showing the
difference between a B\"{u}hlmann's equilibrium and a SORTE is provided in
Section \ref{Ex}.

\begin{example}
\label{exCh}We now consider the example of a cluster of agents, already
introduced in \cite{bffm}. For $h\in \left\{ 1,\cdots ,N\right\} ,$ let $%
\mathbf{I}:=(I_{m})_{m=1,...,h}$ be some partition of $\{1,\cdots ,N\}$. We
introduce the following family{\small 
\begin{equation}
\mathcal{B}^{(\mathbf{I})}=\left\{ \mathbf{Y}\in L^{0}(\mathbb{R}^{N})\mid \
\exists \ d=(d_{1},\cdots ,d_{h})\in \mathbb{R}^{h}\text{\ }:\text{\ }%
\sum_{i\in I_{m}}Y^{i}=d_{m}\text{ for\ }m=1,\cdots ,h\right\} \subseteq 
\mathcal{C}_{\mathbb{R}}.  \label{C0}
\end{equation}%
}For a given $\mathbf{I}$, the values $(d_{1},\cdots ,d_{h})$ may change,
but the elements in each of the $h$ groups $I_{m}$ is fixed by the partition 
$\mathbf{I}$. It is then easily seen that $\mathcal{B}^{(\mathbf{I})}$ is a
linear space containing $\mathbb{R}^{N}$ and closed with respect to
convergence in probability. We point out that the family $\mathcal{B}^{(%
\mathbf{I})}$ admits two extreme cases:

\begin{itemize}
\item[(i)] the strongest restriction occurs when $h=N,$ i.e., we consider
exactly $N$ groups, and in this case $\mathcal{B}^{(\mathbf{I})}=\mathbb{R}%
^{N}$ corresponds to no risk sharing;

\item[(ii)] on the opposite side, we have only one group $h=1$ and $\mathcal{%
B}^{(\mathbf{I})}=\mathcal{C}_{\mathbb{R}}$ is the largest possible class,
corresponding to risk sharing among all agents in the system. This is the
only case considered in B\"{u}hlmann's definition of equilibrium.
\end{itemize}
\end{example}

\begin{remark}
As already mentioned in the Introduction, one additional feature of a SORTE,
compared with the B\"{u}hlmann's notion, is the possibility to require, in
addition to $\sum_{n=1}^{N}Y^{n}=A$ that the optimal solution belongs to a
pre-assigned set $\mathcal{B}$ of admissible allocations, satisfying
Assumption \ref{A00} (b). In particular, we allow for the selection of the
sets $\mathcal{B}=\mathbb{R}^{N}$ or $\mathcal{B}=\mathcal{C}_{\mathbb{R}}$.
The characteristics of the optimal probability $\mathbf{Q}_{\mathbf{X}}$
depend on the admissible set $\mathcal{B}$. For $\mathcal{B}=\mathcal{C}_{%
\mathbb{R}}$, all the components of $\mathbf{Q}_{\mathbf{X}}$ turn out to be
equal. We also know (see Lemma \ref{lemmaalleqonf}) that for $\mathcal{B}=%
\mathcal{B}^{(\mathbf{I})}$ all the components $Q_{\mathbf{X}}^{i}$ of $%
\mathbf{Q}_{\mathbf{X}}$ are equal for all $i\in I_{m}$, for each group $%
I_{m}$. Additional examples of sets $\mathcal{B}$ are provided in Section %
\ref{SecXB}.
\end{remark}

\subsection{Explicit Formulas in the Exponential Case} \label{expnew}
We believe it is now instructive to anticipate the explicit solution to the SORTE problem in the exponential case for $\mathcal{B}=\mathcal{C}_\R$. This is a particular case of a more general situation treated in detail in Section \ref{sectionexp}. 
\begin{theorem}
Take exponential utilities
\begin{equation*}
u_{n}(x):=1-\exp (-\alpha _{n}x),\,n=1,\dots ,N\,\,\,\,\,\text{ for }%
\,\,\,\,\,\alpha _{1},\dots ,\alpha _{N}>0.  
\end{equation*}%

Then the SORTE for $\mathcal{B}=\mathcal{C}_\R$ is given by 
\begin{equation}
\begin{cases}
\widehat{Y}^{k}=-X^{k}+\frac{1}{\alpha _{k}}\left( \frac{\overline{X}}{%
\beta }\right) +\frac{1}{\alpha _{k}}\left[ \frac{A}{%
\beta }+\ln \left( \alpha _{k}\right) -E_{R}\left[ \ln (\alpha )\right] %
\right] & \,\,\,\,\,k=1,\dots,N \\ 
\frac{\mathrm{d}\widehat{Q}^{k}}{\mathrm{d}P}=\frac{\exp \left( -\frac{%
\overline{X}}{\beta}\right) }{\mathbb{E}\left[ \exp \left( -\frac{%
\overline{X}}{\beta}\right) \right] }=:\frac{\mathrm{d}\widehat{Q}%
}{\mathrm{d}P} & \,\,\,\,\,k=1,\dots,N \\ 
\widehat{a}^{k}=E_{\widehat{Q}^{k}}[\widehat{Y}^{k}] & \,\,\,\,\,k=1,\dots ,N%
\end{cases}%
\end{equation}%
where $\beta:=\sum_{n=1}^N\frac{1}{\alpha _{n}}$, $\overline{X}%
:=\sum_{n=1}^NX^{n}$, $R(n):=\frac{\frac{1}{\alpha _{n}}}{\sum_{k=1}^{N}\frac{1}{\alpha _{k}}}$ for $n=1,...N$, $\alpha :=(\alpha _{1},...,\alpha _{N})$, $E_{R}\left[ \ln (\alpha )\right] =\sum_{n=1}^{N}R(n)\ln (\alpha _{n})$.

\end{theorem}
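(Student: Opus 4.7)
The plan is to check directly that the triple $(\widehat{\mathbf{Y}},\widehat{\mathbf{Q}},\widehat{\mathbf{a}})$ in the statement satisfies the three defining conditions of Definition \ref{carte}; the exponential structure yields closed-form KKT conditions making each verification a finite computation, and uniqueness (Theorem \ref{TH2}) guarantees that any triple passing the checks is \emph{the} SORTE. Condition 3 is handled first: summing the given $\widehat{Y}^k$ over $k$, and using $\sum_k 1/\alpha_k=\beta$ together with $\sum_k \frac{1}{\alpha_k}\ln\alpha_k=\beta E_R[\ln\alpha]$, the terms $-\sum_k X^k$ and $\sum_k \frac{1}{\alpha_k}\frac{\overline{X}}{\beta}=\overline{X}$ cancel, the term $\sum_k \frac{1}{\alpha_k}\frac{A}{\beta}=A$ remains, and $\sum_k \frac{1}{\alpha_k}\ln\alpha_k$ exactly compensates $-\sum_k \frac{1}{\alpha_k}E_R[\ln\alpha]$, giving $\sum_k \widehat{Y}^k=A$ $P$-a.s.\ and $\widehat{\mathbf{Y}}\in\mathcal{C}_{\mathbb{R}}=\mathcal{B}$.

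For condition 1, each $U_n^{\widehat{Q}^n}(\widehat{a}^n)$ is a standard exponential utility maximization with random endowment $X^n$; by strict concavity, $Y^n$ is optimal if and only if there exists $\lambda_n>0$ with $u_n'(X^n+Y^n)=\lambda_n\frac{\mathrm{d}\widehat{Q}^n}{\mathrm{d}P}$ and $E_{\widehat{Q}^n}[Y^n]=\widehat{a}^n$. Substituting $\widehat{Y}^n$ and simplifying gives $\alpha_n e^{-\alpha_n(X^n+\widehat{Y}^n)}=\exp(E_R[\ln\alpha]-A/\beta)\exp(-\overline{X}/\beta)$, which is manifestly proportional to $\frac{\mathrm{d}\widehat{Q}^n}{\mathrm{d}P}$, identifying $\lambda_n=\exp(E_R[\ln\alpha]-A/\beta)\,\mathbb{E}[\exp(-\overline{X}/\beta)]$; note that this value is the \emph{same} for all $n$. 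The budget identity $E_{\widehat{Q}^n}[\widehat{Y}^n]=\widehat{a}^n$ is the definition of $\widehat{a}^n$. For condition 2, with all $\widehat{Q}^n$ equal to $\widehat{Q}$, the closed form $U_n^{\widehat{Q}}(a^n)=1-\lambda_n(a^n)/\alpha_n$ (from exponential duality) gives $\frac{\mathrm{d}}{\mathrm{d}a^n}U_n^{\widehat{Q}}(a^n)=\lambda_n(a^n)$, and concavity reduces optimality of $\widehat{\mathbf{a}}$ under $\sum_n a^n=A$ to the single condition that $\lambda_1(\widehat{a}^1)=\cdots=\lambda_N(\widehat{a}^N)$, which is precisely what the previous step established.

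The main obstacle is Step 3: carrying the Lagrange multipliers consistently through the nested optimization. A cleaner equivalent route is to invoke the primal-dual machinery of Section \ref{secproof} to reduce the SORTE problem (for $\mathcal{B}=\mathcal{C}_{\mathbb{R}}$) to the joint maximization $\sup\{\sum_n \mathbb{E}[u_n(X^n+Y^n)]\mid \sum_n Y^n=A\}$, then solve the latter by pointwise Lagrangian: the pathwise multiplier $\mu(\omega)$ is common to all $n$, the first-order conditions $\alpha_n e^{-\alpha_n(X^n+Y^n)}=\mu(\omega)$ give $X^n+Y^n=\frac{1}{\alpha_n}[\ln\alpha_n-\ln\mu]$, and summing over $n$ with $\sum_n Y^n=A$ determines $\ln\mu=E_R[\ln\alpha]-\overline{X}/\beta-A/\beta$, reproducing the stated $\widehat{Y}^k$. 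Normalizing $\mu$ to a probability density recovers $\widehat{Q}$, and setting $\widehat{a}^k=E_{\widehat{Q}}[\widehat{Y}^k]$ completes the triple.
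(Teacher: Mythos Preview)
Your proposal is correct and takes a genuinely different route from the paper. The paper (Theorem \ref{thmformulasexp} with $h=1$) attacks the problem from the dual side: it writes the minimax functional $K(\lambda,\mathbf{Q})$ from \eqref{eqminimaxapplied3} explicitly for exponential utilities, \emph{guesses} the entropic density $\widehat{Q}$, computes the optimal $\widehat{\lambda}$ by the first-order condition in $\lambda$, obtains $\widehat{\mathbf{Y}}$ from the gradient formula \eqref{YY}, and closes the loop by checking that the primal value at $\widehat{\mathbf{Y}}$ equals $K(\widehat{\lambda},\widehat{\mathbf{Q}})$; uniqueness then pins down the SORTE. You instead stay on the primal side and verify the three defining conditions of Definition \ref{carte} one by one, the key observation being that the single-agent first-order multiplier $\lambda_n$ turns out to be independent of $n$, which simultaneously delivers optimality in $U_n^{\widehat{Q}^n}(\widehat{a}^n)$ and the Lagrange condition for $S^{\widehat{\mathbf{Q}}}(A)$. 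Your alternative pointwise-Lagrangian derivation of $\widehat{\mathbf{Y}}$ from $\sup\{\sum_n\mathbb{E}[u_n(X^n+Y^n)]\mid\sum_nY^n=A\}$ is also valid and is the primal mirror of the paper's dual computation. Your approach is more self-contained and avoids manipulating the dual functional $K$, at the cost of taking existence and uniqueness (Theorems \ref{TH1}, \ref{TH2}) as black boxes; the paper's route has the advantage of showing \emph{where} the entropic $\widehat{Q}$ and the constant $\widehat{\lambda}$ come from, rather than verifying them post hoc. One minor omission worth stating explicitly: you should note that $\widehat{\mathbf{Y}}\in M^{\Phi}\subseteq\mathcal{L}$ (a one-line check using $\mathbf{X}\in M^{\Phi}$ and the exponential Orlicz structure), so that $\widehat{Y}^n$ is admissible in $U_n^{\widehat{Q}^n}$ and the concavity inequality $\mathbb{E}[u_n(X^n+Y)]\leq\mathbb{E}[u_n(X^n+\widehat{Y}^n)]+\lambda_n E_{\widehat{Q}^n}[Y-\widehat{Y}^n]$ is justified for all competitors $Y\in\mathcal{L}^n$.
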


\section{Proof of Theorem \protect\ref{TH1} and Theorem \protect\ref{TH2}}

\label{secproof}

We need to introduce the following concepts and notations:

\begin{enumerate}
\item The u{tility functions in Assumption \ref{A00} induce an Orlicz Space
structure: see Appendix \ref{secorlicz} for the details and the definitions
of the functions }$\Phi ${\ and }$\Phi ^{\ast }${, the Orlicz space }$%
L^{\Phi }${\ and the Orlicz Heart }$M^{\Phi }$. Here we just recall the
following inclusions among the Banach Spaces $L^{\infty }(\mathbf{P}%
)\subseteq M^{\Phi }\subseteq L^{\Phi }\subseteq L^{1}(\mathbf{P})$ and that 
$\frac{d\mathbf{Q}}{d\mathbb{P}}\in L^{\Phi ^{\ast }}$ implies $L^{\Phi
}\subseteq L^{1}(\mathbf{Q})$. From now on we assume that $\mathbf{X} \in
M^{\Phi}$.

\item {For any $A\in {\mathbb{R}}$ we set }%
\begin{equation*}
{\mathcal{B}_{A}:=\mathcal{B}\cap \{\mathbf{Y}\in (L^{0}(P))^{N}\mid
\sum_{n=1}^{N}Y^{n}\leq A\text{ }P\text{-}a.s.\}.}
\end{equation*}%
{Observe that $\mathcal{B}_{0}\cap M^{\Phi }$ is a convex cone. }

\item We {introduce the following problem for $\mathbf{X}\in M^{\Phi }$ and
for a vector of probability measures $\mathbf{Q}\ll P,$ with $\frac{\mathrm{d%
}\mathbf{Q}}{\mathrm{d}P}\in L^{\phi ^{\ast }}$, }


\begin{equation}
\pi ^{\mathbf{Q}}(A):=\sup \left\{ \sum_{n=1}^{N}\mathbb{E}\left[
u_{n}\left( X^{n}+Y^{n}\right) \right] \,\middle|\,\mathbf{Y}\in M^{\Phi
},\,\sum_{n=1}^{N}E_{{Q}^{n}}\left[ Y^{n}\right] \leq A\right\} .
\label{defpiQ}
\end{equation}%
Notice that in ( \ref{defpiQ}) the vector $\mathbf{Y}$ is not required to
belong to $\mathcal{C}_{\mathbb{R}}$, but only to the vector space $M^{\Phi
} $. In order to show the existence of the optimal solution to the problem $%
\pi ^{\mathbf{Q}}(A)$, it is necessary to enlarge the domain in (\ref{defpiQ}%
).

\item {$\mathcal{Q}$ is the set of vectors of probability measures defined
by 
\begin{equation*}
\mathcal{Q}:=\left\{ {\mathbf{Q}\ll P}\middle|\left[ \frac{\mathrm{d}Q^{1}}{%
\mathrm{d}P},\dots ,\frac{\mathrm{d}Q^{N}}{\mathrm{d}P}\right] \in L^{\Phi
^{\ast }},\,\sum_{n=1}^{N}\mathbb{E}\left[ Y^{n}\frac{\mathrm{d}Q^{n}}{%
\mathrm{d}P}\right] \leq 0,\,\forall \,Y\in \mathcal{B}_{0}\cap M^{\Phi
}\right\} .
\end{equation*}%
Identifying Radon-Nikodym derivatives and measures in the natural way, $%
\mathcal{Q}$ turns out to be the set of normalized (i.e. with componentwise
expectations equal to $1$), non negative vectors in \emph{the polar of} $%
\mathcal{B}_{0}\cap M^{\Phi }$ \emph{in the dual system} $(M^{\Phi },L^{\Phi
^{\ast }})$. In our }$N$-dimensional {systemic one-period setting, the set }$%
\mathcal{Q}$ plays the same crucial role as the set of martingale measures
in multiperiod stochastic securities markets.

\item {We introduce the following convex subset of $\mathcal{Q}$: 
\begin{equation}
\mathcal{Q}_{v}:=\mathcal{Q}\,\,\cap \,\,\left\{ \left[ \frac{\mathrm{d}Q^{1}%
}{\mathrm{d}P},\dots ,\frac{\mathrm{d}Q^{N}}{\mathrm{d}P}\right] \in L^{\phi
^{\ast }}\,\middle|\,\frac{\mathrm{d}Q^{n}}{\mathrm{d}P}\geq 0\,\forall
\,n\in \{1,\dots ,N\},\,\sum_{n=1}^{N}\mathbb{E}\left[ v_{n}\left( \frac{%
\mathrm{d}Q^{n}}{\mathrm{d}P}\right) \right] <+\infty \right\} .
\label{defqv}
\end{equation}%
}

\item {Set 
\begin{equation}
\mathcal{L}:=\bigcap_{\mathbf{Q}\in \mathcal{Q}_{v}}L^{1}(Q^{1})\times \dots
\times L^{1}(Q^{N}),\,\,\,\,\,\,\,\,\,\,\,\,\,\,\,\mathscr{Q}:=\mathcal{Q}%
_{v}.  \label{defLQ}
\end{equation}%
}
\end{enumerate}

{Note that }$M^{\Phi }\subseteq \mathcal{L}$ and that $\mathcal{L}$ has the product structure $\mathcal{L}=\mathcal{L}^1\times\dots\times\mathcal{L}^N$: let $\text{Proj}_n$ denote the projection on the $n-$th component, defined on $\mathcal{Q}_v$, and take the corresponding image $\mathcal{Q}_n:=\text{Proj}_n(\mathcal{Q}_v)$ (consisting of a family of probability measures, all absolutely continuous with respect to $\probp$). Set $\mathcal{L}^n:=\bigcap_{Q\in\mathcal{Q}_n}L^1(Q)$. Then $\mathcal{L}=\mathcal{L}^1\times\dots\times\mathcal{L}^N$.

 We will consider the
optimization problems \eqref{UQ},\thinspace \eqref{SQ} and \eqref{PiQ} with
the particular choice of $(\mathcal{L},\mathscr{Q})$ in (\ref{defLQ}) and
will show that, with such choice, $\pi ^{\mathbf{Q}}(A)=\Pi ^{\mathbf{Q}%
}(A). ${\ Observe that if all utilities are bounded from above, the
requirement $\sum_{n=1}^{N}\mathbb{E}\left[ v_{n}\left( \frac{\mathrm{d}Q^{n}%
}{\mathrm{d}P}\right) \right] <+\infty $ is redundant, but it becomes
important if we allow utilities to be unbounded. } 

We also require some additional definitions and notations:

\begin{enumerate}
\item[a)] {$\overline{\mathcal{B}_{0}}$ is the polar of the cone $co(%
\mathcal{Q}_{v})$ in the dual pair 
\begin{equation*}
\left( L^{\Phi _{1}^{\ast }}\times \dots \times L^{\Phi _{N}^{\ast
}}\,,\,\bigcap_{\mathbf{Q}\in \mathcal{Q}_{v}}L^{1}(Q^{1})\times \dots
\times L^{1}(Q^{N})\right) ,
\end{equation*}%
that is 
\begin{equation*}
\overline{\mathcal{B}_{0}}:=\left\{ \mathbf{Y}\in \bigcap_{\mathbf{Q}\in 
\mathcal{Q}_{v}}L^{1}(Q^{1})\times \dots \times L^{1}(Q^{N})\,\,\middle%
|\,\,\sum_{n=1}^{N}E_{{Q}^{n}}\left[ Y^{n}\right] \leq 0,\,\forall \,\mathbf{%
Q}\in \mathcal{Q}_{v}\right\} .
\end{equation*}%
}It is easy to verify that 
\begin{equation*}
\mathcal{B}_{0}\cap M^{\Phi }\subseteq \overline{\mathcal{B}_{0}}.
\end{equation*}

\item[b)] {For any $A\in {\mathbb{R}}$ we define $\overline{\mathcal{B}_{A}}$
as the set 
\begin{equation*}
\overline{\mathcal{B}_{A}}:=\left\{ \mathbf{Y}\in \bigcap_{Q\in \mathcal{Q}%
_{v}}L^{1}(Q^{1})\times \dots \times L^{1}(Q^{N})\,\,\middle%
|\,\,\sum_{n=1}^{N}E_{{Q}^{n}}\left[ Y^{n}\right] \leq A,\,\forall \,\mathbf{%
Q}\in \mathcal{Q}_{v}\right\}
\end{equation*}%
}We will prove that $\overline{\mathcal{B}_{A}}$ is the correct enlargement
of the domain $\mathcal{B}_{A}\cap M^{\Phi }$ in order to obtain the
existence of the optimal solution of the primal problem.

\item[c)] $\left\{ \mathbf{e_{i}}\right\} _{i=1,...,N}$ is the canonical
base of ${\mathbb{R}}^{N}$.
\end{enumerate}

\begin{lemma}
\label{rempolarisnice}In the dual pair $(M^{\Phi },L^{\Phi ^{\ast }}),$
consider the polar $(\mathcal{B}_{0}\cap M^{\Phi })^{0}$ of $\mathcal{B}%
_{0}\cap M^{\Phi }$. Then $(\mathcal{B}_{0}\cap M^{\Phi })^{0}\cap
(L_{+}^{0})^{N}$ is the cone generated by $\mathcal{Q}$.
\end{lemma}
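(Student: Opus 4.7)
The plan is to prove the identity by establishing both inclusions. Since $\mathcal{B}_{0}\cap M^{\Phi}$ is a convex cone, its polar in the dual pair $(M^{\Phi},L^{\Phi^{\ast}})$ is
\begin{equation*}
(\mathcal{B}_{0}\cap M^{\Phi})^{0}=\left\{\mathbf{Z}\in L^{\Phi^{\ast}}\,\middle|\,\sum_{n=1}^{N}\mathbb{E}[Y^{n}Z^{n}]\leq 0\text{ for every }\mathbf{Y}\in\mathcal{B}_{0}\cap M^{\Phi}\right\},
\end{equation*}
and, identifying each $\mathbf{Q}\ll P$ with its density vector, the goal is to show that intersecting this polar with $(L_{+}^{0})^{N}$ recovers precisely the conic hull $\{\lambda\mathbf{Q}\mid\lambda\geq 0,\,\mathbf{Q}\in\mathcal{Q}\}$.

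The inclusion $(\supseteq)$ is immediate from the definition of $\mathcal{Q}$: for any $\mathbf{Z}=\lambda\left(\frac{\mathrm{d}Q^{1}}{\mathrm{d}P},\dots,\frac{\mathrm{d}Q^{N}}{\mathrm{d}P}\right)$ with $\lambda\geq 0$ and $\mathbf{Q}\in\mathcal{Q}$, $\mathbf{Z}$ lies in $(L^{\Phi^{\ast}})^{N}\cap(L_{+}^{0})^{N}$ and satisfies $\sum_{n=1}^{N}\mathbb{E}[Y^{n}Z^{n}]=\lambda\sum_{n=1}^{N}E_{Q^{n}}[Y^{n}]\leq 0$ for every $\mathbf{Y}\in\mathcal{B}_{0}\cap M^{\Phi}$.

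For the reverse inclusion $(\subseteq)$, take a nonnegative $\mathbf{Z}$ in the polar. The key step is to show that the expectations $\mathbb{E}[Z^{n}]$ coincide across $n$. Since $\mathcal{B}$ is a nonempty convex cone with $\mathbb{R}^{N}+\mathcal{B}=\mathcal{B}$, one has $\mathbf{0}\in\mathcal{B}$, hence $\mathbb{R}^{N}\subseteq\mathcal{B}\subseteq L^{\infty}\subseteq M^{\Phi}$. For any $i\neq j$, the deterministic vector $\mathbf{e_{j}}-\mathbf{e_{i}}$ has zero component-sum and thus belongs to $\mathcal{B}_{0}\cap M^{\Phi}$; applying the polar inequality to $\pm(\mathbf{e_{j}}-\mathbf{e_{i}})$ yields $\mathbb{E}[Z^{j}]=\mathbb{E}[Z^{i}]$. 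Set $c:=\mathbb{E}[Z^{1}]=\cdots=\mathbb{E}[Z^{N}]\geq 0$. If $c=0$, non-negativity forces $\mathbf{Z}=\mathbf{0}$, which is $0\cdot\mathbf{Q}$ for any $\mathbf{Q}\in\mathcal{Q}$. If $c>0$, then $\mathbf{Z}/c$ is a vector of $P$-densities of probability measures, each still in $L^{\Phi^{\ast}}$, and clearly inherits the polar inequality from $\mathbf{Z}$; hence $\mathbf{Z}/c\in\mathcal{Q}$ and $\mathbf{Z}=c\cdot(\mathbf{Z}/c)$ belongs to the cone generated by $\mathcal{Q}$.

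The main (and only) delicate point is recognising that the test vectors $\pm(\mathbf{e_{j}}-\mathbf{e_{i}})$ are admissible in $\mathcal{B}_{0}\cap M^{\Phi}$: this requires only the stability property $\mathbb{R}^{N}+\mathcal{B}=\mathcal{B}$ of $\mathcal{B}$ together with the fact that deterministic vectors lie in $M^{\Phi}$. Once this equalization of expectations is in hand, the rest of the argument is a direct normalization of $\mathbf{Z}$ by the common expectation.
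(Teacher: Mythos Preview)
Your proof is correct and follows essentially the same route as the paper's: both use the test vectors $\mathbf{e_i}-\mathbf{e_j}\in\mathcal{B}_0\cap M^\Phi$ (available because $\mathbb{R}^N\subseteq\mathcal{B}$) to force all component expectations of any $\mathbf{Z}$ in the polar to coincide, then normalize. Your version is in fact more explicit than the paper's, which compresses the inclusion $(\supseteq)$ and the case split $c=0$ versus $c>0$ into the single line $(\mathcal{B}_0\cap M^\Phi)^0\cap(L_+^0)^N=\mathbb{R}_+\cdot\mathcal{Q}$.
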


\begin{proof}
From the definition of $\mathcal{B}_{0}$ and the fact that $\mathcal{B}$
contains all constant vectors, we may conclude that all vectors in ${\mathbb{%
R}}^{N}$ of the form $\mathbf{e_{i}}-\mathbf{e_{j}}$ belong to $\mathcal{B}%
_{0}\cap M^{\Phi }$. Then for all $\mathbf{Z}\in (\mathcal{B}_{0}\cap
M^{\Phi })^{0}$ and for all $i,j\in \{1,\dots ,N\}$ we must have: $\mathbb{E}%
\left[ Z^{i}\right] -\mathbb{E}\left[ Z^{j}\right] \leq 0$. As a
consequence, $\mathbf{Z}\in (\mathcal{B}_{0}\cap M^{\Phi })^{0}$ implies $%
\mathbb{E}\left[ Z^{1}\right] =\dots =\mathbb{E}\left[ Z^{N}\right] $ and so 
\begin{equation}
(\mathcal{B}_{0}\cap M^{\Phi })^{0}\cap (L_{+}^{0})^{N}={\mathbb{R}}%
_{+}\cdot \mathcal{Q},  \label{polarityisnice}
\end{equation}%
where ${\mathbb{R}}_{+}:=\{b\in {\mathbb{R}},b\geq 0\}$.
\end{proof}

\begin{lemma}
\label{remqvok}$\mathcal{Q}_{v}^{e}:=\{\mathbf{Q}\in \mathcal{Q}_{v}\text{
s.t. }\mathbf{Q}\sim P\}\neq \varnothing .$
\end{lemma}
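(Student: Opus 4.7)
The plan is simply to exhibit an explicit element, namely the constant choice $\mathbf{Q}^\ast := (P, P, \ldots, P)$, whose density vector is $(1, 1, \ldots, 1)$. Trivially $\mathbf{Q}^\ast \sim P$, so the entire task reduces to checking $\mathbf{Q}^\ast \in \mathcal{Q}_v$.

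For membership in $\mathcal{Q}$, the density vector is non-negative, normalized (each component has $P$-expectation $1$) and bounded, so it lies in $L^{\Phi^\ast}$, which contains $L^\infty(P)$ by the standard inclusions for Orlicz spaces recalled at the beginning of this section. The polar condition is immediate from the definition of $\mathcal{B}_0$: if $\mathbf{Y} \in \mathcal{B}_0 \cap M^\Phi$, then $\sum_{n=1}^N Y^n \leq 0$ $P$-a.s., whence
$$\sum_{n=1}^N E_{Q^{\ast n}}[Y^n] \,=\, \mathbb{E}\!\left[\sum_{n=1}^N Y^n\right] \,\leq\, 0.$$

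The only remaining condition defining $\mathcal{Q}_v$ is $\sum_{n=1}^N \mathbb{E}[v_n(\mathrm{d}Q^{\ast n}/\mathrm{d}P)] = \sum_{n=1}^N v_n(1) < +\infty$, which reduces to $v_n(1) < +\infty$ for each $n$. This is where Assumption \ref{A00}(a) enters: the limits $\lim_{x\to -\infty} u_n(x)/x = +\infty$ and $\lim_{x\to +\infty} u_n(x)/x = 0$, combined with concavity and differentiability of $u_n$, force $u_n'$ to be a continuous surjection $\mathbb{R}\to (0,+\infty)$. Hence there exists $x^\ast\in\mathbb{R}$ with $u_n'(x^\ast) = 1$, and $v_n(1) = u_n(x^\ast) - x^\ast \in \mathbb{R}$. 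The whole argument is elementary once one notices that the reference measure $P$ itself qualifies as a component, and I do not anticipate any real obstacle; the only point that requires a line of justification is the finiteness of $v_n(1)$, which is a direct consequence of the Inada-type conditions.
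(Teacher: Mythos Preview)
Your proof is correct. The paper's argument follows the same underlying idea---exhibiting an equivalent measure with equal, bounded densities---but does so less directly: it first observes that any vector of the form $(Z,\dots,Z)$ with $Z\in L^{\infty}_{+}$ lies in the polar of $\mathcal{C}_{\mathbb{R}}\cap M^{\Phi}\cap\{\sum_n Y^n\leq 0\}\supseteq \mathcal{B}_0\cap M^{\Phi}$, and then takes densities $\frac{\varepsilon+Z}{1+\varepsilon}$ with $\mathbb{E}[Z]=1$, $\varepsilon>0$, to obtain a whole family of elements of $\mathcal{Q}_v^e$. Your choice $\mathbf{Q}^\ast=(P,\dots,P)$ is simply the case $Z=1$ of that construction, so the two proofs coincide in spirit; yours is more economical for the bare statement and also more explicit about why $v_n(1)<+\infty$, a point the paper leaves implicit.
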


\begin{proof}
The condition $\mathcal{B}\subseteq \mathcal{C}_{\mathbb{R}}$ implies $%
\mathcal{B}_{0}\cap M^{\Phi }\subseteq (\mathcal{C}_{\mathbb{R}}\cap M^{\Phi
}\cap \{\sum_{n=1}^{N}Y^{n}\leq 0\})$, so that the polars satisfy the
opposite inclusion: $(\mathcal{C}_{\mathbb{R}}\cap M^{\Phi }\cap
\{\sum_{n=1}^{N}Y^{n}\leq 0\})^{0}\subseteq (\mathcal{B}_{0}\cap M^{\Phi
})^{0}$. Observe now that any vector $(Z,\dots ,Z)$, for $Z\in L_{+}^{\infty
}$, belongs to $(\mathcal{C}_{\mathbb{R}}\cap M^{\Phi }\cap
\{\sum_{n=1}^{N}Y^{n}\leq 0\})^{0}$. In particular, $(\mathcal{B}_{0}\cap
M^{\Phi })^{0}$ contains vectors in the form $\left(\frac{\varepsilon +Z}{%
1+\varepsilon },\dots ,\frac{\varepsilon +Z}{1+\varepsilon }\right)$ with $%
\varepsilon >0$ and $Z\in L_{+}^{\infty }$, $\mathbb{E}\left[ Z\right] =1$.
Each component of such a vector has expectation equal to $1$, belongs to $%
L_{+}^{\infty }$ and satisfies $\frac{\varepsilon +Z}{1+\varepsilon }\geq 
\frac{\varepsilon }{1+\varepsilon }$. All these conditions imply that there
exists a probability vector $\mathbf{Q\in }\mathcal{Q}$ such that $\frac{%
\mathrm{d}\mathbf{Q}}{\mathrm{d}P}>0$ $P-a.s.$ with $\sum_{n=1}^{N}\mathbb{E}%
\left[ v_{n}\left( \frac{\mathrm{d}Q^{n}}{\mathrm{d}P}\right) \right]
<\infty $, hence $\mathcal{Q}_{v}^{e}\neq \varnothing $.
\end{proof}

\subsection{Scheme of the proof\label{SecScheme}}

The proof of Theorem \ref{TH1} is inspired by the classical duality theory
in utility maximization, see for example \cite{CK} and \cite{KLSX} and by
the minimax approach developed in \cite{BF02}. More precisely, our road map
will be the following:

\begin{enumerate}
\item First we show, in Remark \ref{remarkfrom0toA}, how we may reduce the
problem to the case $A=0.$

\item {We consider }%
\begin{equation}
\pi (A):=\sup \left\{ \sum_{n=1}^{N}\mathbb{E}\left[ u_{n}\left(
X^{n}+Y^{n}\right) \right] \,\middle|\,\mathbf{Y}\in M^{\Phi }\cap \mathcal{B%
},\,\sum_{n=1}^{N}\ Y^{n}\leq A\text{ }P\text{-}a.s.\right\}.  \label{defpi}
\end{equation}%
In Theorem \ref{thmoptimumexists} we specialize the duality, obtained in
Theorem \ref{thmminimax} for a generic convex cone $\mathcal{C}$, for the
maximization problem $\pi (0)$ over the convex cone $\mathcal{C}=${$\mathcal{%
B}_{0}\cap $}$M^{\Phi }$ and prove: (i) the existence of the optimizer $%
\widehat{\mathbf{Y}}$ of $\pi (0)$, which belongs to {$\mathcal{B}_{0}$};
(ii) the existence of the optimizer $\widehat{\mathbf{Q}}$ to the dual
problem of $\pi (0)$. Here we need all the assumptions on the utility
functions and on the set {$\mathcal{B}$ and an auxiliary result stated in
Theorem \ref{thmweirdclosure} in Appendix.}

\item Proposition \ref{propfairprob} will show that also the elements in the
closure of {$\mathcal{B}\cap $}$M^{\Phi }$ satisfy the key condition $%
\sum_{n=1}^{N}E_{{Q}^{n}}\left[ Y^{n}\right] \leq \sum_{n=1}^{N}Y^{n}\in 
\mathbb{R}$ for all $Q\in \mathscr{Q}$.

\item Theorem \ref{thmminimax} is then again applied, to a different set $%
\mathcal{C=}\left\{ \mathbf{Y}\in M^{\Phi }\mid \sum_{n=1}^{N}E_{{Q}^{n}}%
\left[ Y^{n}\right] \leq 0\right\} $, to derive Proposition \ref{proppiq},
which establishes the duality for $\pi ^{\mathbf{Q}}(0)$ and $\pi ^{\mathbf{Q%
}}(A)$ in case a fixed probability vector $\mathbf{Q}$ is assigned.

\item The minimax duality: 
\begin{equation*}
\pi (A)=\min_{\mathbf{Q}\in \mathcal{Q}_{v}}\pi ^{\mathbf{Q}}(A)=\pi ^{%
\widehat{\mathbf{Q}}}(A),
\end{equation*}%
is then a simple consequence of the above results (see Corollary \ref%
{propminimum}). This duality is the key tool to prove the existence of a
SORTE (see Theorem \ref{thmsorteexistsA}).

\item Uniqueness and Pareto optimality are then proved in Theorem \ref%
{thmsorteuniqueA}.
\end{enumerate}

\begin{remark}
\label{remQ}Notice that in the definition of $\pi (A)$ there is no reference
to a probability vector $\mathbf{Q}$. However, the optimizer of the dual
formulation of $\pi (A)$ is a probability vector $\widehat{\mathbf{Q}}$
(that will be the equilibrium pricing vector in the SORTE). Even if in the
equations (\ref{Problem}), (\ref{ProblemY}), (\ref{BB}) we do not a priori
require pricing functional of the form $p^{n}(\cdot )=E_{Q^{n}}[\cdot ]$,
this particular linear expression naturally appears from the dual
formulation.
\end{remark}

\subsection{Minimax Approach}

\begin{remark}
\label{remarkfrom0toA}Only in this Remark, we need to change the notation a
bit: we make the dependence of our maximization problems on the initial
point explicit. To this end we will write%
\begin{equation*}
\pi _{\mathbf{X}}(A):=\sup \left\{ \sum_{j=1}^{N}\mathbb{E}\left[
u_{j}\left( X^{j}+Y^{j}\right) \right] \,\middle|\,\mathbf{Y}\in \mathcal{B}%
_{A}\cap M^{\Phi }\right\}\,,
\end{equation*}%
\begin{equation*}
\pi _{\mathbf{X}}^{\mathbf{Q}}(A):=\sup \left\{ \sum_{j=1}^{N}\mathbb{E}%
\left[ u_{j}\left( X^{j}+Y^{j}\right) \right] \,\middle|\,\mathbf{Y}\in
M^{\Phi },\,\sum_{j=1}^{N}E_{{Q}^{j}}\left[ Y^{j}\right] \leq A\right\} \,.
\end{equation*}%
It is possible to reduce the maximization problem expressed by $\pi _{%
\mathbf{X}}(A)$ (and similarly by $\pi _{\mathbf{X}}^{Q}(A)$) to the problem
related to $\pi _{\cdot }(0)$ (respectively, $\pi _{\cdot }^{Q}(0))$ by
using the following simple observation: for any $\mathbf{a_{0}}\in {\mathbb{R%
}}^{N} $ with $\sum_{j=1}^{N}a_{0}^{j}=A$ consider 
\begin{eqnarray*}
\pi _{\mathbf{X}}(A) &=&\sup \left\{ \sum_{j=1}^{N}\mathbb{E}\left[
u_{j}\left( X^{j}+Y^{j}\right) \right] \middle|\mathbf{Y}\in \mathcal{B}\cap
M^{\Phi },\sum_{j=1}^{N}Y^{j}\leq A\right\} \\
&=&\sup \left\{ \sum_{j=1}^{N}\mathbb{E}\left[ u_{j}\left(
X^{j}+a_{0}^{j}+(Y^{j}-a_{0}^{j})\right) \right] \middle|\mathbf{Y}\in 
\mathcal{B}\cap M^{\Phi },\sum_{j=1}^{N}\left( Y^{j}-a_{0}^{j}\right) \leq
0\right\} \\
&=&\sup \left\{ \sum_{j=1}^{N}\mathbb{E}\left[ u_{j}\left(
X^{j}+a_{0}^{j}+Z^{j}\right) \right] \middle|\mathbf{Z}\in \mathcal{B}%
_{0}\cap M^{\Phi }\right\} ,
\end{eqnarray*}%
where last equality holds as we are assuming that ${\mathbb{R}}^{N}+\mathcal{%
B}=\mathcal{B}$. The last line represents the original problem, but with $%
A=0 $ and a different initial point. This fact will be used in the
conclusion of the proof of Theorem \ref{thmoptimumexists}.
\end{remark}

In the following Theorem we follow a minimax procedure inspired by the
technique adopted in \cite{bf}.

\begin{theorem}
\label{thmoptimumexists} 

Under Assumption \ref{A00} we have 
\begin{equation}
\pi (A):=\sup_{\mathbf{Y}\in \mathcal{B}_{A}\cap M^{\Phi }}\sum_{j=1}^{N}%
\mathbb{E}\left[ u_{j}\left( X^{j}+Y^{j}\right) \right] =\max_{\mathbf{Y}\in 
\overline{\mathcal{B}_{A}}}\sum_{j=1}^{N}\mathbb{E}\left[ u_{j}\left(
X^{j}+Y^{j}\right) \right]  \label{eqminimaxapplied2}
\end{equation}%
\begin{equation}
=\min_{\mathbf{Q}\in \mathcal{Q}}\min_{\lambda \in {\mathbb{R}}_{++}}\left(
\lambda \left( \sum_{j=1}^{N}E_{{Q}^{j}}\left[ X^{j}\right] +A\right)
+\sum_{j=1}^{N}\mathbb{E}\left[ v_{j}\left( \lambda \frac{\mathrm{d}Q^{j}}{%
\mathrm{d}P}\right) \right] \right) .  \label{eqminimaxapplied3}
\end{equation}%
The minimization problem in \eqref{eqminimaxapplied3} admits a unique
optimum $(\widehat{\lambda },\widehat{\mathbf{Q}})\in {\mathbb{R}}%
_{++}\times \mathcal{Q}$ with $\widehat{\mathbf{Q}}\sim P$. The maximization
problem in (\ref{eqminimaxapplied2}) admits a unique optimum $\widehat{%
\mathbf{Y}}\in \overline{\mathcal{B}_{A}}$, given by 
\begin{equation}
\widehat{Y}^{j}=-X^{j}-v_{j}^{\prime }\left( \widehat{\lambda }\frac{\mathrm{%
d}\widehat{Q}^{j}}{\mathrm{d}P}\right) ,\text{ }j=1,...,N\text{,}  \label{YY}
\end{equation}%
which belongs to $\mathcal{B}_{A}$. In addition,%
\begin{equation}
\sum_{j=1}^{N}E_{\widehat{Q}^{j}}\left[ \widehat{Y}^{j}\right] =A\quad \text{%
and\quad }\sum_{j=1}^{N}E_{Q^{j}}\left[ \widehat{Y}^{j}\right] \leq A\quad
\forall \mathbf{Q}\in \mathcal{Q}_{v}\text{.}  \label{EQ}
\end{equation}
\end{theorem}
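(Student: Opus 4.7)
The plan is to follow the six-step scheme laid out in Section \ref{SecScheme}. First, by Remark \ref{remarkfrom0toA}, a translation $Y^j = Z^j + a_0^j$ with $\mathbf{a}_0 \in \mathbb{R}^N$, $\sum_j a_0^j = A$, reduces the problem to the case $A = 0$ (with a shifted endowment $\mathbf{X} + \mathbf{a}_0 \in M^\Phi$), thanks to $\mathbb{R}^N + \mathcal{B} = \mathcal{B}$. All the subsequent steps I carry out for $\pi(0)$, then reverse the shift.

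Weak duality would come from the Fenchel inequality $u_j(x) \leq v_j(y) + xy$ applied at $y = \lambda\, dQ^j/dP$: for $\mathbf{Y} \in \mathcal{B}_0 \cap M^\Phi$, $\mathbf{Q} \in \mathcal{Q}$, $\lambda > 0$, summing, taking $\mathbb{E}[\cdot]$, and using Lemma \ref{rempolarisnice} (which places $(\mathrm{d}Q^j/\mathrm{d}P)_j$ in the polar of $\mathcal{B}_0 \cap M^\Phi$) yields
\begin{equation*}
\sum_{j=1}^N \mathbb{E}[u_j(X^j + Y^j)] \leq \lambda \sum_{j=1}^N E_{Q^j}[X^j] + \sum_{j=1}^N \mathbb{E}\!\left[v_j\!\left(\lambda \tfrac{\mathrm{d}Q^j}{\mathrm{d}P}\right)\right].
\end{equation*}
For strong duality and existence of a dual optimizer $(\hat\lambda, \hat{\mathbf{Q}})$, I would adapt the minimax approach of \cite{bf} and invoke Theorem \ref{thmminimax} applied to the cone $\mathcal{C} = \mathcal{B}_0 \cap M^\Phi$. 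Condition \eqref{RAE} ensures that the dual functional is finite on $\mathcal{Q}_v$ (non-empty by Lemma \ref{remqvok}); the growth conditions $\lim_{x\to-\infty}u_n(x)/x = +\infty$ and $\lim_{x\to+\infty}u_n(x)/x = 0$ translate into $v_n'(0^+) = -\infty$ and $v_n'(+\infty) = +\infty$, which give coercivity of the dual objective in $\lambda$. A Komlós-type extraction along a minimizing sequence (forward convex combinations converging $P$-a.s.) combined with Fatou's lemma and convexity of each $v_j$ produces an optimizer $(\hat\lambda, \hat{\mathbf{Q}}) \in \mathbb{R}_{++} \times \mathcal{Q}$. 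Uniqueness is forced by strict convexity of the $v_j$'s, and $\hat{\mathbf{Q}} \sim P$ by the standard perturbation argument: mixing any candidate vanishing on a positive-$P$ set with a $\mathbf{Q}^\ast \in \mathcal{Q}_v^e$ from Lemma \ref{remqvok} strictly decreases the dual value because $v_j'(0^+) = -\infty$.

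The primal maximizer is then constructed explicitly from the dual optimizer: setting $\hat{Y}^j := -X^j - v_j'\!\bigl(\hat\lambda\, \mathrm{d}\hat{Q}^j/\mathrm{d}P\bigr)$ upgrades the Fenchel inequality to pointwise equality, so
\begin{equation*}
\sum_{j=1}^N \mathbb{E}[u_j(X^j + \hat{Y}^j)] = \hat\lambda\sum_{j=1}^N E_{\hat{Q}^j}[\hat{Y}^j] + \sum_{j=1}^N\mathbb{E}\!\left[v_j\!\left(\hat\lambda \tfrac{\mathrm{d}\hat{Q}^j}{\mathrm{d}P}\right)\right] + \hat\lambda\sum_{j=1}^N E_{\hat{Q}^j}[X^j].
\end{equation*}
The first-order conditions for the dual problem in the variable $\lambda$ and in the direction of $\mathbf{Q}$ give respectively $\sum_j E_{\hat{Q}^j}[\hat{Y}^j] = A$ and $\sum_j E_{Q^j}[\hat{Y}^j] \leq A$ for every $\mathbf{Q} \in \mathcal{Q}_v$ (since $\mathcal{Q}_v$ is convex), establishing both \eqref{EQ} and $\hat{\mathbf{Y}} \in \overline{\mathcal{B}_A}$, and matching primal and dual values. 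Uniqueness of $\hat{\mathbf{Y}}$ follows from strict concavity of the $u_j$'s.

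The main obstacle, and the reason for enlarging the feasible set from $\mathcal{B}_A \cap M^\Phi$ to $\overline{\mathcal{B}_A}$, is showing that the constructed optimizer actually lies back in $\mathcal{B}_A$, i.e.\ that $\sum_j \hat{Y}^j \in \mathbb{R}$ and $\mathbf{Y} \in \mathcal{B}$. A priori $\overline{\mathcal{B}_A}$ is a bipolar-type enlargement in the Orlicz dual system $(M^\Phi, L^{\Phi^\ast})$, whereas $\mathcal{B}$ is only assumed closed in probability. This identification is exactly what Theorem \ref{thmweirdclosure} in the Appendix is designed to provide: it gives, via the equivalence $\hat{\mathbf{Q}} \sim P$ established above together with the cone/closedness structure of $\mathcal{B}$, that the candidate $\hat{\mathbf{Y}}$ belongs to $\mathcal{B}_A$. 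The existence assertion $\pi(0) = \max$ then follows from attainment in $\overline{\mathcal{B}_0}$ coupled with this identification, completing the chain of equalities in \eqref{eqminimaxapplied2}--\eqref{eqminimaxapplied3}.
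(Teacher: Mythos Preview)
Your plan is essentially the paper's own proof, and the skeleton---reduction to $A=0$, Theorem~\ref{thmminimax} on $\mathcal{C}=\mathcal{B}_0\cap M^\Phi$, construction of $\widehat{\mathbf{Y}}$ from the dual optimizer via $-v_j'$, first-order conditions in $\lambda$ and $\mathbf{Q}$ to get \eqref{EQ}, and Theorem~\ref{thmweirdclosure} to pull $\widehat{\mathbf{Y}}$ back into $\mathcal{B}_A$---matches step for step.

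Two points of divergence are worth flagging. First, the Koml\'os extraction you propose for the dual minimizer is unnecessary: Theorem~\ref{thmminimax} already produces a \emph{minimum} (not an infimum), because its proof rests on the Extended Namioka--Klee Theorem, which yields attainment directly. Second, your argument for $\widehat{\mathbf{Q}}\sim P$ differs from the paper's. You propose the standard perturbation: mix $\widehat{\mathbf{Q}}$ with some $\mathbf{Q}^\ast\in\mathcal{Q}_v^e$ and use $v_j'(0^+)=-\infty$ to force a strict decrease in the dual value. The paper instead reaches $\widehat{\mathbf{Q}}\sim P$ \emph{after} establishing the integrability $v_j'(\widehat\lambda\,\mathrm{d}\widehat{Q}^j/\mathrm{d}P)\in L^1(Q^j)$ for every $\mathbf{Q}\in\mathcal{Q}_v^e$ (this integrability is the delicate part of the paper's Step~3, and you should not gloss over it when writing out the first-order condition in the $\mathbf{Q}$-direction): since $v_j'(0^+)=-\infty$, finiteness of this integral for an equivalent $Q^j$ forces $\mathrm{d}\widehat{Q}^j/\mathrm{d}P>0$ a.s. Both routes are valid; yours is arguably cleaner, while the paper's falls out naturally from machinery it has already built. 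Finally, note that the paper obtains uniqueness of $(\widehat\lambda,\widehat{\mathbf{Q}})$ not directly from strict convexity of the $v_j$ but by first proving uniqueness of the primal optimum (strict concavity of the $u_j$) and then inverting via injectivity of $v_j'$; your strict-convexity argument also works once you observe that the normalization $\mathbb{E}[\mathrm{d}Q^j/\mathrm{d}P]=1$ removes any scaling ambiguity.
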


\begin{proof}
$\,$ We first prove the result for the case $A=0$.

\textbf{STEP 1}

We first show that 
\begin{equation}
\sup_{\mathcal{B}_{0}\cap M^{\Phi }}\sum_{j=1}^{N}\mathbb{E}\left[
u_{j}\left( X^{j}+Y^{j}\right) \right] <\sum_{j=1}^{N}v_{j}(0)=%
\sum_{j=1}^{N}u_{j}(+\infty )\,\,\,\forall \,\mathbf{X}\in \,M^{\Phi }
\label{eqlessthansum}
\end{equation}%
so that we will be able to apply Theorem \ref{thmminimax} with the choice $%
\mathcal{C}:=\mathcal{B}_{0}\cap M^{\Phi }$. We distinguish two possible
cases: $\sum_{j=1}^Nu_j(+\infty)=+\infty$ or $\sum_{j=1}^Nu_j(+\infty)<+%
\infty$.

For $\sum_{j=1}^Nu_j(+\infty)=+\infty$: observe that for any $\mathbf{Q}\in 
\mathcal{Q}_{v}$ (which is nonempty by Lemma \ref{remqvok}) and $\lambda >0$
we have 
\begin{eqnarray*}
\sum_{j=1}^{N}\mathbb{E}\left[ u_{j}\left( X^{j}+Y^{j}\right) \right] &\leq
&\sum_{j=1}^{N}\mathbb{E}\left[ (X^{j}+Y^{j})\left( \lambda \frac{\mathrm{d}%
Q^{j}}{\mathrm{d}P}\right) \right] +\sum_{j=1}^{N}\mathbb{E}\left[
v_{j}\left( \lambda \frac{\mathrm{d}Q^{j}}{\mathrm{d}P}\right) \right] \\
&\leq &\sum_{j=1}^{N}\mathbb{E}\left[ X^{j}\left( \lambda \frac{\mathrm{d}%
Q^{j}}{\mathrm{d}P}\right) \right] +\sum_{j=1}^{N}\mathbb{E}\left[
v_{j}\left( \lambda \frac{\mathrm{d}Q^{j}}{\mathrm{d}P}\right) \right] .
\end{eqnarray*}%
We exploited above Fenchel's Inequality and the definition of $\mathcal{Q}%
_{v}$. Observing that the last line does not depend on $\mathbf{Y}$ and is
finite, and using the well known relation $v_j(0)=u_j(+\infty),j=1,\dots,N$,
we conclude that 
\begin{equation*}
\sup_{\mathcal{B}_{0}\cap M^{\Phi }}\sum_{j=1}^{N}\mathbb{E}\left[%
u_{j}\left( X^{j}+Y^{j}\right) \right] <+\infty=\sum_{j=1}^{N}v_{j}(0).
\end{equation*}

For $\sum_{j=1}^Nu_j(+\infty)<+\infty$: if the inequality in %
\eqref{eqlessthansum} were not strict, for any maximizing sequence $(\mathbf{%
Y}_{m})_{m}$ we would have, by monotone convergence, that 
\begin{equation*}
\sum_{j=1}^{N}\mathbb{E}\left[ u_{j}\left( +\infty \right) \right]
-\sum_{j=1}^{N}\mathbb{E}\left[ u_{j}\left( X^{j}+Y_{m}^{j}\right) \right] =%
\mathbb{E}\left[ \left\vert \sum_{j=1}^{N}\left( u_{j}(+\infty
)-u_{j}(X^{j}+Y_{m}^{j})\right) \right\vert \right] \xrightarrow[m]{}0.
\end{equation*}%
Up to taking a subsequence we can assume the convergence is also almost
sure. Since all the terms in $\sum_{j=1}^{N}\left( u_{j}(+\infty
)-u_{j}(X^{j}+Y_{m}^{j})\right) $ are non negative, we also see that $%
u_{j}(X^{j}+Y_{m}^{j})\rightarrow _{m}u_{j}(+\infty )$ almost surely for
every $j=1,\dots ,N$. By strict monotonicity of the utilities, this would
imply that, for each $j,$ $Y_{m}^{j}\rightarrow _{m}+\infty $. This clearly
contradicts the constraint $\mathbf{Y}_{m}\in \mathcal{B}_{0}$. $\,$

\textbf{STEP 2}

We will prove equations \eqref{eqminimaxapplied2} and %
\eqref{eqminimaxapplied3}, with a supremum over $\overline{\mathcal{B}_A}$
in place of a maximum, since we will show in later steps (STEP 4) that this
supremum is in fact a maximum.

We observe that since $\mathcal{B}_{0}\cap M^{\Phi }\subseteq \overline{%
\mathcal{B}_{0}}$ 
\begin{equation*}
\sup_{\mathcal{B}_{0}\cap M^{\Phi }}\sum_{j=1}^{N}\mathbb{E}\left[
u_{j}\left( X^{j}+Y^{j}\right) \right] \leq \sup_{\overline{\mathcal{B}_{0}}%
}\sum_{j=1}^{N}\mathbb{E}\left[ u_{j}\left( X^{j}+Y^{j}\right) \right].
\end{equation*}%
Moreover, by the Fenchel inequality%
\begin{equation*}
\sup_{\overline{\mathcal{B}_{0}}}\sum_{j=1}^{N}\mathbb{E}\left[ u_{j}\left(
X^{j}+Y^{j}\right) \right] \leq \inf_{\lambda \in {\mathbb{R}}_{+},\,\mathbf{%
Q}\in \mathcal{Q}}\left( \lambda \sum_{j=1}^{N}E_{{Q}^{j}}\left[ X^{j}\right]
+\sum_{j=1}^{N}\mathbb{E}\left[ v_{j}\left( \lambda \frac{\mathrm{d}Q^{j}}{%
\mathrm{d}P}\right) \right] \right) .
\end{equation*}%
Equations \eqref{eqminimaxapplied2} and \eqref{eqminimaxapplied3} follow
from Theorem \ref{thmminimax} replacing there the convex cone $\mathcal{C}$
with $\mathcal{B}_{0}\cap M^{\Phi }$ and using equation %
\eqref{polarityisnice}, which shows that $(\mathcal{C}_{1}^{0})^{+}=\mathcal{%
Q}$.

\textbf{STEP 3}

We prove that if $\widehat{\lambda }$ and $\widehat{\mathbf{Q}}$ are optima
in equation \eqref{eqminimaxapplied3}, then $\widehat{Y}^{j}:=-X^{j}-v_{j}^{%
\prime }\left( \widehat{\lambda }\frac{\mathrm{d}\widehat{Q}^{j}}{\mathrm{d}P%
}\right) $ defines an element in $\overline{\mathcal{B}_{0}}$. Observe that $%
\widehat{\lambda }$ minimizes the function 
\begin{equation*}
{\mathbb{R}}_{++}\ni \gamma \mapsto \psi (\gamma ):=\sum_{j=1}^{N}\left(
\gamma E_{\widehat{Q}^{j}}\left[ X^{j}\right] +\mathbb{E}\left[ v_{j}\left(
\gamma \frac{\mathrm{d}\widehat{Q}^{j}}{\mathrm{d}P}\right) \right] \right)
\end{equation*}%
which is real valued and convex. Also we have by Monotone Convergence
Theorem and Lemma \ref{lemmaintegra}.1. that the right and left
derivatives, which exist by convexity, satisfy 
\begin{equation*}
\frac{\mathrm{d}^{\pm }\psi }{\mathrm{d}\gamma }(\gamma )=\sum_{j=1}^{N}%
\mathbb{E}\left[ X^{j}\frac{\mathrm{d}\widehat{Q}^{j}}{\mathrm{d}P}\right]
+\sum_{j=1}^{N}\mathbb{E}\left[ v_{j}^{\prime }\left( \gamma \frac{\mathrm{d}%
\widehat{Q}^{j}}{\mathrm{d}P}\right) \frac{\mathrm{d}\widehat{Q}^{j}}{%
\mathrm{d}P}\right],
\end{equation*}%
hence the function is differentiable. Since $\widehat{\lambda }$ is a
minimum for $\psi $, this implies $\psi ^{\prime }(\widehat{\lambda })=0$,
which can be rephrased as%
\begin{equation}
\sum_{j=1}^{N}\left( \mathbb{E}\left[ X^{j}\frac{\mathrm{d}\widehat{Q}^{j}}{%
\mathrm{d}P}\right] +\mathbb{E}\left[ v_{j}^{\prime }\left( \widehat{\lambda 
}\frac{\mathrm{d}\widehat{Q}^{j}}{\mathrm{d}P}\right) \frac{\mathrm{d}%
\widehat{Q}^{j}}{\mathrm{d}P}\right] \right) =0,  \label{Foclambda}
\end{equation}%
i.e.,%
\begin{equation}
\sum_{j=1}^{N}E_{\widehat{Q}^{j}}\left[ \widehat{Y}^{j}\right] =0.
\label{QY0}
\end{equation}%
Now consider minimizing 
\begin{equation*}
\mathbf{Q}\mapsto \sum_{j=1}^{N}\left( \widehat{\lambda }E_{{Q}^{j}}\left[
X^{j}\right] +\mathbb{E}\left[ v_{j}\left( \widehat{\lambda }\frac{\mathrm{d}%
Q^{j}}{\mathrm{d}P}\right) \right] \right)
\end{equation*}%
for fixed $\widehat{\lambda }$ and $\mathbf{Q}$ varying in $\mathcal{Q}_{v}$%
. Let again $\widehat{\mathbf{Q}}$, with $\widehat{\mathbf{\eta}}:=\frac{%
\mathrm{d}\widehat{\mathbf{Q}}}{\mathrm{d}P}$, be an optimum and consider
another $\mathbf{Q}\in \mathcal{Q}_{v}$, with ${\mathbf{\eta}}:=\frac{%
\mathrm{d}\mathbf{Q}}{\mathrm{d}P}$. By Assumption \ref{A00}, the expression 
$\sum_{j=1}^{N}\mathbb{E}\left[ v_{j}\left( \lambda \frac{\mathrm{d}Q^{j}}{%
\mathrm{d}P}\right) \right] $ is finite for all choices of $\lambda $.
Observe that by convexity and differentiability of $v_{j}$ we have 
\begin{equation*}
\widehat{\lambda }\eta ^{j}v_{j}^{\prime }\left( \widehat{\lambda }\widehat{%
\eta }^{j}\right) \leq \widehat{\lambda }\widehat{\eta }^{j}v_{j}^{\prime
}\left( \widehat{\lambda }\widehat{\eta }^{j}\right) +v_{j}\left( \widehat{%
\lambda }\eta ^{j}\right) -v_{j}\left( \widehat{\lambda }\widehat{\eta }%
^{j}\right)\,.
\end{equation*}%
Hence by Lemma \ref{lemmaintegra}.1. and $\widehat{\mathbf{Q}},\mathbf{Q}%
\in \mathcal{Q}_{v}$ we conclude that 
\begin{equation}
\left( \eta ^{j}v_{j}^{\prime }\left( \widehat{\lambda }\widehat{\eta }%
^{j}\right) \right) ^{+}\in L^{1}(P).  \label{eqpospartintegrable}
\end{equation}%
To prove that also the negative part is integrable, we take a convex
combination of $\widehat{\mathbf{Q}},\mathbf{Q}\in \mathcal{Q}_{v}$, which
still belongs to $\mathcal{Q}_{v}$. By optimality of $\widehat{\eta }$ the
function 
\begin{equation*}
x\mapsto \varphi (x):=\sum_{j=1}^{N}\left( \widehat{\lambda }\mathbb{E}\left[
X^{j}\left( (1-x)\widehat{\eta }^{j}+x\eta ^{j}\right) \right] +\mathbb{E}%
\left[ v_{j}\left( \widehat{\lambda }\left( (1-x)\widehat{\eta }^{j}+x\eta
^{j}\right) \right) \right] \right) \text{, }0\leq x\leq 1\text{,}
\end{equation*}%
has a minimum at $0$, thus the right derivative of $\varphi $ at $0$ must be
non negative, so that: 
\begin{equation}
\sum_{j=1}^{N}\frac{\mathrm{d}}{\mathrm{d}x}\Big|_{0}\left( (1-x)\widehat{%
\lambda }\mathbb{E}\left[ X^{j}\widehat{\eta }^{j}\right] +x\widehat{\lambda 
}\mathbb{E}\left[ X^{j}\eta ^{j}\right] \right) \geq -\sum_{j=1}^{N}\frac{%
\mathrm{d}}{\mathrm{d}x}\Big|_{0}\mathbb{E}\left[ v_{j}\left( (1-x)\widehat{%
\lambda }\widehat{\eta }^{j}+x\widehat{\lambda }\eta ^{j}\right) \right] .
\label{eqfocQbad}
\end{equation}%
Define $H_{j}(x):=v_{j}\left( (1-x)\widehat{\lambda }\widehat{\eta }^{j}+x%
\widehat{\lambda }\eta ^{j}\right) $ and observe that as $x\downarrow 0$ by
convexity 
\begin{equation}
0\leq \left( -\frac{1}{x}\left( H_{j}(x)-H_{j}(0)\right)
+H_{j}(1)-H_{j}(0)\right) \uparrow \left( -\widehat{\lambda }v_{j}^{\prime
}\left( \widehat{\lambda }\widehat{\eta }^{j}\right) \eta ^{j}+\widehat{%
\lambda }v_{j}^{\prime }\left( \widehat{\lambda }\widehat{\eta }^{j}\right) 
\widehat{\eta }^{j}+H_{j}(1)-H_{j}(0)\right).  \label{eqincrementalpositive}
\end{equation}%
Write now explicitly equation \eqref{eqfocQbad} in terms of incremental
ratios and add and subtract the real number $\mathbb{E}\left[
\sum_{j=1}^{N}\left( H_{j}(1)-H_{j}(0)\right) \right] $ to get 
\begin{eqnarray}
&&\lim_{x\downarrow 0} \sum_{j=1}^{N}\left( \frac{1}{x}\left[ \left( (1-x)%
\widehat{\lambda }\mathbb{E}\left[ X^{j}\widehat{\eta }^{j}\right] +x%
\widehat{\lambda }\mathbb{E}\left[ X^{j}\eta ^{j}\right] \right) -\widehat{%
\lambda }\mathbb{E}\left[ X^{j}\widehat{\eta }^{j}\right] \right] +\mathbb{E}%
\left[ H_{j}(1)-H_{j}(0)\right] \right)  \label{line1} \\
&\geq &\lim_{x\downarrow 0}\sum_{j=1}^{N}\left( \mathbb{E}\left[ -\frac{1}{x}%
\left( H_{j}(x)-H_{j}(0)\right) +H_{j}(1)-H_{j}(0)\right] \right).
\end{eqnarray}%
The first limit is trivial. Observe that by \eqref{eqincrementalpositive}
and Monotone Convergence Theorem we also may compute the second limit and
then deduce:%
\begin{eqnarray*}
&&\sum_{j=1}^{N}\left( \widehat{\lambda }\mathbb{E}\left[ X^{j}\left( \eta
^{j}-\widehat{\eta }^{j}\right) \right] +\mathbb{E}\left[ H_{j}(1)-H_{j}(0)%
\right] \right) \\
&\geq &\sum_{j=1}^{N}\mathbb{E}\left[ -\widehat{\lambda }v_{j}^{\prime
}\left( \widehat{\lambda }\widehat{\eta }^{j}\right) \eta ^{j}+\widehat{%
\lambda }v_{j}^{\prime }\left( \widehat{\lambda }\widehat{\eta }^{j}\right) 
\widehat{\eta }^{j}+H_{j}(1)-H_{j}(0)\right]
\end{eqnarray*}%
and therefore%
\begin{eqnarray*}
+\infty &>&\sum_{j=1}^{N}\widehat{\lambda }\mathbb{E}\left[ X^{j}\left( \eta
^{j}-\widehat{\eta }^{j}\right) \right] \geq \mathbb{E}\left[
\sum_{j=1}^{N}\left( -\widehat{\lambda }v_{j}^{\prime }\left( \widehat{%
\lambda }\widehat{\eta }^{j}\right) \eta ^{j}+\widehat{\lambda }%
v_{j}^{\prime }\left( \widehat{\lambda }\widehat{\eta }^{j}\right) \widehat{%
\eta }^{j}\right) \right] \\
&=&\mathbb{E}\left[ \sum_{j=1}^{N}\left( \widehat{\lambda }\left(
v_{j}^{\prime }\left( \widehat{\lambda }\widehat{\eta }^{j}\right) \eta
^{j}\right) ^{-}-\widehat{\lambda }\left( v_{j}^{\prime }\left( \widehat{%
\lambda }\widehat{\eta }^{j}\right) \eta ^{j}\right) ^{+}+\widehat{\lambda }%
v_{j}^{\prime }\left( \widehat{\lambda }\widehat{\eta }^{j}\right) \widehat{%
\eta }^{j}\right) \right]\,.
\end{eqnarray*}%
Since $\sum_{j=1}^{N}v_{j}^{\prime }\left( \widehat{\lambda }\widehat{\eta }%
^{j}\right) \widehat{\eta }^{j}\in L^{1}(P) $ by Lemma \ref{lemmaintegra}.1, and $\sum_{j=1}^{N}\left( v_{j}^{\prime }\left( \widehat{\lambda }%
\widehat{\eta }^{j}\right) \eta ^{j}\right) ^{+}\in L^{1}(P)$ by equation %
\eqref{eqpospartintegrable}, we deduce that $\sum_{j=1}^{N}\left(
v_{j}^{\prime }\left( \widehat{\lambda }\widehat{\eta }^{j}\right) \eta
^{j}\right) ^{-}\in L^{1}(P)$ so that 
\begin{equation*}
0\leq \left( v_{j}^{\prime }\left( \widehat{\lambda }\widehat{\eta }%
^{j}\right) \eta ^{j}\right) ^{-}\leq \sum_{j=1}^{N}\left( v_{j}^{\prime
}\left( \widehat{\lambda }\widehat{\eta }^{j}\right) \eta ^{j}\right)
^{-}\in L^{1}(P).
\end{equation*}%
We conclude that $v_{j}^{\prime }\left( \widehat{\lambda }\widehat{\eta }%
^{j}\right) \eta ^{j}$ defines a vector in $L^{1}(P)\times\dots\times L^1(P)$%
, hence 
\begin{equation}
\widehat{\mathbf{Y}}\in L^{1}(Q^{1})\times\dots\times L^{1}(Q^{N}) \quad
\forall \mathbf{Q}\in \mathcal{Q}_{v}.  \label{YL1}
\end{equation}%
Moreover equation \eqref{eqfocQbad} can be rewritten as:%
\begin{equation}
0\leq \sum_{j=1}^{N}\widehat{\lambda }\mathbb{E}\left[ X^{j}\left( \eta ^{j}-%
\widehat{\eta }^{j}\right) \right] +\sum_{j=1}^{N}\widehat{\lambda }\mathbb{E%
}\left[ v_{j}^{\prime }\left( \widehat{\lambda }\widehat{\eta }^{j}\right)
\left( \eta ^{j}-\widehat{\eta }^{j}\right) \right] .  \label{Focq}
\end{equation}%
Now rearrange the terms in \eqref{Focq} 
\begin{equation*}
0\leq -\sum_{j=1}^{N}\widehat{\lambda }\left( \mathbb{E}\left[ X^{j}\widehat{%
\eta }^{j}\right] +\mathbb{E}\left[ v_{j}^{\prime }\left( \widehat{\lambda }%
\widehat{\eta }^{j}\right) \widehat{\eta }^{j}\right] \right) +\sum_{j=1}^{N}%
\widehat{\lambda }\left( \mathbb{E}\left[ X^{j}\eta ^{j}\right] +\mathbb{E}%
\left[ v_{j}^{\prime }\left( \widehat{\lambda }\widehat{\eta }^{j}\right)
\eta ^{j}\right] \right)
\end{equation*}%
and use \eqref{Foclambda}: 
\begin{equation*}
0\leq 0-\sum_{j=1}^{N}\widehat{\lambda }\left( \mathbb{E}\left[ \left(
-X^{j}-v_{j}^{\prime }\left( \widehat{\lambda }\widehat{\eta }^{j}\right)
\right) \eta ^{j}\right] \right) =-\widehat{\lambda }\sum_{j=1}^{N}\mathbb{E}%
\left[ \widehat{Y}^{j}\frac{\mathrm{d}Q^{j}}{\mathrm{d}P}\right].
\end{equation*}%
This proves that 
\begin{equation}
\sum_{j=1}^{N}E_{Q^{j}}\left[ \widehat{Y}^{j}\right] \leq 0\quad \forall 
\mathbf{Q}\in \mathcal{Q}_{v}  \label{QY}
\end{equation}%
and then $\widehat{\mathbf{Y}}\in \overline{\mathcal{B}_{0}}$.

\textbf{STEP 4 (Optimality of }$\widehat{\mathbf{Y}}$\textbf{)}

Under our standing Assumption \ref{A00} it is well known that $u(-v^{\prime
}(y))=v(y)-yv^{\prime }(y),\,\forall y\geq 0$. As a consequence we get by
direct substitution 
\begin{equation*}
u_{j}(X^{j}+\widehat{Y}^{j})=u_{j}\left( -v_{j}^{\prime }\left( \widehat{%
\lambda }\frac{\mathrm{d}\widehat{Q}^{j}}{\mathrm{d}P}\right) \right) =-%
\widehat{\lambda }\frac{\mathrm{d}\widehat{Q}^{j}}{\mathrm{d}P}v_{j}^{\prime
}\left( \widehat{\lambda }\frac{\mathrm{d}\widehat{Q}^{j}}{\mathrm{d}P}%
\right) +v_{j}\left( \widehat{\lambda }\frac{\mathrm{d}\widehat{Q}^{j}}{%
\mathrm{d}P}\right)
\end{equation*}

and 
\begin{equation*}
\sum_{j=1}^{N}\mathbb{E}\left[ u_{j}\left( X^{j}+\widehat{Y}^{j}\right) %
\right] =\widehat{\lambda }\left( -\sum_{j=1}^{N}\mathbb{E}\left[ \frac{%
\mathrm{d}\widehat{Q}^{j}}{\mathrm{d}P}v_{j}^{\prime }\left( \widehat{%
\lambda }\frac{\mathrm{d}\widehat{Q}^{j}}{\mathrm{d}P}\right) \right]
\right) +\sum_{j=1}^{N}\mathbb{E}\left[ v_{j}\left( \widehat{\lambda }\frac{%
\mathrm{d}\widehat{Q}^{j}}{\mathrm{d}P}\right) \right] .
\end{equation*}%
Use now the expression in \eqref{Foclambda} to substitute in the first RHS
term: 
\begin{equation*}
\sum_{j=1}^{N}\mathbb{E}\left[ u_{j}\left( X^{j}+\widehat{Y}^{j}\right) %
\right] =\widehat{\lambda }\sum_{j=1}^{N}E_{\widehat{Q}^{j}}[X^{j}]+%
\sum_{j=1}^{N}\mathbb{E}\left[ v_{j}\left( \widehat{\lambda }\frac{\mathrm{d}%
\widehat{Q}^{j}}{\mathrm{d}P}\right) \right]\,.
\end{equation*}%
The optimality of $\widehat{\mathbf{Y}}$ follows then by optimality of $(%
\widehat{\lambda },\widehat{\mathbf{Q}})$ in \eqref{eqminimaxapplied3}.

Using now our findings in STEP 2 together with optimality of $\widehat{%
\mathbf{Y}}$, the proof of equation \eqref{eqminimaxapplied2} is now
complete.

\textbf{STEP 5 (}$\widehat{\mathbf{Y}}\in \mathcal{B}_{0}$\textbf{)}

By Lemma \ref{remqvok} there exists a $\mathbf{Q}\in \mathcal{Q}_{v}^{e}:=\{%
\mathbf{Q}\in \mathcal{Q}_{v}$ s.t. $\mathbf{Q}\sim P\}$ and from (\ref{YL1}%
) we know that $v_{j}^{\prime }\left( \lambda \frac{\mathrm{d}\widehat{Q}^{j}%
}{\mathrm{d}P}\right) \in L^{1}(Q^{j})$, $\lambda >0$. Also, for every $%
j=1,\dots ,N,$ $v_{j}^{\prime }(0+)=-\infty ,$ so that $Q^{j}\left( \frac{%
\mathrm{d}\widehat{Q}^{j}}{\mathrm{d}P}=0\right) =0$. As $\mathbf{Q}\sim P$,
this in turn implies $P\left( \frac{\mathrm{d}\widehat{Q}^{j}}{\mathrm{d}P}%
=0\right) =0$, for every $j=1,\dots ,N$. Hence $\widehat{\mathbf{Q}}\sim P$.
Theorem \ref{thmweirdclosure} now can be applied to $K:=(\mathcal{B}_{0}\cap
M^{\Phi })$ and $\mathcal{Q}_{v}^{e}$ to get 
\begin{equation}
\bigcap_{\mathbf{Q}\in \mathcal{Q}_{v}^{e}}cl_{Q}\left( (\mathcal{B}_{0}\cap
M^{\Phi })-L_{+}^{1}\left( \mathbf{Q}\right) \right) =\left\{ \mathbf{Z}\in
\bigcap_{\mathbf{Q}\in \mathcal{Q}_{v}^{e}}L^{1}\left( \mathbf{Q}\right) 
\text{ s.t. }\sum_{j=1}^{N}E_{{Q}^{j}}\left[ Z^{j}\right] \leq 0\,\,\forall
\,\mathbf{Q}\in \mathcal{Q}_{v}^{e}\right\}.  \label{eqweirdclosure}
\end{equation}%
As $\widehat{\mathbf{Y}}\in \overline{\mathcal{B}_{0}}$ and $\overline{%
\mathcal{B}_{0}}$ is included in the RHS of (\ref{eqweirdclosure}), we
deduce that $\widehat{\mathbf{Y}}$ belongs to the LHS of (\ref%
{eqweirdclosure}). Now by equation (\ref{QY0}) we see that $\widehat{\mathbf{%
Y}}$ satisfies $\sum_{j=1}^{N}\mathbb{E}\left[ \widehat{Y}^{j}\frac{\mathrm{d%
}\widehat{Q}^{j}}{\mathrm{d}P}\right] =0$, and this implies that:%
\begin{equation}
\widehat{\mathbf{Y}}\in cl_{\widehat{\mathbf{Q}}}\left( \mathcal{B}_{0}\cap
M^{\Phi }\right) ,  \label{Yclosure}
\end{equation}
the $L^{1}(\widehat{Q}^{1})\times\dots\times L^{1}(\widehat{Q}^{1}) $-(norm)
closure of $\mathcal{B}_{0}\cap M^{\Phi }$. In particular $\widehat{\mathbf{Y%
}}$ is a $\widehat{\mathbf{Q}}$ (hence $P$)- a.s. limit of elements in $%
\mathcal{B}_{0}$ which is closed in probability $P$, so that $\widehat{%
\mathbf{Y}}\in \mathcal{B}_{0}$.

\textbf{STEP 6}

The conditions in (\ref{EQ}) are proved in (\ref{QY0}) and (\ref{QY}). We
conclude with uniqueness. By the strict convexity of the utilities and the
convexity of $\overline{\mathcal{B}_{0}}$, it is evident that the
maximization problem given by $\sup_{\overline{\mathcal{B}_{0}}%
}\sum_{j=1}^{N}\mathbb{E}\left[ u_{j}\left( X^{j}+Y^{j}\right) \right] $
admits at most one optimum. Now clearly if $(\widehat{\lambda },\widehat{%
\mathbf{Q}})$ and $(\widetilde{\lambda },\widetilde{\mathbf{Q}})$ are optima
for the minimax expression \eqref{eqminimaxapplied3}, they both give rise to
two optima $\widehat{\mathbf{Y}},\,\widetilde{\mathbf{Y}}$ as in the previous
steps. Uniqueness of the solution for the primal problem implies $\widehat{%
\mathbf{Y}}=\widetilde{\mathbf{Y}}$. Under Assumption \ref{A00}.(a) the
functions $v_{1}^{\prime },\dots ,v_{N}^{\prime }$ are injective and
therefore we conclude that $\widehat{\lambda }\frac{\mathrm{d}\widehat{%
\mathbf{Q}}}{\mathrm{d}P}=\widetilde{\lambda }\frac{\mathrm{d}\widetilde{%
\mathbf{Q}}}{\mathrm{d}P}$. Taking expectations we get $\widehat{\lambda }=%
\widetilde{\lambda }$ and then $(\widehat{\lambda },\widehat{\mathbf{Q}})=(%
\widetilde{\lambda },\widetilde{\mathbf{Q}})$.

\textbf{Conclusion}

The more general case $A\neq 0$ can be obtained using Remark \ref%
{remarkfrom0toA}. We just sketch one step of the proof, as the other steps
follows similarly. Using $\mathbf{a_{0}}$ as in Remark \ref{remarkfrom0toA},
in STEP 3 we see that 
\begin{equation*}
0\leq -\widehat{\lambda }\sum_{j=1}^{N}\mathbb{E}\left[ \widehat{Y}^{j}\frac{%
\mathrm{d}Q^{j}}{\mathrm{d}P}\right] +\widehat{\lambda }%
\sum_{j=1}^{N}a_{0}^{j}
\end{equation*}%
which yields that $\widehat{\mathbf{Y}}\in \overline{\mathcal{B}_{A}}$.
\end{proof}

\begin{remark}
\label{fairsums}Notice that $\mathbf{Y}\in \mathcal{B}\cap M^{\Phi }$
implies that $\mathbf{Z}\in \mathcal{B}_{0}$, where $\mathbf{Z}$ is defined
by $Z^{j}:=Y^{j}-x^{j}\sum_{k=1}^{N}Y^{k}$ for any $\mathbf{x}\in {\mathbb{R}%
}^{N}$ such that $\sum_{j=1}^{N}x^{j}=1$. To see this, recall that we are
assuming that ${\mathbb{R}}^{N}+\mathcal{B}=\mathcal{B}$. As $%
\sum_{j=1}^{N}Y^{j}\in {\mathbb{R}}$, then $\mathbf{Z}\in \mathcal{B}$ and,
since also trivially integrability is preserved and $\sum_{j=1}^{N}Z^{j}=0$,
we conclude that $\mathbf{Z}\in \mathcal{B}_{0}$.
\end{remark}

\begin{proposition}
\label{propfairprob}For all $\mathbf{Y}\in\mathcal{B}\cap M^{\Phi }$
and $\,\mathbf{Q}\in\mathcal{Q}$ 
\begin{equation}
\sum_{j=1}^{N}E_{{Q}^{j}}\left[ Y^{j}\right] \leq \sum_{j=1}^{N}Y^{j}.
\label{fair}
\end{equation}%
Moreover, denoting by $cl_{\mathbf{Q}}\left( \mathcal{B}\cap M^{\Phi
}\right) $ the $L^{1}(Q^{1})\times\dots\times L^{1}(Q^{N})$-norm closure of $%
\mathcal{B}\cap M^{\Phi }$, inequality (\ref{fair}) holds for all $\,\mathbf{%
Y}\in cl_{\mathbf{Q}}\left( \mathcal{B}\cap M^{\Phi }\right) $ and
$\mathbf{Q}\in \mathcal{Q},$ $\mathbf{Q}\sim P.$ In particular, (\ref{fair})
holds for $\widehat{\mathbf{Q}}\sim P$ and $\widehat{\mathbf{Y}}\in cl_{%
\widehat{\mathbf{Q}}}\left( \mathcal{B}_{0}\cap M^{\Phi }\right) $ defined
in Theorem \ref{thmoptimumexists}.
\end{proposition}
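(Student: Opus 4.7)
The plan is to exploit Remark \ref{fairsums} to reduce the inequality on $\mathcal{B}\cap M^\Phi$ to the defining polarity relation of $\mathcal{Q}$. Fix any $\mathbf{x}\in\mathbb{R}^N$ with $\sum_{j=1}^N x^j=1$ and, given $\mathbf{Y}\in\mathcal{B}\cap M^\Phi$, set $Z^j:=Y^j-x^j\sum_{k=1}^N Y^k$. By Remark \ref{fairsums} we have $\mathbf{Z}\in\mathcal{B}_0$, and since $\sum_k Y^k$ is a real constant and $M^\Phi$ is a vector space containing $\mathbb{R}$, also $\mathbf{Z}\in M^\Phi$. By the very definition of $\mathcal{Q}$, every $\mathbf{Q}\in\mathcal{Q}$ therefore satisfies $\sum_{j=1}^N E_{Q^j}[Z^j]\le 0$. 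Expanding, and using that $E_{Q^j}[\sum_k Y^k]=\sum_k Y^k$ because this sum is deterministic, gives
\[
\sum_{j=1}^N E_{Q^j}[Y^j]\le \Bigl(\sum_{j=1}^N x^j\Bigr)\sum_{k=1}^N Y^k=\sum_{k=1}^N Y^k,
\]
which is inequality \eqref{fair}. Notice that the choice of $\mathbf{x}$ is immaterial, since $\sum_j x^j = 1$ kills the dependence.

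To pass to the closure $cl_{\mathbf{Q}}(\mathcal{B}\cap M^\Phi)$ for $\mathbf{Q}\sim P$, I would pick a sequence $\mathbf{Y}_n\in\mathcal{B}\cap M^\Phi$ with $Y_n^j\to Y^j$ in $L^1(Q^j)$ for every $j$ and pass to the limit in the inequality just proved for $\mathbf{Y}_n$. The LHS $\sum_j E_{Q^j}[Y_n^j]$ converges to $\sum_j E_{Q^j}[Y^j]$ directly by $L^1(Q^j)$-convergence of each component. For the RHS, $c_n:=\sum_k Y_n^k\in\mathbb{R}$ is a deterministic real number for each $n$; I need to show $c_n$ converges to a real number and that this limit equals $\sum_k Y^k$. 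This is where the hypothesis $\mathbf{Q}\sim P$ is used: $L^1(Q^j)$-convergence implies $Q^j$-probability convergence, hence $P$-probability convergence (as $Q^j\sim P$), so $\sum_k Y_n^k\to\sum_k Y^k$ in $P$-probability. Being a $P$-probability limit of deterministic constants, $\sum_k Y^k$ is itself $P$-a.s. equal to some $c\in\mathbb{R}$, and $c_n\to c$ in $\mathbb{R}$. The inequality then passes to the limit.

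The only subtle point, and in my view the main obstacle, is precisely this closure step: $L^1(Q^j)$-norms for different $j$ are not a priori comparable, and $\mathbf{Y}$ in the closure does not obviously belong to $\mathcal{C}_\mathbb{R}$ — it is the assumption $\mathbf{Q}\sim P$ that produces a common $P$-probability convergence needed to pin down $\sum_k Y^k$ as a real constant. The final assertion, that \eqref{fair} holds for $\widehat{\mathbf{Q}}$ and $\widehat{\mathbf{Y}}$ of Theorem \ref{thmoptimumexists}, then follows for free: Theorem \ref{thmoptimumexists} provides $\widehat{\mathbf{Q}}\sim P$ and, via \eqref{Yclosure}, places $\widehat{\mathbf{Y}}$ in $cl_{\widehat{\mathbf{Q}}}(\mathcal{B}_0\cap M^\Phi)\subseteq cl_{\widehat{\mathbf{Q}}}(\mathcal{B}\cap M^\Phi)$, so the extension just proved applies.
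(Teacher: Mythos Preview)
Your proof is correct and follows essentially the same approach as the paper's: both use Remark \ref{fairsums} to reduce to the polarity defining $\mathcal{Q}$, and both handle the closure step by passing to the limit along an approximating sequence, using $\mathbf{Q}\sim P$ to transfer convergence to $P$. The only cosmetic difference is that the paper extracts a $\mathbf{Q}$-a.s.\ (hence $P$-a.s.) convergent subsequence and writes a $\liminf$ on the right-hand side, whereas you argue directly via convergence in $P$-probability that the deterministic constants $c_n=\sum_k Y_n^k$ converge in $\mathbb{R}$.
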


\begin{proof}
Take $\mathbf{Y}\in \mathcal{B}\cap M^{\Phi }$ and argue as in Remark \ref%
{fairsums}, with the notation introduced there. By the definition of the
polar, $\sum_{j=1}^{N}\mathbb{E}\left[ Z^{j}\varphi ^{j}\right] \leq 0$ for
all $\mathbf{\varphi }\in (\mathcal{B}\cap M^{\Phi })^{0}$, and in
particular for all $\mathbf{Q}\in \mathcal{Q}$ 
\begin{equation*}
0\geq \sum_{j=1}^{N}\mathbb{E}\left[ Z^{j}\frac{\mathrm{d}Q^{j}}{\mathrm{d}P}%
\right] =\sum_{j=1}^{N}\mathbb{E}\left[ Y^{j}\frac{\mathrm{d}Q^{j}}{\mathrm{d%
}P}\right] -\sum_{j=1}^{N}\mathbb{E}\left[ x^{j}\left(
\sum_{k=1}^{N}Y^{k}\right) \frac{\mathrm{d}Q^{j}}{\mathrm{d}P}\right]
=\sum_{j=1}^{N}E_{{Q}^{j}}\left[ Y^{j}\right] -\sum_{j=1}^{N}Y^{j}\,.
\end{equation*}%
As to the second claim, take a sequence $(\mathbf{k}_{n})_{n}$ in $\mathcal{B%
}\cap M^{\Phi }$ converging both $\mathbf{Q}$-almost surely (hence $P$-a.s.) and in norm to $\mathbf{Y}$ and apply (\ref{fair}) to $\mathbf{k}_{n}$
to get 
\begin{equation}  \label{eqfairlimit}
\sum_{j=1}^{N}E_{{Q}^{j}}\left[ Y^{j}\right] =\lim_{n}\sum_{j=1}^{N}E_{{Q}%
^{j}}\left[ k_{n}^{j}\right] \overset{P\text{-a.s.}}{\leq }\liminf_{n}\left(
\sum_{j=1}^{N}k_{n}^{j}\right) \overset{P\text{-a.s.}}{=}\sum_{j=1}^{N}Y^{j}.
\end{equation}
\end{proof}

\begin{remark}
\label{remYA}In particular (\ref{fair}) shows that $\forall \,\mathbf{Q}\in 
\mathcal{Q}$ 
\begin{equation*}
\left\{ \mathbf{Y}\in \mathcal{B}\cap M^{\Phi }\mid \sum_{j=1}^{N}Y^{j}\leq
A\right\} \subseteq \left\{ \mathbf{Y}\in M^{\Phi }\mid \sum_{j=1}^{N}E_{{Q}%
^{j}}\left[ Y^{j}\right] \leq A\right\}
\end{equation*}%
and therefore $\pi (A)\leq \pi ^{\mathbf{Q}}(A).$
\end{remark}

\subsection{Utility Maximization with a fixed probability measure}

The following represents a counterpart to Theorem \ref{thmoptimumexists},
once a measure is fixed a priori.

\begin{proposition}
\label{proppiq} Fix $\mathbf{Q}\in \mathcal{Q}_{v}$. If $\pi ^{\mathbf{Q}%
}(A)<+\infty $, then%
\begin{eqnarray}
\pi ^{\mathbf{Q}}(A) &=&\Pi ^{\mathbf{Q}}(A)=\sup \left\{ \sum_{j=1}^{N}%
\mathbb{E}\left[ u_{j}\left( X^{j}+Y^{j}\right) \right] \,\middle|\,\mathbf{Y%
}\in L^{1}(\mathbf{Q}),\,\sum_{j=1}^{N}E_{{Q}^{j}}\left[ Y^{j}\right] \leq
A\right\}  \label{eqminimaxfixq1} \\
&=&\min_{\lambda \in {\mathbb{R}}_{+}}\left( \lambda \left( \sum_{j=1}^{N}E_{%
{Q}^{j}}\left[ X^{j}\right] +A\right) +\sum_{j=1}^{N}\mathbb{E}\left[
v_{j}\left( \lambda \frac{\mathrm{d}Q^{j}}{\mathrm{d}P}\right) \right]
\right) \,.  \notag
\end{eqnarray}%
If additionally any of the two expressions is strictly less than $%
\sum_{j=1}^{N}u_{j}(+\infty )$, then 
\begin{equation}
\pi ^{\mathbf{Q}}(A)=\min_{\lambda \in {\mathbb{R}}_{++}}\left( \lambda
\left( \sum_{j=1}^{N}E_{{Q}^{j}}\left[ X^{j}\right] +A\right) +\sum_{j=1}^{N}%
\mathbb{E}\left[ v_{j}\left( \lambda \frac{\mathrm{d}Q^{j}}{\mathrm{d}P}%
\right) \right] \right) .  \label{lasteq}
\end{equation}
\end{proposition}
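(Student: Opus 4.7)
\medskip

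\noindent\textbf{Proof plan for Proposition \ref{proppiq}.} The plan is to mirror the argument used for $\pi(A)$ in Theorem \ref{thmoptimumexists}, but now on the simpler cone determined by the single (fixed) pricing vector $\mathbf{Q}$. First I would reduce to the case $A=0$ exactly as in Remark \ref{remarkfrom0toA}: pick any $\mathbf{a}_{0}\in\mathbb{R}^{N}$ with $\sum_{j=1}^N a_{0}^{j}=A$ and shift variables $Y^{j}\mapsto Y^{j}-a_{0}^{j}$. Since $\mathbb{R}^N$ is contained in every space under consideration and the pricing is cash additive, this reduces the budget constraint $\sum E_{Q^j}[Y^j]\leq A$ to $\sum E_{Q^j}[\tilde Y^j]\leq 0$, while $\mathbf{X}$ is replaced by $\mathbf{X}+\mathbf{a}_0\in M^\Phi$.

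Next I would apply the minimax Theorem \ref{thmminimax} with the convex cone
\[
\mathcal{C}:=\left\{\mathbf{Y}\in M^\Phi\,\middle|\,\sum_{j=1}^N E_{Q^j}[Y^j]\leq 0\right\}.
\]
The polar of $\mathcal{C}$ in the dual system $(M^\Phi,L^{\Phi^*})$ intersected with the non-negative cone is precisely $\mathbb{R}_+\cdot\mathbf{Q}$: indeed for any $\mathbf{Z}\in\mathcal{C}^0\cap(L^0_+)^N$, testing against $\mathbf{Y}=\mathbf{e}_i c_i-\mathbf{e}_j c_j$ with $c_i/c_j=E_{Q^j}[\mathbf{1}]/E_{Q^i}[\mathbf{1}]$ forces $\mathbb{E}[Z^j]/\frac{dQ^j}{dP}$ to be essentially constant across $j$. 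Plugging this characterization into the minimax duality collapses the outer minimization over a vector of measures to a one-parameter minimization over $\lambda\in\mathbb{R}_+$, yielding
\[
\pi^{\mathbf{Q}}(A)=\min_{\lambda\in\mathbb{R}_+}\left(\lambda\Bigl(\sum_{j=1}^N E_{Q^j}[X^j]+A\Bigr)+\sum_{j=1}^N\mathbb{E}\left[v_j\!\left(\lambda\tfrac{dQ^j}{dP}\right)\right]\right).
\]
Finiteness of $\pi^{\mathbf{Q}}(A)$ guarantees the convex function being minimized is finite at some $\lambda>0$, and Assumption \ref{A00}(a) (the condition \eqref{RAE}) makes it finite on all of $\mathbb{R}_{++}$, so the minimum is attained.

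To obtain the chain $\pi^{\mathbf{Q}}(A)=\Pi^{\mathbf{Q}}(A)=\sup\{\cdots:\mathbf{Y}\in L^1(\mathbf{Q})\}$, I would observe that Fenchel's inequality gives the same dual upper bound for both the $M^\Phi$-supremum and the $L^1(\mathbf{Q})$-supremum. The upper bound is achieved: letting $\hat{\lambda}>0$ be the minimizer, the candidate
\[
\hat{Y}^{j}:=-X^{j}-v_{j}'\!\left(\hat{\lambda}\tfrac{dQ^j}{dP}\right),\qquad j=1,\dots,N,
\]
lies in $L^1(Q^j)$ (as in STEP 3 of Theorem \ref{thmoptimumexists}, using Lemma \ref{lemmaintegra}.1 and the first order condition for $\hat\lambda$ which gives $\sum E_{Q^j}[\hat Y^j]=0$, i.e.\ $\leq A$ after the shift), satisfies the budget constraint, and plugging into $\sum\mathbb{E}[u_j(X^j+\hat Y^j)]$ together with $u_j(-v_j'(y))=v_j(y)-yv_j'(y)$ gives exactly the dual value. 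Since $M^\Phi\subseteq\mathcal{L}\subseteq L^1(\mathbf{Q})$, the three suprema are squeezed together.

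Finally, for \eqref{lasteq}, I would argue that at $\lambda=0$ the dual functional equals $\sum_{j=1}^N v_j(0)=\sum_{j=1}^N u_j(+\infty)$, so if the primal value (equivalently, the minimum) is strictly smaller than this quantity, the minimum cannot be attained at $0$ and must therefore lie in $\mathbb{R}_{++}$. The main technical obstacle I anticipate is the identification of the polar cone of $\mathcal{C}$ with $\mathbb{R}_+\cdot\mathbf{Q}$ in a way that is directly compatible with the version of Theorem \ref{thmminimax} used earlier, and verifying that the dual-optimal $\hat{Y}^j$ is genuinely admissible (integrable against every component of $\mathbf{Q}$), which is where Assumption \ref{A00}(a) does its essential work.
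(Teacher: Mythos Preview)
Your overall strategy matches the paper's: reduce to $A=0$ via Remark \ref{remarkfrom0toA}, apply Theorem \ref{thmminimax} to the cone $\mathcal{C}=\{\mathbf{Y}\in M^\Phi:\sum_j E_{Q^j}[Y^j]\leq 0\}$, and then sandwich the three suprema between $\pi^{\mathbf{Q}}(0)$ and the dual minimum using Fenchel's inequality. The argument for \eqref{lasteq} is also exactly right.

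The one genuine gap is your computation of the polar of $\mathcal{C}$. Testing against vectors of the form $c_i\mathbf{e}_i-c_j\mathbf{e}_j$ (deterministic vectors) can only tell you that the componentwise \emph{expectations} $\mathbb{E}[Z^j]$ coincide; it cannot force $Z^j$ to be a scalar multiple of $\frac{dQ^j}{dP}$, and the expression ``$\mathbb{E}[Z^j]/\frac{dQ^j}{dP}$ is essentially constant'' does not type-check. The paper avoids this altogether with a one-line application of the Bipolar Theorem: the set $\{\lambda\,[\frac{dQ^1}{dP},\dots,\frac{dQ^N}{dP}]:\lambda\geq 0\}\subseteq L^{\Phi^*}$ is a one-dimensional (hence closed) cone whose polar in $M^\Phi$ is exactly $\mathcal{C}$, so $\mathcal{C}^0$ equals that cone back again, and $(\mathcal{C}_1^0)^+=\{\mathbf{Q}\}$ is a singleton. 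This immediately collapses the minimax of Theorem \ref{thmminimax} to a minimum over $\lambda\in\mathbb{R}_+$.

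A second, smaller point: once you have $\pi^{\mathbf{Q}}(0)$ equal to the dual minimum via Theorem \ref{thmminimax}, the chain
\[
\pi^{\mathbf{Q}}(0)\leq \Pi^{\mathbf{Q}}(0)\leq \sup_{L^1(\mathbf{Q})}\{\cdots\}\leq \min_{\lambda\in\mathbb{R}_+}(\cdots)=\pi^{\mathbf{Q}}(0)
\]
closes by itself; you do not need to exhibit an explicit primal optimizer $\widehat{Y}^j=-X^j-v_j'(\widehat\lambda\,\frac{dQ^j}{dP})$. Your construction is correct when $\widehat\lambda>0$, but it is unnecessary here, and in the first part of the statement $\widehat\lambda$ could be $0$ (that case is precisely what the second part excludes). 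The paper simply relies on the sandwich.
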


\begin{proof}
Again, we prove the case $A=0$ since Remark \ref{remarkfrom0toA} can be used
to obtain the general case $A\neq 0$. From $M^{\Phi }\subseteq \mathcal{L}%
\subseteq L^{1}(\mathbf{Q})$ we obtain: 
\begin{align}
\pi ^{\mathbf{Q}}(0) &:=\sup \left\{ \sum_{j=1}^{N}\mathbb{E}\left[
u_{j}\left( X^{j}+Y^{j}\right) \right] \,\middle|\,\mathbf{Y}\in M^{\Phi
},\,\sum_{j=1}^{N}E_{{Q}^{j}}\left[ Y^{j}\right] \leq 0\right\} \leq \Pi ^{%
\mathbf{Q}}(0)  \notag \\
&\leq \sup \left\{ \sum_{j=1}^{N}\mathbb{E}\left[ u_{j}\left(
X^{j}+Y^{j}\right) \right] \,\middle|\,\mathbf{Y}\in L^{1}(\mathbf{Q}%
),\,\sum_{j=1}^{N}E_{{Q}^{j}}\left[ Y^{j}\right] \leq 0\right\}  \notag \\
&\leq \min_{\lambda \in {\mathbb{R}}_{+}}\left( \lambda \sum_{j=1}^{N}E_{{Q}%
^{j}}\left[ X^{j}\right] +\sum_{j=1}^{N}\mathbb{E}\left[ v_{j}\left( \lambda 
\frac{\mathrm{d}Q^{j}}{\mathrm{d}P}\right) \right] \right)  \label{supmin}
\end{align}%
by the Fenchel inequality. Define the convex cone 
\begin{equation*}
\mathcal{C}:=\left\{ \mathbf{Y}\in M^{\Phi }\mid \sum_{j=1}^{N}E_{{Q}^{j}}%
\left[ Y^{j}\right] \leq 0\right\} .
\end{equation*}%
The hypotheses on $\mathcal{C}$ of Theorem \ref{thmminimax} hold true and
inequality (\ref{supmin}) shows that $\pi ^{\mathbf{Q}}(0)<+\infty $ for all 
$\mathbf{X}\in M^{\Phi }$. The finite dimensional cone $\left\{ \lambda %
\left[ \frac{\mathrm{d}Q^{1}}{\mathrm{d}P},\dots ,\frac{\mathrm{d}Q^{N}}{%
\mathrm{d}P}\right] ,\,\,\lambda \geq 0\right\} \subseteq L^{\Phi ^{\ast }}$
is closed, and then by the Bipolar Theorem $\mathcal{C}^{0}=\left\{ \lambda %
\left[ \frac{\mathrm{d}Q^{1}}{\mathrm{d}P},\dots ,\frac{\mathrm{d}Q^{N}}{%
\mathrm{d}P}\right] ,\,\,\lambda \geq 0\right\} $. Hence the set $(\mathcal{C%
}_{1}^{0})^{+}$ in the statement of Theorem \ref{thmminimax} is exactly $%
\left\{ \left[ \frac{\mathrm{d}Q^{1}}{\mathrm{d}P},\dots ,\frac{\mathrm{d}%
Q^{N}}{\mathrm{d}P}\right] \right\} $ and Theorem \ref{thmminimax} proves
that $\pi ^{\mathbf{Q}}(0)$ is equal to the RHS of (\ref{supmin}). We can
similarly argue to prove \eqref{lasteq}.
\end{proof}

To conclude, we provide the minimax duality between the maximization
problems with and without a fixed measure

\begin{corollary}
\label{propminimum} The following holds: 
\begin{equation*}
\pi (A)=\min_{\mathbf{Q}\in \mathcal{Q}_{v}}\pi ^{\mathbf{Q}}(A)=\pi ^{%
\mathbf{\widehat{Q}}}(A)<+\infty\,,
\end{equation*}%
where $\mathbf{\widehat{Q}}$ is the minimax measure from Theorem \ref%
{thmoptimumexists}.
\end{corollary}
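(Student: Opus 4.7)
The plan is to chain three previously established facts: the easy direction $\pi(A)\le \pi^{\mathbf{Q}}(A)$ for every $\mathbf{Q}\in\mathcal{Q}$ from Remark \ref{remYA}, the dual representation of $\pi(A)$ from Theorem \ref{thmoptimumexists}, and the dual representation of $\pi^{\mathbf{Q}}(A)$ from Proposition \ref{proppiq}. Write the common dual functional as
\begin{equation*}
F(\lambda,\mathbf{Q}):=\lambda\left(\sum_{j=1}^{N}E_{Q^{j}}[X^{j}]+A\right)+\sum_{j=1}^{N}\mathbb{E}\!\left[v_{j}\!\left(\lambda\tfrac{\mathrm{d}Q^{j}}{\mathrm{d}P}\right)\right].
\end{equation*}

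First, I would apply Theorem \ref{thmoptimumexists} to obtain $\pi(A)=\min_{\mathbf{Q}\in\mathcal{Q}}\min_{\lambda>0}F(\lambda,\mathbf{Q})=F(\widehat{\lambda},\widehat{\mathbf{Q}})<+\infty$, which in particular yields that $\widehat{\mathbf{Q}}\in\mathcal{Q}$ and finite conjugate mass $\sum_{j}\mathbb{E}[v_{j}(\widehat{\lambda}\tfrac{\mathrm{d}\widehat{Q}^{j}}{\mathrm{d}P})]<+\infty$. The first small hurdle is to check that $\widehat{\mathbf{Q}}$ actually lies in the subset $\mathcal{Q}_{v}$ defined in \eqref{defqv}, since this is required to invoke Proposition \ref{proppiq}. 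This follows directly from the assumed property \eqref{RAE} in Assumption \ref{A00}.(a): finiteness of the $v_{j}$-integrals at $\lambda=\widehat{\lambda}$ forces finiteness at every $\lambda>0$, in particular at $\lambda=1$, so $\widehat{\mathbf{Q}}\in\mathcal{Q}_{v}$.

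Next, since $\widehat{\mathbf{Q}}\in\mathcal{Q}_{v}$, Proposition \ref{proppiq} gives $\pi^{\widehat{\mathbf{Q}}}(A)=\min_{\lambda\ge 0}F(\lambda,\widehat{\mathbf{Q}})\le F(\widehat{\lambda},\widehat{\mathbf{Q}})=\pi(A)$. Combining this with Remark \ref{remYA} applied to $\widehat{\mathbf{Q}}$ yields the sandwich
\begin{equation*}
\pi(A)\;\le\;\inf_{\mathbf{Q}\in\mathcal{Q}_{v}}\pi^{\mathbf{Q}}(A)\;\le\;\pi^{\widehat{\mathbf{Q}}}(A)\;\le\;\pi(A),
\end{equation*}
so all quantities coincide, the infimum over $\mathcal{Q}_{v}$ is attained at $\widehat{\mathbf{Q}}$, and finiteness is inherited from the finiteness of $F(\widehat{\lambda},\widehat{\mathbf{Q}})$.

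I do not anticipate a genuine obstacle here — the corollary really is an assembly of results already in hand. The only subtlety worth flagging in writing is the membership $\widehat{\mathbf{Q}}\in\mathcal{Q}_{v}$ (rather than merely $\widehat{\mathbf{Q}}\in\mathcal{Q}$), which must be justified via the reasonable asymptotic elasticity condition \eqref{RAE} before Proposition \ref{proppiq} can be invoked at $\widehat{\mathbf{Q}}$; without this observation one would not be entitled to the dual formula for $\pi^{\widehat{\mathbf{Q}}}(A)$ and the sandwich argument would stall.
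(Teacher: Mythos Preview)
Your proposal is correct and follows essentially the same route as the paper, which simply states that the corollary is an immediate consequence of Theorem \ref{thmoptimumexists} and Proposition \ref{proppiq}. Your write-up spells out the sandwich argument in detail and additionally invokes Remark \ref{remYA} for the inequality $\pi(A)\le\pi^{\mathbf{Q}}(A)$; this is a clean way to get the lower bound, though one can equally read it off the dual representations directly. Your observation that $\widehat{\mathbf{Q}}\in\mathcal{Q}_{v}$ must be checked via \eqref{RAE} before Proposition \ref{proppiq} applies is a worthwhile point to make explicit.
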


\begin{proof}
It is an immediate consequence of Theorem \ref{thmoptimumexists} and
Proposition \ref{proppiq}.
\end{proof}

\begin{lemma}
\label{lemmalinkpiS}For all $\mathbf{Q}\in \mathscr{\probq}$ we have $\Pi ^{%
\mathbf{Q}}(A)=S^{\mathbf{Q}}(A)$ and, if $\mathbf{\widehat{Q}}$ is the
minimax measure from Theorem \ref{thmoptimumexists}, then 
\begin{equation}
\pi (A)=\pi ^{\widehat{\mathbf{Q}}}(A)=\Pi ^{\widehat{\mathbf{Q}}}(A)=S^{%
\widehat{\mathbf{Q}}}(A).  \label{piequalPI}
\end{equation}
\end{lemma}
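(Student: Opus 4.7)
The plan is to split the argument into two steps: first prove the identity $\Pi^{\mathbf{Q}}(A)=S^{\mathbf{Q}}(A)$ for an arbitrary $\mathbf{Q}\in\mathscr{Q}$ by unwinding the definitions and exploiting the product structure of $\mathcal{L}$, then obtain \eqref{piequalPI} by chaining together previously established results.

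For the first step, I would show the two inequalities separately. For ``$\geq$'', pick any $\mathbf{a}\in\mathbb{R}^N$ with $\sum_{n}a^n\leq A$ and, for each $n$, any $Y^n\in\mathcal{L}^n$ with $E_{Q^n}[Y^n]\leq a^n$. Since $\mathcal{L}=\mathcal{L}^1\times\dots\times\mathcal{L}^N$ by the product decomposition recalled just before the lemma, the vector $\mathbf{Y}=(Y^1,\dots,Y^N)$ lies in $\mathcal{L}$ and satisfies $\sum_{n} E_{Q^n}[Y^n]\leq A$, so it is admissible for $\Pi^{\mathbf{Q}}(A)$; taking the sup over all such choices yields $S^{\mathbf{Q}}(A)\leq\Pi^{\mathbf{Q}}(A)$. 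For ``$\leq$'', take any $\mathbf{Y}\in\mathcal{L}$ with $\sum_{n}E_{Q^n}[Y^n]\leq A$ and set $a^n:=E_{Q^n}[Y^n]\in\mathbb{R}$ (this real number is well-defined precisely because $\mathbf{Q}\in\mathscr{Q}=\mathcal{Q}_v$, which guarantees $\mathcal{L}\subseteq L^1(\mathbf{Q})$). Then $\sum_{n}a^n\leq A$ and $Y^n$ is admissible for $U_n^{Q^n}(a^n)$, so $\sum_n\mathbb{E}[u_n(X^n+Y^n)]\leq\sum_n U_n^{Q^n}(a^n)\leq S^{\mathbf{Q}}(A)$. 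Taking the sup over $\mathbf{Y}$ gives the reverse inequality.

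For the second step, Corollary \ref{propminimum} gives $\pi(A)=\pi^{\widehat{\mathbf{Q}}}(A)<+\infty$. Because this value is finite, Proposition \ref{proppiq} (applied to $\widehat{\mathbf{Q}}\in\mathcal{Q}_v$) yields $\pi^{\widehat{\mathbf{Q}}}(A)=\Pi^{\widehat{\mathbf{Q}}}(A)$. Finally, the first step applied to $\mathbf{Q}=\widehat{\mathbf{Q}}$ gives $\Pi^{\widehat{\mathbf{Q}}}(A)=S^{\widehat{\mathbf{Q}}}(A)$, concluding \eqref{piequalPI}.

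I do not expect any real obstacle: the first equality is a straightforward decomposition argument once one observes the product structure of $\mathcal{L}$ and the cash-additivity of the budget constraints, while the second statement is a direct concatenation of Corollary \ref{propminimum}, Proposition \ref{proppiq} and the first part. The only point requiring a brief verification is that $E_{Q^n}[Y^n]$ makes sense as a real number for $\mathbf{Y}\in\mathcal{L}$ and $\mathbf{Q}\in\mathscr{Q}$, which follows from the definition $\mathcal{L}=\bigcap_{\mathbf{Q}'\in\mathcal{Q}_v}L^1(Q'^1)\times\dots\times L^1(Q'^N)$ and $\mathscr{Q}=\mathcal{Q}_v$.
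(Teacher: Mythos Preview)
Your proposal is correct and follows essentially the same approach as the paper: the identity $\Pi^{\mathbf{Q}}(A)=S^{\mathbf{Q}}(A)$ is obtained via the same decomposition (setting $a^{n}:=E_{Q^{n}}[Y^{n}]$ and using the product structure of $\mathcal{L}$), and the chain \eqref{piequalPI} is deduced from exactly the same ingredients, Corollary \ref{propminimum} and Proposition \ref{proppiq}. The only cosmetic difference is that the paper writes the first part as a single chain of equalities rather than two inequalities.
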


\begin{proof}
Let $\mathbf{Y}\in \mathcal{L}$, $\mathbf{Q}\in \mathscr{Q}$, $%
a^{n}:=E_{Q^{n}}[Y^{n}]$ and $Z^{n}:=Y^{n}-a^{n}$. As $\mathcal{L}+\mathbb{R}%
^{N}=\mathcal{L}$, $Z^{n}\in \mathcal{L}^{n}$ and 
\begin{eqnarray*}
\Pi ^{\mathbf{Q}}(\mathbf{A}) &=&\sup_{\mathbf{Y}\in \mathcal{L}}\left\{ 
\mathbb{E}\left[ \sum_{n=1}^{N}u_{n}(X^{n}+Y^{n})\right] \mid
\sum_{n=1}^{N}E_{Q^{n}}[Y^{n}]=A\right\} \\
&=&\sup_{\mathbf{a}\in \mathbb{R}^{N}\text{, }\mathbf{Z}\in \mathcal{L}%
}\left\{ \mathbb{E}\left[ \sum_{n=1}^{N}u_{n}(X^{n}+Z^{n}+a^{n})\right] \mid
E_{Q^{n}}[Z^{n}]=0,\sum_{n=1}^{N}a^{n}=A\right\} \\
&=&\sup_{\mathbf{a}\in \mathbb{R}^{N}\text{, }\sum_{n=1}^{N}a^{n}=A}\left\{
\sup_{\mathbf{Z}\in \mathcal{L}\text{ }:\text{ }\mathbb{E}%
_{Q^{n}}[Z^{n}]=0}\sum_{n=1}^{N}\mathbb{E}\left[ u_{n}(X^{n}+Z^{n}+a^{n})%
\right] \right\} \\
&=&\sup_{\mathbf{a}\in \mathbb{R}^{N}\text{, }\sum_{n=1}^{N}a^{n}=A}%
\sum_{n=1}^{N}\sup_{Z^{n}\in \mathcal{L}^{n}}\left\{ \mathbb{E}\left[
u_{n}(X^{n}+Z^{n}+a^{n})\right] \mid E_{Q^{n}}[Z^{n}]=0\right\} \\
&=&\sup_{\mathbf{a}\in \mathbb{R}^{N}\text{, }\sum_{n=1}^{N}a^{n}=A}%
\sum_{n=1}^{N}\sup_{Y^{n}\in \mathcal{L}^{n}}\left\{ \mathbb{E}\left[
u_{n}(X^{n}+Y^{n})\right] \mid E_{Q^{n}}[Y^{n}]=a^{n}\right\} \\
&=&\sup_{\mathbf{a}\in \mathbb{R}^{N}\text{ }\sum_{n=1}^{N}a^{n}=A}%
\sum_{n=1}^{N}U_{n}^{Q^{n}}(a^{n})=S^{\mathbf{Q}}(A)\,.
\end{eqnarray*}%
The first equality in (\ref{piequalPI}) follows from Corollary \ref%
{propminimum} and the second one from (\ref{eqminimaxfixq1}).
\end{proof}

\subsection{Main results}

\label{secmain}

\begin{theorem}
\label{thmsorteexistsA} Take $\mathscr{Q}=\mathcal{Q}_{v}$ and set $\mathcal{%
L}=\bigcap_{\mathbf{Q}\in \mathcal{Q}_{v}}L^{1}(\mathbf{Q})$. Under
Assumption \ref{A00}, for any $\mathbf{X}\in M^{\Phi }$ and any $A\in {%
\mathbb{R}}$ a SORTE exists, namely $(\widehat{\mathbf{Y}},\widehat{\mathbf{Q%
}})\in ${$\mathcal{B}_{A}\times $}$\mathcal{Q}_{v}$ defined in Theorem \ref%
{thmoptimumexists} and 
\begin{equation}
\widehat{a}^{n}:=E_{\widehat{Q}^{n}}[\widehat{Y}^{n}],\quad n=1,\dots ,N,
\label{aY}
\end{equation}%
satisfy:

\begin{enumerate}
\item {$\widehat{Y}^{n}$ is an optimum for $U_{n}^{\widehat{Q}^{n}}(\widehat{%
a}^{n})$, for each $n\in \{1,\dots ,N\}$,}

\item $\widehat{{\mathbf{a}}}${\ is an optimum for $S^{\widehat{\mathbf{Q}}%
}(A),$}

\item {$\widehat{\mathbf{Y}}\in \mathcal{B}$ and $\sum_{n=1}^{N}\widehat{Y}%
^{n}=A$ }$P$-a.s.
\end{enumerate}
\end{theorem}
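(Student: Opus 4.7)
The plan is to take the pair $(\widehat{\mathbf{Y}}, \widehat{\mathbf{Q}}) \in \overline{\mathcal{B}_A} \times \mathcal{Q}_v$ supplied by Theorem \ref{thmoptimumexists}, set $\widehat{a}^n := E_{\widehat{Q}^n}[\widehat{Y}^n]$ as in \eqref{aY}, and verify directly the three items of Definition \ref{carte}. The equivalence $\widehat{Q}^{n} \sim P$ is already part of the statement of Theorem \ref{thmoptimumexists}, so only items 1, 2, 3 need attention.

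I would begin with item 3, which is essentially recorded by earlier results. Theorem \ref{thmoptimumexists} yields $\widehat{\mathbf{Y}} \in \mathcal{B}_A$, hence $\widehat{\mathbf{Y}} \in \mathcal{B} \subseteq \mathcal{C}_{\mathbb{R}}$ and $\sum_n \widehat{Y}^n \leq A$ $P$-a.s.; since the sum is forced to be a deterministic constant by $\mathcal{C}_{\mathbb{R}}$, I need only upgrade the inequality to equality. To do so I combine \eqref{EQ}, which gives $\sum_n E_{\widehat{Q}^n}[\widehat{Y}^n] = A$, with Proposition \ref{propfairprob} applied to $\widehat{\mathbf{Q}} \sim P$ and $\widehat{\mathbf{Y}}$, which (via \eqref{Yclosure} together with the shift argument in Remark \ref{remarkfrom0toA}) lies in $cl_{\widehat{\mathbf{Q}}}(\mathcal{B} \cap M^\Phi)$. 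This yields $A = \sum_n E_{\widehat{Q}^n}[\widehat{Y}^n] \leq \sum_n \widehat{Y}^n \leq A$, sandwiching $\sum_n \widehat{Y}^n = A$.

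Next I would tackle item 1 by a one-coordinate perturbation. By Corollary \ref{propminimum} and Lemma \ref{lemmalinkpiS}, $\pi(A) = \pi^{\widehat{\mathbf{Q}}}(A) = \Pi^{\widehat{\mathbf{Q}}}(A)$, and \eqref{YL1} places $\widehat{\mathbf{Y}}$ in $\mathcal{L}$, with $\sum_n E_{\widehat{Q}^n}[\widehat{Y}^n] = A$; hence $\widehat{\mathbf{Y}}$ is feasible for and attains $\Pi^{\widehat{\mathbf{Q}}}(A)$. If some $\tilde Y^n \in \mathcal{L}^n$ with $E_{\widehat{Q}^n}[\tilde Y^n] \leq \widehat{a}^n$ satisfied $\mathbb{E}[u_n(X^n + \tilde Y^n)] > \mathbb{E}[u_n(X^n + \widehat{Y}^n)]$, then replacing only the $n$-th entry of $\widehat{\mathbf{Y}}$ by $\tilde Y^n$ produces $\tilde{\mathbf{Y}} \in \mathcal{L}$ (thanks to the product structure of $\mathcal{L}$) with $\sum_k E_{\widehat{Q}^k}[\tilde Y^k] \leq A$ and strictly larger systemic utility, contradicting the optimality of $\widehat{\mathbf{Y}}$ for $\Pi^{\widehat{\mathbf{Q}}}(A)$. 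Item 2 is then a one-line consequence: item 1 gives $U_n^{\widehat{Q}^n}(\widehat{a}^n) = \mathbb{E}[u_n(X^n + \widehat{Y}^n)]$, and summing over $n$ together with $\sum_n \widehat{a}^n = A$ yields $\sum_n U_n^{\widehat{Q}^n}(\widehat{a}^n) = \pi(A) = S^{\widehat{\mathbf{Q}}}(A)$, so $\widehat{\mathbf{a}}$ attains the outer sup.

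I expect the main obstacle to be the perturbation step in item 1. The subtlety is that the single-agent problem $U_n^{\widehat{Q}^n}(\widehat{a}^n)$ ranges over $\mathcal{L}^n$, which is strictly larger than $M^\Phi$ in general, whereas Theorem \ref{thmoptimumexists} was proved for the primal domain $M^\Phi$. The reduction is rescued precisely by the identification $\mathcal{L} = \bigcap_{\mathbf{Q}\in\mathcal{Q}_v} L^1(\mathbf{Q})$ and by the chain $\pi(A) = \pi^{\widehat{\mathbf{Q}}}(A) = \Pi^{\widehat{\mathbf{Q}}}(A) = S^{\widehat{\mathbf{Q}}}(A)$ from Corollary \ref{propminimum}, Proposition \ref{proppiq} and Lemma \ref{lemmalinkpiS}; without this threefold equality the perturbed $\tilde{\mathbf{Y}}$ would not be comparable to the value attained by $\widehat{\mathbf{Y}}$.
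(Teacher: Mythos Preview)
Your proposal is correct and follows essentially the same route as the paper's proof. The paper also establishes item~3 via the sandwich $A=\sum_n E_{\widehat{Q}^n}[\widehat{Y}^n]\leq\sum_n\widehat{Y}^n\leq A$ from \eqref{EQ} and Proposition~\ref{propfairprob}, proves item~1 by a contradiction of the form $\pi^{\widehat{\mathbf{Q}}}(A)=\pi(A)=\sum_n\mathbb{E}[u_n(X^n+\widehat{Y}^n)]<\sum_n U_n^{\widehat{Q}^n}(\widehat{a}^n)\leq S^{\widehat{\mathbf{Q}}}(A)=\pi^{\widehat{\mathbf{Q}}}(A)$ (your one-coordinate perturbation is just the constructive witness for the strict middle inequality, with $\Pi^{\widehat{\mathbf{Q}}}$ in place of the equal $S^{\widehat{\mathbf{Q}}}$), and then item~2 exactly as you do.
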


\begin{proof}
$\,$

\textbf{1)}: We prove that $U_{n}^{\widehat{Q}^{n}}(\widehat{a}^{n})=\mathbb{%
E}\left[ u_{n}\left( X^{n}+\widehat{Y}^{n}\right) \right] <u_{n}(+\infty )$,
for all $n=1,\dots ,N$ , thus showing that {$\widehat{Y}^{n}$ is an optimum
for $U_{n}^{\widehat{Q}^{n}}(\widehat{a}^{n}).$} As $\widehat{Y}^{n}\in 
\mathcal{L}^{n}$ for all $n=1,\dots ,N$, then by definition of $U_{n}^{%
\widehat{Q}^{n}}(\widehat{a}^{n})$ we obtain:%
\begin{equation*}
\sup \left\{ \mathbb{E}\left[ u_{n}(X^{n}+Z)\right] \middle|Z\in \mathcal{L}%
^{n},\,E_{\widehat{Q}^{n}}[Z]\leq \widehat{a}^{n}\right\} =:U_{n}^{\widehat{Q%
}^{n}}(\widehat{a}^{n})\geq \mathbb{E}\left[ u_{n}\left( X^{n}+\widehat{Y}%
^{n}\right) \right] .
\end{equation*}%
If, for some index, the last inequality were strict we would obtain the
contradiction 
\begin{equation}
\pi ^{\widehat{\mathbf{Q}}}(A)=\pi (A)\overset{\text{Thm. }\ref%
{thmoptimumexists}}{=}\sum_{n=1}^{N}\mathbb{E}\left[ u_{n}\left( X^{n}+%
\widehat{Y}^{n}\right) \right] <\sum_{n=1}^{N}U_{n}^{\widehat{Q}^{n}}(%
\widehat{a}^{n})\leq S^{\widehat{\mathbf{Q}}}(A)=\pi ^{\widehat{\mathbf{Q}}%
}(A)\,,  \label{contradiction}
\end{equation}%
where we used \eqref{piequalPI} in the first and last equality.

In particular then $\mathbb{E}\left[ u_{n}\left( X^{n}+\widehat{Y}%
^{n}\right) \right] <u_{n}(+\infty )$, for all $n=1,\dots ,N$ $.$ Indeed, if
the latter were equal to $u_{n}(+\infty )$, then $u_{n}$ would attain its
maximum over a compact subset of ${\mathbb{R}}$, which is not the case.

\textbf{2)}: From (\ref{EQ}) we know that $A={\sum_{n=1}^{N}}E_{\widehat{Q}%
^{n}}[\widehat{Y}^{n}]=${$\sum_{n=1}^{N}\widehat{a}^{n}$}. From (\ref%
{piequalPI}) we have 
\begin{equation*}
S^{\widehat{\mathbf{Q}}}(A)=\pi (A)\overset{\text{Thm. }\ref%
{thmoptimumexists}}{=}\sum_{n=1}^{N}\mathbb{E}\left[ u_{n}\left( X^{n}+%
\widehat{Y}^{n}\right) \right] =\sum_{n=1}^{N}U_{n}^{\widehat{Q}^{n}}(%
\widehat{a}^{n})\leq S^{\widehat{\mathbf{Q}}}(A).
\end{equation*}

\textbf{3)}: We already know that $\widehat{\mathbf{Y}}\in ${$\mathcal{B}_{A}%
{:=\mathcal{B}\cap \{\mathbf{Y}\in (L^{0}(P))^{N}\mid
\sum_{n=1}^{N}Y^{n}\leq A\}}$}. From Proposition \ref{propfairprob} we
deduce 
\begin{equation*}
A={\sum_{n=1}^{N}}E_{\widehat{Q}^{n}}[\widehat{Y}^{n}]\leq {\sum_{n=1}^{N}%
\widehat{Y}^{n}\leq A.}
\end{equation*}
\end{proof}

We now turn our attention to uniqueness and Pareto optimality, but we will
need an additional property and an auxiliary result.

\begin{definition}[Def. 4.18 in \protect\cite{bffm}]
\label{DefTrunc}We say that $\mathcal{B}\subseteq (L^{0}(P))^{N}$ is \emph{%
closed under truncation} if for each $\mathbf{Y}\in \mathcal{B}$ there
exists $m_{Y}\in \mathbb{N}$ and $\mathbf{c}_{Y}=(c_{Y}^{1},...,c_{Y}^{N})%
\in \mathbb{R}^{N}$ such that $\sum_{n=1}^{N}c_{Y}^{n}=%
\sum_{n=1}^{N}Y^{n}:=c_{Y}\in \mathbb{R}$ and for all $m\geq m_{Y}$ 
\begin{equation}
\mathbf{Y}_{m}:=\mathbf{Y}I_{\{\cap _{n=1}^{N}\{|Y^{n}|<m\}\}}+\mathbf{c}%
_{Y}I_{\{\cup _{n=1}^{N}\{|Y^{n}|\geq m\}\}}\in \mathcal{B}.  \label{EqTrunc}
\end{equation}
\end{definition}

\begin{remark}
We stress the fact that all the sets introduced in Example \ref{exCh}
satisfy closedness under truncation.
\end{remark}

\begin{lemma}
\label{lemmaclosed}

Let $\mathcal{B}$ be closed under truncation. Then for every $A\in {\mathbb{R%
}}$ 
\begin{equation*}
\mathcal{B}_{A}\cap \mathcal{L}\subseteq \overline{\mathcal{B}_{A}}.
\end{equation*}
\end{lemma}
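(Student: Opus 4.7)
The plan is to show that any $\mathbf{Y}\in\mathcal{B}_A\cap\mathcal{L}$ can be approximated pointwise by bounded truncates $\mathbf{Y}_m\in\mathcal{B}_A\cap M^\Phi$ with $\sum_n Y_m^n$ equal to the (deterministic) constant $c_Y:=\sum_n Y^n\leq A$, and then to transfer the ``fairness'' inequality \eqref{fair} from Proposition \ref{propfairprob} to $\mathbf{Y}$ itself by dominated convergence.

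First I would fix $\mathbf{Y}\in\mathcal{B}_A\cap\mathcal{L}$. Since $\mathcal{B}\subseteq\mathcal{C}_{\mathbb{R}}$, the sum $c_Y:=\sum_{n=1}^N Y^n$ is a real number with $c_Y\leq A$. Closedness under truncation (Definition \ref{DefTrunc}) yields $m_Y\in\mathbb{N}$ and $\mathbf{c}_Y\in\mathbb{R}^N$ with $\sum_n c_Y^n=c_Y$ such that, for $m\geq m_Y$, the truncated vector $\mathbf{Y}_m$ defined in \eqref{EqTrunc} belongs to $\mathcal{B}$. By construction $|Y_m^n|\leq m+\max_k|c_Y^k|$, so $\mathbf{Y}_m\in(L^\infty)^N\subseteq M^\Phi$, and
\[
\sum_{n=1}^N Y_m^n = c_Y\mathbf{1}_{\cap_n\{|Y^n|<m\}}+c_Y\mathbf{1}_{\cup_n\{|Y^n|\geq m\}}= c_Y\leq A,
\]
so that $\mathbf{Y}_m\in\mathcal{B}_A\cap M^\Phi$.

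Next, I would invoke Proposition \ref{propfairprob} applied to each $\mathbf{Y}_m\in\mathcal{B}\cap M^\Phi$ and each $\mathbf{Q}\in\mathcal{Q}_v\subseteq\mathcal{Q}$, obtaining
\[
\sum_{n=1}^N E_{Q^n}[Y_m^n]\;\leq\;\sum_{n=1}^N Y_m^n\;=\;c_Y\;\leq\;A.
\]
To pass to the limit as $m\to\infty$, note that on the event $\cap_n\{|Y^n|<m\}$ (which increases $P$-a.s.\ to $\Omega$) we have $Y_m^n=Y^n$, while on its complement $Y_m^n=c_Y^n$; therefore $Y_m^n\to Y^n$ $P$-a.s., and $|Y_m^n|\leq |Y^n|+|c_Y^n|$. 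Because $\mathbf{Y}\in\mathcal{L}$, the majorant lies in $L^1(Q^n)$ for every $\mathbf{Q}\in\mathcal{Q}_v$, and dominated convergence gives $E_{Q^n}[Y_m^n]\to E_{Q^n}[Y^n]$.

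Passing to the limit in the inequality above yields $\sum_{n=1}^N E_{Q^n}[Y^n]\leq A$ for every $\mathbf{Q}\in\mathcal{Q}_v$. Combined with $\mathbf{Y}\in\mathcal{L}$, this is exactly the definition of $\mathbf{Y}\in\overline{\mathcal{B}_A}$. The only point requiring care is the legitimacy of the dominated convergence step, which is precisely why membership of $\mathbf{Y}$ in $\mathcal{L}$ (as opposed to just $(L^0(P))^N$) is essential in the statement; all remaining pieces are bookkeeping.
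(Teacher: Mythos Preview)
Your proof is correct and follows essentially the same approach as the paper: construct bounded truncates $\mathbf{Y}_m\in\mathcal{B}_A\cap M^\Phi$ via Definition \ref{DefTrunc}, apply the fairness inequality \eqref{fair} to each $\mathbf{Y}_m$, and pass to the limit by dominated convergence using the $L^1(\mathbf{Q})$-majorant furnished by $\mathbf{Y}\in\mathcal{L}$. If anything, your argument is slightly more streamlined: you invoke only the first part of Proposition \ref{propfairprob} (valid for every $\mathbf{Q}\in\mathcal{Q}$) and handle all $\mathbf{Q}\in\mathcal{Q}_v$ uniformly, whereas the paper separates the cases $\mathbf{Q}\sim P$ and $\mathbf{Q}\ll P$.
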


\begin{proof}
Fix any $\mathbf{Q}\in\mathcal{Q}_v$ and argue as in Proposition 4.20 in 
\cite{bffm}: let $\mathbf{Y}\in \mathcal{B}_A\cap\mathcal{L}\subseteq L^1(%
\mathbf{Q})$ and consider $\mathbf{Y}_{m}$ for $m\in \mathbb{N}$ as defined
in \eqref{EqTrunc}, where w.l.o.g. we assume $m_{Y}=1$. Note that $%
\sum_{n=1}^{N}Y_{m}^{n}=c_{Y}(=\sum_{n=1}^{N}Y^{n}\leq A)$ for all $m\in 
\mathbb{N} $. By boundedness of $\mathbf{Y}_{m}$ and (\ref{EqTrunc}), we
have $\mathbf{Y}_{m}\in \mathcal{B}\cap M^{\Phi }$ for all $m\in \mathbb{N}$%
. \ Further, $\mathbf{Y}_{m}\rightarrow \mathbf{Y}$ $P$-a.s. for $%
m\rightarrow \infty $ , and thus, since $|\mathbf{Y}_{m}|\leq \max \{|%
\mathbf{Y}|,|\mathbf{c}_{Y}|\}\in L^{1}(\mathbf{Q})$ for all $m\in \mathbb{N}
$, also $\mathbf{Y}_{m}\rightarrow \mathbf{Y}$ in $L^{1}(\mathbf{Q})$ for $%
m\rightarrow \infty $ by dominated convergence.

Now, if $\mathbf{Q}\sim P$ we can directly apply Proposition \ref%
{propfairprob} to get that $\sum_{n=1}^N E_{Q^n}[Y^n]\leq
\sum_{n=1}^NY^n\leq A$. If we only have $\mathbf{Q}\ll P$ we can see that %
\eqref{eqfairlimit} still holds, with the particular choice of $(\mathbf{Y}%
_{m})_m$ in place of $(\mathbf{k_n})_n$, because the construction of $%
\mathbf{Y}_{m}$ is made $P$-almost surely.
\end{proof}

Define 
\begin{equation}
\Pi (A):=\sup \left\{ \mathbb{E}\left[ \sum_{n=1}^{N}u_{n}(X^{n}+Y^{n})%
\right] \mid \mathbf{Y}\in {\mathcal{L}\cap \mathcal{B}},\,%
\sum_{n=1}^{N}Y^{n}\leq A\right\}.  \label{PI}
\end{equation}

\begin{lemma}
\label{lemmalinkpiSPi2}Let $\mathcal{B}$ be closed under truncation. If $%
\mathbf{\widehat{Q}}$ is the minimax measure from Theorem \ref%
{thmoptimumexists}, then 
\begin{equation}
\pi (A)=\Pi (A)=\pi ^{\widehat{\mathbf{Q}}}(A)=\Pi ^{\widehat{\mathbf{Q}}%
}(A)=S^{\widehat{\mathbf{Q}}}(A).  \label{piequalPI2}
\end{equation}
\end{lemma}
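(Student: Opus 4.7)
The plan is to observe that three of the four equalities in \eqref{piequalPI2} are already established, so only the identification $\pi(A)=\Pi(A)$ requires real work. By Corollary \ref{propminimum}, $\pi(A)=\pi^{\widehat{\mathbf{Q}}}(A)$, and by Lemma \ref{lemmalinkpiS} (applied to $\mathbf{Q}=\widehat{\mathbf{Q}}\in\mathcal{Q}_v=\mathscr{Q}$), $\pi^{\widehat{\mathbf{Q}}}(A)=\Pi^{\widehat{\mathbf{Q}}}(A)=S^{\widehat{\mathbf{Q}}}(A)$. So the whole task reduces to proving $\pi(A)=\Pi(A)$.

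For the easy inequality $\Pi(A)\geq \pi(A)$, note that the feasibility set for $\pi(A)$ is $\mathcal{B}_A\cap M^\Phi$, while the feasibility set for $\Pi(A)$ is $\mathcal{L}\cap\mathcal{B}\cap\{\sum_n Y^n\leq A\}=\mathcal{L}\cap\mathcal{B}_A$. Since $M^\Phi\subseteq\mathcal{L}$ we have $\mathcal{B}_A\cap M^\Phi\subseteq\mathcal{L}\cap\mathcal{B}_A$, and the objective functionals coincide on this common domain, so $\Pi(A)\geq\pi(A)$.

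For the reverse inequality $\Pi(A)\leq\pi(A)$, I would use the two key tools that were just built for this purpose. First, Lemma \ref{lemmaclosed} (which is precisely where the truncation hypothesis on $\mathcal{B}$ enters) gives $\mathcal{L}\cap\mathcal{B}_A\subseteq\overline{\mathcal{B}_A}$, so every competitor $\mathbf{Y}$ for $\Pi(A)$ already lies in $\overline{\mathcal{B}_A}$. Second, the identity \eqref{eqminimaxapplied2} from Theorem \ref{thmoptimumexists} states exactly that $\pi(A)=\max_{\mathbf{Y}\in\overline{\mathcal{B}_A}}\sum_{j=1}^N\mathbb{E}[u_j(X^j+Y^j)]$. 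Combining these two facts, for every $\mathbf{Y}\in\mathcal{L}\cap\mathcal{B}_A$ the sum $\sum_{j=1}^N\mathbb{E}[u_j(X^j+Y^j)]$ is bounded by $\pi(A)$, and taking the supremum over such $\mathbf{Y}$ yields $\Pi(A)\leq\pi(A)$.

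The only non-routine step is the inclusion $\mathcal{L}\cap\mathcal{B}_A\subseteq\overline{\mathcal{B}_A}$, and this is really where closure under truncation pays off: truncation bridges the gap from a general element of $\mathcal{L}\cap\mathcal{B}_A$ (which may be unbounded and thus outside $M^\Phi$, so Proposition \ref{propfairprob} cannot be applied directly) to bounded elements of $\mathcal{B}\cap M^\Phi$ to which $L^1(\mathbf{Q})$--dominated convergence can be applied. Since this has already been isolated in Lemma \ref{lemmaclosed}, no further delicate analysis is needed here — the proof of Lemma \ref{lemmalinkpiSPi2} is then just a few lines assembling the pieces.
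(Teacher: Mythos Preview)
Your proposal is correct and essentially matches the paper's proof: both establish $\pi(A)\leq\Pi(A)$ from $M^\Phi\subseteq\mathcal{L}$, invoke Lemma \ref{lemmaclosed} (the truncation step) to get $\mathcal{B}_A\cap\mathcal{L}\subseteq\overline{\mathcal{B}_A}$, and then close the chain using the already-proved equalities from Lemma \ref{lemmalinkpiS}. The only cosmetic difference is that you bound $\Pi(A)$ above by $\pi(A)$ directly via \eqref{eqminimaxapplied2}, whereas the paper bounds it by $\Pi^{\widehat{\mathbf{Q}}}(A)$ and then uses $\Pi^{\widehat{\mathbf{Q}}}(A)=\pi(A)$; these are two ways of reading off the same inclusion.
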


\begin{proof}
It is clear that since $\mathcal{B}_{A}\cap M^{\Phi }\subseteq \mathcal{B}%
_{A}\cap \mathcal{L}$ we have $\pi (A)\leq \Pi (A)$ just by definitions %
\eqref{defpi} and \eqref{PI}. Now observe that by Lemma \ref{lemmaclosed} we
have $\mathcal{B}_{A}\cap \mathcal{L}\subseteq \overline{\mathcal{B}_{A}}$,
so that $\Pi (A)\leq \Pi ^{\widehat{\mathbf{Q}}}(A)$. The chain of
equalities then follows by Lemma \ref{lemmalinkpiS}.
\end{proof}

\begin{theorem}
\label{thmsorteuniqueA}Let $\mathcal{B}$ be closed under truncation. Under
the same assumptions of Theorem \ref{thmsorteexistsA}, for any $\mathbf{X}%
\in M^{\Phi }$ and $A\in \mathbb{R}$ the SORTE is unique and is a Pareto
optimal allocation for both the sets 
\begin{equation}
\mathscr{V}=\left\{ \mathbf{Y}\in \mathcal{L}\cap \mathcal{B}\,\,\middle%
|\,\,\sum_{n=1}^{N}Y^{n}\leq A\,\,P\text{-a.s.}\right\} \,\,\,\,\,\text{and}%
\,\,\,\,\,\,\,\,\mathscr{V}=\left\{ \mathbf{Y}\in \mathcal{L}\,\,\middle%
|\,\,\sum_{n=1}^{N}E_{\widehat{Q}^{n}}\left[ Y^{n}\right] \leq A\right\} .
\label{Vdef}
\end{equation}
\end{theorem}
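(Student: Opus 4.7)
The plan is to reduce any SORTE triple to the object $(\widehat{\mathbf{Y}},\widehat{\mathbf{Q}},\widehat{\mathbf{a}})$ exhibited in Theorem \ref{thmsorteexistsA}, and then read off Pareto optimality from Proposition \ref{ABQ}. All the machinery is already assembled: Corollary \ref{propminimum} identifies the minimax measure $\widehat{\mathbf{Q}}$, Lemma \ref{lemmalinkpiSPi2} collapses $\pi(A)$, $\Pi(A)$, $\pi^{\widehat{\mathbf{Q}}}(A)$, $\Pi^{\widehat{\mathbf{Q}}}(A)$ and $S^{\widehat{\mathbf{Q}}}(A)$ into a single value, and Theorem \ref{thmoptimumexists} provides the key uniqueness statements.

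For uniqueness, I would take an arbitrary SORTE $(\mathbf{Y}_\mathbf{X},\mathbf{Q}_\mathbf{X},\mathbf{a}_\mathbf{X})$ with budget $A$. Conditions (1) and (2) of Definition \ref{carte}, together with Lemma \ref{lemmalinkpiS} and Proposition \ref{proppiq}, give
$$\sum_{n=1}^{N}\mathbb{E}[u_{n}(X^{n}+Y_{\mathbf{X}}^{n})]=\sum_{n=1}^{N}U_{n}^{Q_{\mathbf{X}}^{n}}(a_{\mathbf{X}}^{n})=S^{\mathbf{Q}_{\mathbf{X}}}(A)=\Pi^{\mathbf{Q}_{\mathbf{X}}}(A)=\pi^{\mathbf{Q}_{\mathbf{X}}}(A).$$
Condition (3) makes $\mathbf{Y}_{\mathbf{X}}$ feasible for $\Pi(A)$ in \eqref{PI}, so Lemma \ref{lemmalinkpiSPi2} and Corollary \ref{propminimum} yield the sandwich
$$\pi^{\widehat{\mathbf{Q}}}(A)=\pi(A)=\Pi(A)\geq \sum_{n=1}^{N}\mathbb{E}[u_{n}(X^{n}+Y_{\mathbf{X}}^{n})]=\pi^{\mathbf{Q}_{\mathbf{X}}}(A)\geq \pi(A).$$
Equality throughout, combined with the uniqueness of the minimizer in Theorem \ref{thmoptimumexists}, forces $\mathbf{Q}_{\mathbf{X}}=\widehat{\mathbf{Q}}$. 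Consequently $\mathbf{Y}_{\mathbf{X}}$ attains $\Pi^{\widehat{\mathbf{Q}}}(A)$; since it lies in $\mathcal{L}\cap\mathcal{B}$ with sum $A$, Lemma \ref{lemmaclosed} places it in $\overline{\mathcal{B}_{A}}$, and the uniqueness of the optimum in \eqref{eqminimaxapplied2} gives $\mathbf{Y}_{\mathbf{X}}=\widehat{\mathbf{Y}}$. Remark \ref{remarkmonot} then delivers $a_{\mathbf{X}}^{n}=E_{\widehat{Q}^{n}}[\widehat{Y}^{n}]=\widehat{a}^{n}$.

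For Pareto optimality, I would apply Proposition \ref{ABQ} to the functional $\Pi(\mathscr{V})$ of \eqref{piV} for each of the two sets in \eqref{Vdef}. In the first case, $\Pi(\mathscr{V})=\Pi(A)$; any maximizer lies in $\mathcal{L}\cap\mathcal{B}$ with sum $\leq A$, hence in $\overline{\mathcal{B}_A}$ by Lemma \ref{lemmaclosed}, so Lemma \ref{lemmalinkpiSPi2} together with the uniqueness in Theorem \ref{thmoptimumexists} forces it to be $\widehat{\mathbf{Y}}$. In the second case, $\Pi(\mathscr{V})=\Pi^{\widehat{\mathbf{Q}}}(A)$, and $\widehat{\mathbf{Y}}\in\mathscr{V}$ by \eqref{EQ}; the Fenchel-equality characterization inherent in Proposition \ref{proppiq}, combined with the uniqueness of the dual optimizer $\widehat{\lambda}$ from Theorem \ref{thmoptimumexists}, pins down any maximizer via the pointwise identity from \eqref{YY}, namely $Y^{n}=-X^{n}-v_{n}^{\prime}(\widehat{\lambda}\,\mathrm{d}\widehat{Q}^{n}/\mathrm{d}P)=\widehat{Y}^{n}$ $P$-a.s.

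The main obstacle I anticipate is this second Pareto claim: the feasible set $\{\mathbf{Y}\in\mathcal{L}\mid \sum_{n}E_{\widehat{Q}^{n}}[Y^{n}]\leq A\}$ is strictly larger than $\overline{\mathcal{B}_A}$ and carries no natural inclusion into it, so Theorem \ref{thmoptimumexists}'s uniqueness statement does not transfer mechanically. The way around is to work directly at the level of the dual formulation of $\Pi^{\widehat{\mathbf{Q}}}(A)$ in Proposition \ref{proppiq}: the Fenchel inequality is tight for a candidate optimizer $\mathbf{Y}$ if and only if $Y^{n}$ equals the unique argmax in $v_{n}(\widehat{\lambda}\,\mathrm{d}\widehat{Q}^{n}/\mathrm{d}P)=\sup_{x}(u_{n}(x)-x\widehat{\lambda}\,\mathrm{d}\widehat{Q}^{n}/\mathrm{d}P)$ shifted by $-X^{n}$, which by differentiability of $v_{n}$ under Assumption \ref{A00}(a) is exactly $\widehat{Y}^{n}$.
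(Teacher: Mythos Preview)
Your proof is correct and tracks the paper's argument closely. The uniqueness part is essentially identical: both you and the paper use Lemma \ref{lemmalinkpiS} and Corollary \ref{propminimum} to sandwich $\pi^{\mathbf{Q}_{\mathbf{X}}}(A)$ between $\pi(A)$ and itself, invoke Lemma \ref{lemmaclosed} to place $\mathbf{Y}_{\mathbf{X}}$ in $\overline{\mathcal{B}_A}$, and then read off $\mathbf{Q}_{\mathbf{X}}=\widehat{\mathbf{Q}}$ and $\mathbf{Y}_{\mathbf{X}}=\widehat{\mathbf{Y}}$ from the uniqueness statements in Theorem \ref{thmoptimumexists}.

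The one place where you diverge is the second Pareto claim, and there you are working harder than necessary. The paper handles the set $\{\mathbf{Y}\in\mathcal{L}\mid \sum_n E_{\widehat{Q}^n}[Y^n]\leq A\}$ in a single sentence: this set is convex, the objective $\sum_n \mathbb{E}[u_n(X^n+Y^n)]$ is strictly concave, and $\widehat{\mathbf{Y}}$ (which lies in it by \eqref{EQ}) attains the value $\Pi^{\widehat{\mathbf{Q}}}(A)$ by Lemma \ref{lemmalinkpiSPi2}; therefore $\widehat{\mathbf{Y}}$ is automatically the \emph{unique} maximizer, and Proposition \ref{ABQ} applies. No appeal to $\overline{\mathcal{B}_A}$ or to Theorem \ref{thmoptimumexists} is needed for this step. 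Your Fenchel-equality route---forcing $u_n'(X^n+Y^n)=\widehat{\lambda}\,\mathrm{d}\widehat{Q}^n/\mathrm{d}P$ pointwise from tightness of the duality in Proposition \ref{proppiq}---is perfectly valid and arrives at the same conclusion, but it is a detour: the obstacle you anticipated (that the feasible set is larger than $\overline{\mathcal{B}_A}$) is not an obstacle at all once strict concavity is on the table, since uniqueness of maximizers of strictly concave functionals needs only convexity of the domain, not any polarity structure.
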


\begin{proof}
Use Proposition \ref{proppiq} and Corollary \ref{propminimum} to get that
for any $\mathbf{Q}\in \mathcal{Q}_{v}$%
\begin{equation}
\Pi ^{\mathbf{Q}}(A)=\pi ^{\mathbf{Q}}(A)\geq \pi (A).  \label{ineqchain1A1}
\end{equation}%
Let $(\widetilde{\mathbf{Y}},\widetilde{\mathbf{Q}},\widetilde{\mathbf{a}})$
be a SORTE and $(\widehat{\mathbf{Y}},\widehat{\mathbf{Q}},\widehat{\mathbf{a%
}})$ be the one from Theorem \ref{thmsorteexistsA}.

By 1) and 2) in the definition of SORTE, together with Lemma \ref%
{lemmalinkpiS}, we see that $\widetilde{\mathbf{Y}}$ is an optimum for $\Pi
^{\widetilde{\mathbf{Q}}}(A)=S^{\widetilde{\mathbf{Q}}}(A)$. Also, $%
\widetilde{\mathbf{Y}}\in \mathcal{B}_{A}\cap \overline{\mathcal{B}_{A}}$ by
Lemma \ref{lemmaclosed}. We can conclude by equation %
\eqref{eqminimaxapplied2} that 
\begin{equation*}
\pi (A)\geq \sum_{n=1}^{N}\mathbb{E}\left[ u_{n}\left( X^{n}+\widetilde{Y}%
^{n}\right) \right] =\Pi ^{\widetilde{\mathbf{Q}}}(A)\overset{\text{eq.}%
\eqref{eqminimaxfixq1}}{=}\pi ^{\widetilde{\mathbf{Q}}}(A)\overset{\text{Cor.%
}\ref{propminimum}}{\geq }\pi (A),
\end{equation*}%
which tells us that $\pi (A)=\pi ^{\widetilde{\mathbf{Q}}}(A)=\sum_{n=1}^{N}%
\mathbb{E}\left[ u_{n}\left( X^{n}+\widetilde{Y}^{n}\right) \right] .$

By Theorem \ref{thmoptimumexists}, we also have $\pi (A)=\sum_{n=1}^{N}%
\mathbb{E}\left[ u_{n}\left( X^{n}+\widehat{Y}^{n}\right) \right] $. Then $%
\widehat{\mathbf{Y}},\widetilde{\mathbf{Y}}\in \overline{\mathcal{B}_{A}}$
(Lemma \ref{lemmaclosed}) and $\Pi (A)=\pi (A)$ (Lemma \ref{lemmalinkpiSPi2}%
) imply that both $\widehat{\mathbf{Y}},\widetilde{\mathbf{Y}}$ are optima
for $\Pi (A)$. By strict concavity of the utilities $u_{1},\dots ,u_{N}$, $%
\Pi (A)$ has at most one optimum. From this, together with uniqueness of the
minimax measure (see Theorem \ref{thmoptimumexists}), we get $(\widetilde{%
\mathbf{Y}},\widetilde{\mathbf{Q}})=(\widehat{\mathbf{Y}},\widehat{\mathbf{Q}%
})$. We infer from equation \eqref{aY} and Remark \ref{remarkmonot} that
also $\widetilde{\mathbf{a}}=\widehat{\mathbf{a}}$.

To prove the Pareto optimality observe that Theorem \ref{thmoptimumexists}
proves that $\widehat{\mathbf{Y}}\in \overline{\mathcal{B}_{A}}\mathcal{%
\subseteq L}$ is the unique optimum for $\Pi (A)$ (see Lemma \ref%
{lemmalinkpiSPi2}) and so it is also the unique optimum for $\Pi ^{\widehat{%
\mathbf{Q}}}(A)$. Pareto optimality then follows from Proposition \ref{ABQ},
noticing that $\Pi(\mathscr{V})$ for the two sets in \eqref{Vdef} are $%
\Pi(A) $ and $\Pi ^{\widehat{\mathbf{Q}}}(A)$ respectively.
\end{proof}

\subsection{Dependence of the SORTE on $\mathbf{X}$ and on $\mathcal{B}$}

\label{SecXB}

We see from the proof of Theorem \ref{thmsorteexistsA} that the triple
defining the SORTE (obviously) depends on the choice of $A$. We now focus on
the study of how such triple depends on $\mathbf{X}$. To this end, we first
specialize to the case $\mathcal{B}=\mathcal{C}_{\mathbb{R}}$.

\begin{proposition}
\label{propYfsumX} Under the hypotheses of Theorem \ref{thmsorteexistsA} and
for $\mathcal{B}=\mathcal{C}_{\mathbb{R}}$, the variables $\frac{\mathrm{d}%
\widehat{\mathbf{Q}}}{\mathrm{d}P}$ and $\mathbf{X}+\widehat{\mathbf{Y}}$
are $\sigma(X^1+\dots+X^N)$ (essentially) measurable.
\end{proposition}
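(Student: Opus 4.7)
The plan is to combine two observations: (i) when $\mathcal{B}=\mathcal{C}_{\mathbb{R}}$, the equilibrium vector $\widehat{\mathbf{Q}}$ has all components equal, and (ii) the explicit dual representation \eqref{YY} of $\widehat{\mathbf{Y}}$ from Theorem \ref{thmoptimumexists} then makes $\mathbf{X}+\widehat{\mathbf{Y}}$ a deterministic function of $\frac{\mathrm{d}\widehat{Q}}{\mathrm{d}P}$, which in turn is pinned down by $\sum_{n}X^{n}$ via an injective scalar equation.

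First I would verify (i). With $\mathcal{B}=\mathcal{C}_{\mathbb{R}}$, the cone $\mathcal{B}_{0}\cap M^{\Phi}$ is the full linear subspace $\{\mathbf{Y}\in M^{\Phi}\mid\sum_{n}Y^{n}=0\}$. If $\mathbf{Z}$ lies in its polar, then $\mathbb{E}[\sum_{n}Y^{n}Z^{n}]=0$ for every such $\mathbf{Y}$, which forces $Z^{1}=\cdots=Z^{N}$ $P$-a.s. By Lemma \ref{rempolarisnice} this means every $\mathbf{Q}\in\mathcal{Q}$ has equal components; in particular, write $\widehat{Q}:=\widehat{Q}^{1}=\cdots=\widehat{Q}^{N}$ and $\eta:=\frac{\mathrm{d}\widehat{Q}}{\mathrm{d}P}$, with $\eta>0$ $P$-a.s. because $\widehat{\mathbf{Q}}\sim P$ by Theorem \ref{thmoptimumexists}.

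Next, from \eqref{YY} I immediately obtain, for every $j$,
\begin{equation*}
X^{j}+\widehat{Y}^{j}=-v_{j}'(\widehat{\lambda}\,\eta),
\end{equation*}
so each $X^{j}+\widehat{Y}^{j}$ is already $\sigma(\eta)$-measurable. Summing over $j$ and using $\sum_{j}\widehat{Y}^{j}=A$ (condition 3 of Theorem \ref{thmsorteexistsA}) yields
\begin{equation*}
\sum_{j=1}^{N}X^{j}=-A-\sum_{j=1}^{N}v_{j}'(\widehat{\lambda}\,\eta)=:G(\eta),
\end{equation*}
where $G:(0,\infty)\to\mathbb{R}$ is continuous and strictly decreasing: indeed, since each $u_{j}$ is strictly concave and differentiable with $u_{j}'$ ranging onto $(0,\infty)$ by the Inada-type conditions in Assumption \ref{A00}(a), the conjugate $v_{j}$ is strictly convex and $v_{j}'$ is strictly increasing, hence $-v_{j}'(\widehat{\lambda}\cdot)$ is strictly decreasing for each $j$.

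Strict monotonicity of $G$ gives a measurable inverse $G^{-1}$, so $\eta=G^{-1}\bigl(\sum_{j}X^{j}\bigr)$ is $\sigma(X^{1}+\cdots+X^{N})$-measurable. Consequently $\frac{\mathrm{d}\widehat{Q}}{\mathrm{d}P}$ and each $X^{j}+\widehat{Y}^{j}=-v_{j}'(\widehat{\lambda}\,G^{-1}(\sum_{k}X^{k}))$ are $\sigma(X^{1}+\cdots+X^{N})$-measurable, which is exactly the claim. I do not anticipate a serious obstacle: the only point requiring care is checking that strict monotonicity of $G$ actually follows from the stated assumptions (which it does because the $v_{j}'$ are strict inverses of the $u_{j}'$), and that one is entitled to restrict to the open half-line $(0,\infty)$, which is ensured by $\widehat{\mathbf{Q}}\sim P$.
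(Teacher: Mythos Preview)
Your argument is correct, but it follows a genuinely different route from the paper's. Both proofs begin by showing that $\mathcal{B}=\mathcal{C}_{\mathbb{R}}$ forces all components of any $\mathbf{Q}\in\mathcal{Q}$ to coincide, so that $\widehat{Q}^{1}=\dots=\widehat{Q}^{N}=:\widehat{Q}$. From there the approaches diverge. The paper works on the \emph{dual} side: for $\mathcal{G}:=\sigma(\sum_{n}X^{n})$ it observes that the dual objective \eqref{eqminimaxapplied3}, evaluated at a common density $\frac{\mathrm{d}Q}{\mathrm{d}P}$, can only improve (by Jensen's inequality applied to the convex $v_{j}$) if one replaces $\frac{\mathrm{d}Q}{\mathrm{d}P}$ by $\mathbb{E}[\frac{\mathrm{d}Q}{\mathrm{d}P}\mid\mathcal{G}]$; since the latter still defines an element of $\mathcal{Q}$, uniqueness of the dual minimizer forces $\frac{\mathrm{d}\widehat{Q}}{\mathrm{d}P}$ to be $\mathcal{G}$-measurable, and the measurability of $\mathbf{X}+\widehat{\mathbf{Y}}$ then drops out of \eqref{YY}. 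You instead stay on the \emph{primal} side: summing the identities $X^{j}+\widehat{Y}^{j}=-v_{j}'(\widehat{\lambda}\eta)$ and using the constraint $\sum_{j}\widehat{Y}^{j}=A$ gives a strictly monotone scalar equation $G(\eta)=\sum_{j}X^{j}$, which you invert. Your route is more elementary and self-contained; the paper's conditioning argument is less explicit but generalizes immediately to the clustering constraints $\mathcal{B}^{(\mathbf{I})}$ of Example \ref{exCh} (as noted in the remark following the proposition), where one conditions on $\sigma(\sum_{i\in I_{m}}X^{i})$ groupwise without needing to identify an invertible $G$. One small caveat: your strict monotonicity of $G$ relies on strict concavity of the $u_{j}$, which Assumption \ref{A00}(a) does not state verbatim, but the paper itself invokes it (e.g.\ injectivity of the $v_{j}'$ in Step 6 of Theorem \ref{thmoptimumexists}), so this is consistent with the ambient assumptions.
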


\begin{proof}
By Theorem \ref{thmsorteexistsA} and Theorem \ref{thmsorteuniqueA} we have
that $(\widehat{\lambda },\widehat{\mathbf{Q}})$ is an optimum of the RHS of
equation \eqref{eqminimaxapplied3}. Notice that in this specific case $%
\mathbf{Y}:=\mathbf{e_{i}}1_{A}-\mathbf{e_{j}}1_{A}\in \mathcal{B}\cap
M^{\Phi }$ for all $i,j$ and all measurable sets $A\in \mathcal{F}$. Let $%
\mathbf{Q}\in \mathcal{Q}$. Then from (\ref{fair}) $%
\sum_{n=1}^{N}(E_{Q^{n}}[Y^{n}]-Y^{n})\leq 0$ and so $%
Q^{i}(A)-1_{A}-Q^{j}(A)+1_{A}\leq 0,$ i.e., $Q^{i}(A)-Q^{j}(A)\leq 0$.
Similarly taking $\mathbf{Y}:=-\mathbf{e_{i}}1_{A}+\mathbf{e_{j}}1_{A}\in 
\mathcal{B}$, we get $Q^{j}(A)-Q^{i}(A)\leq 0$. Hence all the components of
vectors in $\mathcal{Q}$ are equal. Let $\mathcal{G}:=\sigma (X^{1}+\dots
+X^{N})$. Then for any $\lambda \in {\mathbb{R}}_{++}$ and any $\mathbf{Q}%
=[Q,\dots,Q]\in \mathcal{Q}$ we have: 
\begin{equation*}
\lambda \left( \sum_{n=1}^{N}E_{{Q}^{n}}\left[ X^{n}\right] +A\right)
+\sum_{n=1}^{N}\mathbb{E}\left[ v_{n}\left( \lambda \frac{\mathrm{d}Q^{n}}{%
\mathrm{d}P}\right) \right] =\lambda \left( \mathbb{E}\left[ \left(
\sum_{n=1}^{N}X^{n}\right) \frac{\mathrm{d}Q}{\mathrm{d}P}\right] +A\right)
+\sum_{n=1}^{N}\mathbb{E}\left[ v_{n}\left( \lambda \frac{\mathrm{d}Q}{%
\mathrm{d}P}\right) \right] =
\end{equation*}%
\begin{equation*}
\lambda \left( \mathbb{E}\left[ \left( \sum_{n=1}^{N}X^{n}\right) \mathbb{E}%
\left[ \frac{\mathrm{d}Q}{\mathrm{d}P}\middle|\mathcal{G}\right] \right]
+A\right) +\sum_{n=1}^{N}\mathbb{E}\left[ \mathbb{E}\left[ v_{n}\left(
\lambda \frac{\mathrm{d}Q}{\mathrm{d}P}\right) \middle|\mathcal{G}\right] %
\right] \geq
\end{equation*}%
\begin{equation*}
\lambda \left( \sum_{n=1}^{N}\mathbb{E}\left[ X^{n}\mathbb{E}\left[ \frac{%
\mathrm{d}Q}{\mathrm{d}P}\middle|\mathcal{G}\right] \right] +A\right)
+\sum_{n=1}^{N}\mathbb{E}\left[ v_{n}\left( \lambda \mathbb{E}\left[ \frac{%
\mathrm{d}Q}{\mathrm{d}P}\middle|\mathcal{G}\right] \right) \right] \,,
\end{equation*}%
where in the last inequality we exploited the tower property and Jensen
inequality, as $v_{1},\dots ,v_{N}$ are convex. Notice now that $\mathbb{E}%
\left[ \frac{\mathrm{d}Q}{\mathrm{d}P}\middle|\mathcal{G}\right] $ defines
again a probability measure (on the whole $\mathcal{F}$, the initial sigma
algebra) and that this measure still belongs to $\mathcal{Q}$ since all its
components are equal. As a consequence, the minimum in equation (\ref%
{eqminimaxapplied3}) can be equivalently taken over $\lambda \in {\mathbb{R}}%
_{++}$ (as before) and $\mathbf{Q}\in \mathcal{Q}\cap \left( L^{0}(\Omega ,%
\mathcal{G},P)\right) ^{N}$. The claim for $\widehat{\mathbf{Y}}$ follows
from (\ref{YY}).
\end{proof}

It is interesting to notice that this dependence on the componentwise sum of 
$\mathbf{X}$ also holds in the case of B\"{u}hlmann's equilibrium (see \cite%
{Buhlmann} page 16 and \cite{Borch}).

\begin{remark}
In the case a cluster of agents, see the Example \ref{exCh}, the above
result can be clearly generalized: the $i$-th component of the vector $%
\widehat{\mathbf{Q}}$, for $i$ belonging to the $m$-th group, only depends
on the sum of those components of $\mathbf{X}$ whose corresponding indexes
belong to the $m$-th group itself. It is also worth mentioning that if we
took $\mathcal{B}^{(\mathbf{I})}=\mathbb{R}^{N}$, we would see that each
component of $\widehat{\mathbf{Q}}$ and of $\widehat{\mathbf{Y}}$ is a measurable function of the corresponding component of $\mathbf{X}$. This is reasonable
since, in this case, at the final time each agent would be only allowed to share and exchange
risk with herself/himself and the systemic features of the model we are
considering would be lost.
\end{remark}

We provide now some additional examples, to the ones in Example \ref{exCh},
of possible feasible sets $\mathcal{B}$ and study the dependence of the
probability measures from $\mathcal{B}$.

\begin{example}
\label{exmixtures} Consider a measurable partition $A_{1},\dots ,A_{K}$ of $%
\Omega $ and a collection of partitions $\mathbf{I}^{1},\dots ,\mathbf{I}%
^{K} $ of $\{1,\dots ,N\}$ as in Example \ref{exCh}. Take the associated
clusterings $\mathcal{B}^{(\mathbf{I}^{1})},\dots ,\mathcal{B}^{(\mathbf{I}%
^{K})}$ defined as in \eqref{C0}. Then the set 
\begin{equation}
\mathcal{B}:=\left( \sum_{i=1}^{K}\mathcal{B}^{(\mathbf{I}%
^{i})}1_{A_{i}}\right) \cap \mathcal{C}_{\mathbb{R}}  \label{randomcluster}
\end{equation}%
satisfies Assumptions \ref{A00} and is closed under truncation, as it can be
checked directly.
\end{example}

The set in (\ref{randomcluster}) can be seen as a scenario-dependent
clustering. A particular simple case of \eqref{randomcluster} is the
following. For a measurable set $A_1\in\mathcal{F}$ take $%
A_2=\Omega\setminus A_1$. Then set $\mathcal{C}_{\mathbb{R}} 1_{A_1}+{%
\mathbb{R}}^N 1_{A_2}$ is of the form \eqref{randomcluster} and consists of
all the $\mathbf{Y}\in (L^{0})^N$ such that (i) there exists a real number $%
\sigma\in{\mathbb{R}}$ with $\sum_{n=1}^NY^n=\sigma$ $P-$a.s. on $A_1$, (ii)
there exists a vector $\mathbf{b}\in{\mathbb{R}}^N$ such that $\mathbf{Y}=%
\mathbf{b}$ $P-$a.s. on $A_2$ and (iii) $\sigma=\sum_{n=1}^Nb^n$ (recall
that $\mathbf{Y}\in\mathcal{C}_{\mathbb{R}}$ by \eqref{randomcluster}).

Let us motivate Example \ref{exmixtures} with the following practical
example. Suppose for each bank $i$ a regulator establishes an excessive
exposure threshold $D^i$. If the position of bank $i$ falls below such
threshold, we can think that it is too dangerous for the system to let that
bank take part to the risk exchange. As a consequence, in the clustering
example, on the event $\{X^i\leq D^i\}$ we can require the bank to be left
alone. Also the symmetric situation can be considered: a bank $j$ whose
position is too good, say exceeding a value $A^j$, will not be willing to
share risk with all others, thus entering the game only as isolated
individual or as a member of the groups of \textquotedblleft
safer\textquotedblright \text{ }banks. Both these requirements, and many others (say
considering random thresholds) can be modelled with the constraints
introduced in Example \ref{exmixtures}.

It is interesting to notice that, as in Example \ref{exCh}, assuming a
constraint set of the form given in Example \ref{exmixtures} forces a
particular behavior on the probability vectors in $\mathcal{Q}_{v}$.

\begin{lemma}
\label{lemmaalleqonf}Let $\mathcal{B}$ be as in Example \ref{exmixtures} and
let $\mathbf{Q\in }\mathcal{Q}_{v}$. Fix any $i\in \left\{ 1,...,K\right\} $
and any group $\mathbf{I}_{m}^{i}$ of the partition $\mathbf{I}^{i}=(\mathbf{%
I}_{m}^{i})_{m}$. Then all the components $Q^{j}$, $j\in \mathbf{I}_{m}^{i}$%
, agree on $\mathcal{F}|_{A_{i}}:=\{F\cap A_{i},\,F\in \mathcal{F}\}$.
\end{lemma}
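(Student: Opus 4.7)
The plan is to exploit the definition of $\mathcal{Q}_v\subseteq\mathcal{Q}$ by testing against cleverly chosen vectors $\mathbf{Y}\in\mathcal{B}_0\cap M^{\Phi}$. Recall that by definition of $\mathcal{Q}$, for every such $\mathbf{Y}$ one has $\sum_{n=1}^N\mathbb{E}[Y^n\,\mathrm{d}Q^n/\mathrm{d}P]\le 0$. If I can fabricate a vector $\mathbf{Y}$ whose only nonzero components are $Y^{j_1}=1_F$ and $Y^{j_2}=-1_F$ for arbitrary $j_1,j_2\in\mathbf{I}^i_m$ and arbitrary $F\in\mathcal{F}|_{A_i}$, then the polar inequality immediately yields $Q^{j_1}(F)\le Q^{j_2}(F)$; swapping the roles of $j_1$ and $j_2$ gives the reverse inequality, hence equality on $\mathcal{F}|_{A_i}$.

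The main content is thus to verify membership $\mathbf{Y}\in\mathcal{B}_0\cap M^{\Phi}$. Boundedness of $\mathbf{Y}$ gives $\mathbf{Y}\in M^{\Phi}$ trivially. For $\mathbf{Y}\in\mathcal{B}_0$, I would first write $F=F'\cap A_i$ with $F'\in\mathcal{F}$ (so $F\subseteq A_i$ and $F\cap A_{i'}=\emptyset$ for $i'\ne i$), and then exhibit the decomposition $\mathbf{Y}=\sum_{\ell=1}^K \mathbf{Z}^{\ell}1_{A_\ell}$ required by \eqref{randomcluster}: take $\mathbf{Z}^i:=\mathbf{Y}$ and $\mathbf{Z}^{\ell}:=0$ for $\ell\ne i$. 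The zero vector clearly belongs to $\mathcal{B}^{(\mathbf{I}^{\ell})}$. For the $i$-th component, since $j_1,j_2$ lie in the same group $\mathbf{I}^i_m$, the only group-sum affected by the $\pm 1_F$ entries is the $m$-th one, and it equals $1_F-1_F=0$, which is the constant $d_m=0$; every other group-sum is identically $0$. Hence $\mathbf{Z}^i\in\mathcal{B}^{(\mathbf{I}^i)}$. Finally $\sum_{n=1}^N Y^n=0\in\mathbb{R}$ so $\mathbf{Y}\in\mathcal{C}_{\mathbb{R}}$, and together with $\sum_{n=1}^N Y^n\le 0$ this places $\mathbf{Y}$ in $\mathcal{B}_0$.

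With membership established, I apply the polar defining condition of $\mathcal{Q}$:
\begin{equation*}
0\ge\sum_{n=1}^N\mathbb{E}\!\left[Y^n\tfrac{\mathrm{d}Q^n}{\mathrm{d}P}\right]=Q^{j_1}(F)-Q^{j_2}(F),
\end{equation*}
and running the construction with $Y^{j_1}=-1_F$, $Y^{j_2}=1_F$ gives the opposite inequality. Thus $Q^{j_1}(F)=Q^{j_2}(F)$ for every $F\in\mathcal{F}|_{A_i}$, which is exactly the claim.

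There is no serious obstacle here; the only subtle point is checking that the indicator test vector is compatible with the scenario-dependent clustering structure, which works precisely because $F\subseteq A_i$ confines the perturbation to the scenario $A_i$ where the partition $\mathbf{I}^i$ (containing $j_1,j_2$ in the same group $\mathbf{I}^i_m$) is active. The argument does not use the integrability requirement built into the $v$ in $\mathcal{Q}_v$, so it actually establishes the stronger statement that the conclusion holds for every $\mathbf{Q}\in\mathcal{Q}$.
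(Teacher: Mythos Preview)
Your proof is correct and follows essentially the same approach as the paper: construct the test vector $(1_F)\mathbf{e}_{j_1}-(1_F)\mathbf{e}_{j_2}$ with $F\subseteq A_i$ and $j_1,j_2\in\mathbf{I}^i_m$, verify it lies in $\mathcal{B}_0\cap M^{\Phi}$, and apply the polar inequality twice. The paper chooses to spell out only the special case $K=2$, $\mathcal{B}^{(\mathbf{I}^1)}=\mathcal{C}_{\mathbb{R}}$, $\mathcal{B}^{(\mathbf{I}^2)}=\mathbb{R}^N$ and says the general case is clear by the same method; you carry out the general case directly, with the same underlying idea.
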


\begin{proof}
We think it is more illuminating to prove the statement in a simplified
case, rather than providing a fully formal proof (which would require
unnecessairly complicated notation). This is "without loss of generality" in
the sense that it is clear how to generalize the method. To this end, let us
consider the case $K=2$ (i.e. $A_2=A_1^c$) and $\mathcal{B}^{(\mathbf{I^1})}
:=\mathcal{C}_{\mathbb{R}}$, $\mathcal{B}^{(\mathbf{I^2})}:={\mathbb{R}}^N$.
For any $F\in\mathcal{F}$ and $i,j\in\{1,\dots,N\}$ we can take $\mathbf{Y}%
:=\left(1_{F}(\mathbf{e_i}-\mathbf{e_j})\right)1_{A_1}+\mathbf{0}1_{A_2}$ to
obtain $\mathbf{Y}\in\mathcal{C}_{\mathbb{R}} 1_{A_1}+{\mathbb{R}}^N1_{A_2}$%
, $\sum_{j=1}^NY^j=0$. By definition of $\mathcal{Q}_v$ we get for any $%
\mathbf{Q}\in\mathcal{Q}_v$ that $Q^i(A\cap F)-Q^j(A\cap F)\leq 0$, and
interchanging $i,j$ yields $Q^i(A\cap F)=Q^j(A\cap F)$ for any $i,j=1\dots,N$%
, $F\in\mathcal{F}$.
\end{proof}

\section{Exponential Case}
\label{sectionexp}
 We now specialize our analysis to the exponential setup,
where 
\begin{equation}
u_{n}(x):=1-\exp (-\alpha _{n}x),\,n=1,\dots ,N\,\,\,\,\,\text{ for }%
\,\,\,\,\,\alpha _{1},\dots ,\alpha _{N}>0.  \label{exponut}
\end{equation}%
This allows us to provide explicit formulas for a wide range of constraint
sets $\mathcal{B}$ (namely, all those introduced in Example \ref{exCh}) and
so the stability properties of SORTE, with respect to a different weighting
of utilities, will be evident.

\subsection{Explicit formulas}

We consider a set of constraints of the form $\mathcal{B}=\mathcal{B}^{(%
\mathbf{I})}$ as given in Example \ref{exCh}. Given $\mathbf{X}\in M^{\Phi }$
and $m\in \{1,\dots ,h\}$, we set: 
\begin{eqnarray*}
\beta _{m} &:&=\sum_{n\in I_{m}}\frac{1}{\alpha _{n}}\,\,\,\,\,\,\,\beta
:=\sum_{n=1}^{N}\frac{1}{\alpha _{n}}\,\,\,\,\,\,\,\,\overline{X}%
_{m}:=\sum_{n\in I_{m}}X^{n}, \\
R(n) &:&=\frac{\frac{1}{\alpha _{n}}}{\sum_{k=1}^{N}\frac{1}{\alpha _{k}}}%
\text{, }n=1,...N\text{, }\alpha :=(\alpha _{1},...,\alpha _{N}),\text{ }%
E_{R}\left[ \ln (\alpha )\right] =\sum_{n=1}^{N}R(n)\ln (\alpha _{n}).
\end{eqnarray*}

\begin{theorem}
\label{thmformulasexp} Take $u_{1},\dots ,u_{N}$ as given by \eqref{exponut}
and $\mathcal{B}=\mathcal{B}^{\mathbf{(I)}}$ as in Example \ref{exCh}. For $%
\mathcal{L}$ and $\mathscr{Q}$ defined in Theorem \ref{thmsorteexistsA}, the
SORTE is given by 
\begin{equation}
\begin{cases}
\widehat{Y}^{k}=-X^{k}+\frac{1}{\alpha _{k}}\left( \frac{\overline{X}_{m}}{%
\beta _{m}}-d_{m}(\mathbf{X})\right) +\frac{1}{\alpha _{k}}\left[ \frac{A}{%
\beta }+\ln \left( \alpha _{k}\right) -E_{R}\left[ \ln (\alpha )\right] %
\right] & \,\,\,\,\,k\in I_{m} \\ 
\frac{\mathrm{d}\widehat{Q}^{k}}{\mathrm{d}P}=\frac{\exp \left( -\frac{%
\overline{X}_{m}}{\beta _{m}}\right) }{\mathbb{E}\left[ \exp \left( -\frac{%
\overline{X}_{m}}{\beta _{m}}\right) \right] }=:\frac{\mathrm{d}\widehat{Q}%
^{m}}{\mathrm{d}P} & \,\,\,\,\,k\in I_{m}\label{defqexpgen} \\ 
\widehat{a}^{k}=E_{\widehat{Q}^{k}}[\widehat{Y}^{k}] & \,\,\,\,\,k=1,\dots ,N%
\end{cases}%
\end{equation}%
where 
\begin{equation*}
d_{m}(\mathbf{X}):=\left[ \sum_{j=1}^h\frac{\beta _{j}}{\beta }\ln \left( 
\mathbb{E}\left[ \exp \left( -\frac{\overline{X}_{j}}{\beta _{j}}\right) %
\right] \right) \right] -\ln \left( \mathbb{E}\left[ \exp \left( -\frac{%
\overline{X}_{m}}{\beta _{m}}\right) \right] \right) \,.
\end{equation*}
\end{theorem}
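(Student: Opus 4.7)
The strategy is to leverage Theorem \ref{thmsorteexistsA} and Theorem \ref{thmsorteuniqueA}: a SORTE exists and is unique, and by formula \eqref{YY} it is characterized by $\widehat{Y}^{k}=-X^{k}-v_{k}^{\prime}(\widehat{\lambda}\,\mathrm{d}\widehat{Q}^{k}/\mathrm{d}P)$, where $(\widehat{\lambda},\widehat{\mathbf{Q}})$ is the unique optimizer of the dual problem \eqref{eqminimaxapplied3}. For $u_n(x)=1-e^{-\alpha_n x}$ an elementary computation gives $v_n(y)=1-\tfrac{y}{\alpha_n}+\tfrac{y}{\alpha_n}\ln(y/\alpha_n)$ and $v_n^{\prime}(y)=\tfrac{1}{\alpha_n}\ln(y/\alpha_n)$ for $y>0$, so it suffices to identify $(\widehat{\lambda},\widehat{\mathbf{Q}})$ explicitly and then substitute.

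The first step is to show that for $\mathcal{B}=\mathcal{B}^{(\mathbf{I})}$ any $\mathbf{Q}\in\mathcal{Q}_v$ has $Q^{i}=Q^{j}$ whenever $i,j\in I_m$: testing on $\mathbf{Y}=1_F(\mathbf{e}_i-\mathbf{e}_j)\in\mathcal{B}^{(\mathbf{I})}_0\cap M^{\Phi}$ and its negative yields $Q^{i}(F)=Q^{j}(F)$ for every $F\in\mathcal{F}$ (this is the analogue of Lemma \ref{lemmaalleqonf}). Writing the common value as $Q^m$ with density $\eta^m$, the dual functional decouples into independent cluster minimizations
\[
\min_{\eta^m\geq 0,\,\mathbb{E}[\eta^m]=1}\Big\{\lambda\,\mathbb{E}[\overline{X}_m\eta^m]+\sum_{k\in I_m}\mathbb{E}[v_k(\lambda\eta^m)]\Big\}\,,\qquad m=1,\dots,h\,.
\]
A Lagrangian first-order condition gives $\lambda\overline{X}_m+\beta_m\ln(\lambda\eta^m)-\sum_{k\in I_m}\tfrac{1}{\alpha_k}\ln\alpha_k=\mathrm{const}$, which in view of the normalization $\mathbb{E}[\eta^m]=1$ forces exactly
\[
\eta^m=\frac{\exp(-\overline{X}_m/\beta_m)}{\mathbb{E}[\exp(-\overline{X}_m/\beta_m)]}\,,
\]
as in the statement.

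It remains to compute $\widehat{\lambda}$. Using the cluster constraint $\sum_{k\in I_m}\widehat{Y}^k\in{\mathbb{R}}$ (which holds since $\widehat{\mathbf{Y}}\in\mathcal{B}^{(\mathbf{I})}$ by Theorem \ref{thmsorteexistsA}), summation of the identity $\widehat{Y}^k=-X^k-\tfrac{1}{\alpha_k}\ln(\widehat{\lambda}\eta^m/\alpha_k)$ over $k\in I_m$ shows that the $\omega$-dependent parts cancel precisely because $\ln\eta^m=-\overline{X}_m/\beta_m-\ln\mathbb{E}[\exp(-\overline{X}_m/\beta_m)]$. Summing next over $m$ and using $\sum_{m,k\in I_m}\widehat{Y}^k=A$ (which is part of the SORTE conditions) yields
\[
\ln\widehat{\lambda}=-\frac{A}{\beta}+\frac{1}{\beta}\sum_{j=1}^{h}\beta_j\ln\mathbb{E}\!\left[\exp\!\left(-\frac{\overline{X}_j}{\beta_j}\right)\right]+E_R[\ln\alpha]\,.
\]
Substituting $\widehat{\lambda}$ and $\eta^m$ back into $\widehat{Y}^k=-X^k-\tfrac{1}{\alpha_k}\ln(\widehat{\lambda}\eta^m/\alpha_k)$, after grouping the terms one recognizes the combination defining $d_m(\mathbf{X})$ and obtains the announced formula; finally $\widehat{a}^k=E_{\widehat{Q}^k}[\widehat{Y}^k]$ is just equation \eqref{aY}.

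The potential obstacle is not conceptual but bookkeeping: one must carefully separate the $\omega$-dependent contribution (driving $\eta^m$ into a Boltzmann form) from the deterministic pieces (fixing $\widehat{\lambda}$). The decoupling of the dual over clusters, granted by the previous step, is what makes the explicit solution possible; uniqueness and optimality are then automatic from Theorem \ref{thmsorteuniqueA}, and admissibility $\widehat{\mathbf{Y}}\in\mathcal{B}^{(\mathbf{I})}$ is confirmed a posteriori by noting that $\sum_{k\in I_m}\widehat{Y}^k$ is indeed a deterministic constant.
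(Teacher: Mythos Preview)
Your approach is correct and takes a genuinely different route from the paper's. The paper proceeds by \emph{guess-and-verify}: it posits the Gibbs densities $\eta^m=\exp(-\overline{X}_m/\beta_m)/\mathbb{E}[\exp(-\overline{X}_m/\beta_m)]$ as candidates for $\widehat{\mathbf{Q}}$, computes the corresponding $\widehat{\lambda}$ from the first-order condition of the dual in $\lambda$, builds $\widehat{\mathbf{Y}}$ via \eqref{YY}, checks $\widehat{\mathbf{Y}}\in\mathcal{B}_A\cap M^\Phi$, and then closes the argument by verifying directly that the primal value $\sum_n\mathbb{E}[u_n(X^n+\widehat{Y}^n)]$ equals the dual value $K(\widehat{\lambda},\widehat{\mathbf{Q}})$; the sandwich
\[
K(\widehat{\lambda},\widehat{\mathbf{Q}})=\sum_n\mathbb{E}[u_n(X^n+\widehat{Y}^n)]\leq \pi(A)=\min K\leq K(\widehat{\lambda},\widehat{\mathbf{Q}})
\]
then certifies both optimality claims at once. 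Your route instead \emph{derives} $\widehat{\mathbf{Q}}$: after observing that $\mathcal{Q}$ consists precisely of the cluster-constant probability vectors (you prove one inclusion; the reverse is an easy check using $\sum_{k\in I_m}Y^k=d_m\in{\mathbb R}$ for $\mathbf{Y}\in\mathcal{B}^{(\mathbf{I})}_0$), the dual decouples for fixed $\lambda$ into $h$ relative-entropy minimizations whose solutions are the Gibbs densities independently of $\lambda$; you then pin down $\widehat{\lambda}$ from the primal constraint $\sum_k\widehat{Y}^k=A$ (granted by Theorem~\ref{thmsorteexistsA}) rather than from the dual first-order condition in $\lambda$. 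Your approach explains \emph{why} the Esscher-type density appears; the paper's avoids justifying an infinite-dimensional variational step. On that point you are a bit cavalier: the sentence ``a Lagrangian first-order condition gives \dots'' should be backed by the Gibbs variational inequality (Jensen applied to $\ln$: $\mathbb{E}[\eta\ln\eta]+\tfrac{1}{\beta_m}\mathbb{E}[\eta\,\overline{X}_m]\geq -\ln\mathbb{E}[e^{-\overline{X}_m/\beta_m}]$, with equality only at the Gibbs density), or alternatively by noting that the dual objective is strictly convex in $\eta^m$ and its minimizer exists by Theorem~\ref{thmoptimumexists}, so the formal critical point must be it.
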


\begin{proof}
The utilities in \eqref{exponut} satisfy Assumption \ref{A00} (a) and $%
\mathcal{B}$ satisfies Asssumption \ref{A00} (b) and closedness under
truncation, hence Theorems \ref{thmsorteexistsA} and \ref{thmsorteuniqueA}
guarantee existence and uniqueness. Recall that from this choice of $%
\mathcal{B}$ we have that for each $\mathbf{Q}\in \mathcal{Q}_{v}$, all the
components of $\mathbf{Q}$ are equal in each index subset $I_{m}$.

It is easy to check that 
\begin{equation}
v_{n}(\lambda y)=\frac{\lambda y}{\alpha _{n}}\ln \frac{\lambda }{\alpha _{n}%
}+\frac{\lambda }{\alpha _{n}}y\ln y-\frac{\lambda }{\alpha _{n}}y+1\,.
\label{eqvgen}
\end{equation}%
Substitute now $y=\frac{\mathrm{d}Q^{n}}{\mathrm{d}P}\in \mathcal{Q}_{v}$ in
the above expressions and take expectations to get 
\begin{equation}
\mathbb{E}\left[ v_{n}\left( \lambda \frac{\mathrm{d}Q^{n}}{\mathrm{d}P}%
\right) \right] =\phi _{n}(\lambda )+\frac{\lambda }{\alpha _{n}}\mathbb{E}%
\left[ \frac{\mathrm{d}Q^{n}}{\mathrm{d}P}\ln \left( \frac{\mathrm{d}Q^{n}}{%
\mathrm{d}P}\right) \right] ,\,\,\,\,\phi _{n}(\lambda )=\frac{\lambda }{%
\alpha _{n}}\ln \frac{\lambda }{\alpha _{n}}-\frac{\lambda }{\alpha _{n}}%
+1\,.  \label{expectvgen}
\end{equation}%
Let $K\left( \lambda ,\frac{\mathrm{d}\mathbf{Q}}{\mathrm{d}P}\right) $ be
the functional to be optimized in \eqref{eqminimaxapplied3}. Set 
\begin{equation*}
\xi :=\sum_{n=1}^{N}\frac{1}{\alpha _{n}}\ln \left( \frac{1}{\alpha _{n}}%
\right) ,\,\,\phi (\lambda )=\sum_{n=1}^{N}\phi _{n}(\lambda )=\lambda \xi
+\beta \lambda \ln \lambda -\lambda \beta +N.\,\ 
\end{equation*}%
Then from \eqref{expectvgen} we deduce 
\begin{equation}
K\left( \lambda ,\frac{\mathrm{d}\mathbf{Q}}{\mathrm{d}P}\right) ={\lambda }%
\left( \sum_{n=1}^{N}E_{{Q}^{n}}\left[ X^{n}\right] +A\right) +\phi ({%
\lambda })+\sum_{n=1}^{N}\frac{{\lambda }}{\alpha _{n}}\mathbb{E}\left[ 
\frac{\mathrm{d}Q^{n}}{\mathrm{d}P}\ln \left( \frac{\mathrm{d}Q^{n}}{\mathrm{%
d}P}\right) \right] .  \label{formulaKgen}
\end{equation}%
Set 
\begin{equation}
\mu :=\sum_{n=1}^{N}\frac{1}{\alpha _{n}}\mathbb{E}\left[ \frac{\mathrm{d}%
\widehat{Q}^{n}}{\mathrm{d}P}\ln \left( \frac{\mathrm{d}\widehat{Q}^{n}}{%
\mathrm{d}P}\right) \right] +A+\sum_{n=1}^{N}\mathbb{E}_{\widehat{Q}%
^{n}}[X^{n}].  \label{defmugen}
\end{equation}%
From \eqref{formulaKgen} and \eqref{defmugen} 
\begin{equation*}
K\left( \lambda ,\frac{\mathrm{d}\widehat{\mathbf{Q}}}{\mathrm{d}P}\right)
=\lambda \mu +\lambda \left( \xi +\beta \ln (\lambda )-\beta \right) +N\,.
\end{equation*}%
The associated first order condition obtained differentiating in $\lambda $
yields the unique solution 
\begin{equation*}
\widehat{\lambda }=\exp \left( -\frac{\mu +\xi }{\beta }\right) 
\end{equation*}%
which can be substituted in $K\left( \cdot ,\frac{\mathrm{d}\widehat{\mathbf{%
Q}}}{\mathrm{d}P}\right) $ yielding 
\begin{equation}
K\left( \widehat{\lambda },\frac{\mathrm{d}\widehat{\mathbf{Q}}}{\mathrm{d}P}%
\right) =-\widehat{\lambda }\beta +N\,.\label{formulaklambdahatgen}
\end{equation}%
We now \textbf{guess} that the vector of measures $\widehat{\mathbf{Q}}$
defined via \eqref{defqexpgen} is optimal and compute the associated $\mu $: 
\begin{equation*}
\mu =\sum_{n=1}^{N}{E}_{\widehat{Q}^{n}}[X^{n}]+A+\sum_{n=1}^{N}\frac{1}{%
\alpha _{n}}\mathbb{E}\left[ \frac{\mathrm{d}\widehat{Q}^{n}}{\mathrm{d}P}%
\ln \left( \frac{\mathrm{d}\widehat{Q}^{n}}{\mathrm{d}P}\right) \right]
=A+\sum_{j=1}^{h}\mathbb{E}\left[ \left( \overline{X}_{j}\right) \frac{%
\mathrm{d}\widehat{Q}^{j}}{\mathrm{d}P}\right] +
\end{equation*}%
\begin{equation*}
\sum_{j=1}^{h}\beta _{j}\mathbb{E}\left[ \frac{\mathrm{d}\widehat{Q}^{j}}{%
\mathrm{d}P}\ln \left( \exp \left( -\frac{\overline{X}_{j}}{\beta _{j}}%
\right) \right) \right] +\sum_{j=1}^{h}\beta _{j}\mathbb{E}\left[ \frac{%
\mathrm{d}\widehat{Q}^{j}}{\mathrm{d}P}\ln \left( \frac{1}{\mathbb{E}\left[
\exp \left( -\frac{\overline{X}_{j}}{\beta _{j}}\right) \right] }\right) %
\right] \,.
\end{equation*}%
Hence 
\begin{equation}
\mu =A-\sum_{j=1}^{h}\beta _{j}\ln \left( \mathbb{E}\left[ \exp \left( -%
\frac{\overline{X}_{j}}{\beta _{j}}\right) \right] \right)   \label{eqmugen}
\end{equation}%
and substituting \eqref{formulaklambdahatgen} in the explicit formula for $%
\widehat{\lambda }$ we get 
\begin{equation}
K\left( \widehat{\lambda },\frac{\mathrm{d}\widehat{\mathbf{Q}}}{\mathrm{d}P}%
\right) =-\beta \exp \left( -\frac{1}{\beta }\left( A+\xi
+\sum_{j=1}^{h}\beta _{j}\ln \left( \mathbb{E}\left[ \exp \left( -\frac{%
\overline{X}_{j}}{\beta _{j}}\right) \right] \right) \right) \right) +N\,.
\label{eqkcandidategen}
\end{equation}%
Using equation \eqref{YY} we \textbf{define}, for the measure given in %
\eqref{defqexpgen}, 
\begin{equation*}
\widehat{Y}^{k}=-X^{k}-v_{n}^{\prime }\left( \widehat{\lambda }\frac{\mathrm{%
d}\widehat{Q}}{\mathrm{d}P}\right) \,\,\,\,k=1,\dots ,N\,.
\end{equation*}%
By \eqref{eqvgen} (with $\lambda =1$) we obtain, for $k\in I_{m},$ $%
v_{k}^{\prime }(y)=\frac{1}{\alpha _{k}}\ln \left( \frac{y}{\alpha _{k}}%
\right) $ and 
\begin{equation*}
v_{k}^{\prime }\left( \widehat{\lambda }\frac{\mathrm{d}\widehat{Q}}{\mathrm{%
d}P}\right) =\frac{1}{\alpha _{k}}\ln \left( \frac{1}{\alpha _{k}}\right) +%
\frac{1}{\alpha _{k}}\ln \left( \frac{\exp \left( -\frac{\overline{X}_{m}}{%
\beta _{m}}-\frac{A+\mu }{\beta }\right) }{\mathbb{E}\left[ \exp \left( -%
\frac{\overline{X}_{m}}{\beta _{m}}\right) \right] }\right) 
\end{equation*}%
\begin{equation*}
=\frac{1}{\alpha _{k}}\ln \left( \frac{1}{\alpha _{k}}\right) -\frac{1}{%
\alpha _{k}}\left( \frac{\overline{X}_{m}}{\beta _{m}}+\frac{A+\mu }{\beta }%
\right) -\frac{1}{\alpha _{k}}\ln \left( \mathbb{E}\left[ \exp \left( -\frac{%
\overline{X}_{m}}{\beta _{m}}\right) \right] \right) \overset{\text{Eq.}%
\eqref{eqmugen}}{=}
\end{equation*}%
\begin{equation*}
\frac{1}{\alpha _{k}}\ln \left( \frac{1}{\alpha _{k}}\right) -\frac{1}{%
\alpha _{k}}\left( \frac{\overline{X}_{m}}{\beta _{m}}+\frac{A+\xi }{\beta }%
\right) +\frac{1}{\alpha _{k}}d_{m}(\mathbf{X})\,.
\end{equation*}%
Hence for $k\in I_{m}$ we have 
\begin{equation*}
\widehat{Y}^{k}=-X^{k}+\frac{1}{\alpha _{k}}\left( \frac{\overline{X}_{m}}{%
\beta _{m}}+\frac{A+\xi }{\beta }-d_{m}(\mathbf{X})\right) -\frac{1}{\alpha
_{k}}\ln \left( \frac{1}{\alpha _{k}}\right) \,.
\end{equation*}%
A simple computation yields $\widehat{\mathbf{Y}}\in M^{\Phi }$, $\sum_{k\in
I_{m}}Y^{k}\in {\mathbb{R}}$ and $\sum_{n=1}^{N}\widehat{Y}^{n}=A$, so that $%
\widehat{\mathbf{Y}}\in \mathcal{B}_{A}\cap M^{\Phi }$.

Moreover%
\begin{equation*}
\exp \left( -\left( X^{k}+\widehat{Y}^{k}\right) \right) =\exp \left(
-\alpha _{k}\left( \frac{1}{\alpha _{k}}\left( \frac{\overline{X}_{m}}{\beta
_{m}}+\frac{A+\xi }{\beta }-d_{m}(\mathbf{X})\right) -\frac{1}{\alpha _{k}}%
\ln \left( \frac{1}{\alpha _{k}}\right) \right) \right) 
\end{equation*}%
\begin{equation*}
=\frac{1}{\alpha _{k}}\exp \left( -\frac{\overline{X}_{m}}{\beta _{m}}%
\right) \exp \left( -\frac{A+\xi }{\beta }\right) \exp \left( d_{m}(\mathbf{X%
})\right) 
\end{equation*}%
\begin{equation*}
=\frac{1}{\alpha _{k}}\frac{\exp \left( -\frac{\overline{X}_{m}}{\beta _{m}}%
\right) }{\mathbb{E}\left[ \exp \left( -\frac{\overline{X}_{m}}{\beta _{m}}%
\right) \right] }\exp \left( -\frac{A+\xi }{\beta }\right) \exp \left(
\sum_{j=1}^{h}\frac{\beta _{j}}{\beta }\ln \left( \mathbb{E}\left[ \exp
\left( -\frac{\overline{X}_{j}}{\beta _{j}}\right) \right] \right) \right)
\,.
\end{equation*}%
As a consequence 
\begin{equation*}
\sum_{n=1}^{N}\mathbb{E}\left[ 1-\exp \left( -\alpha _{n}\left( X^{n}+%
\widehat{Y}^{n}\right) \right) \right] 
\end{equation*}%
\begin{equation}
=-\sum_{n=1}^{N}\frac{1}{\alpha _{n}}\exp \left( -\frac{1}{\beta }\left(
A+\xi +\sum_{j=1}^{h}\beta _{j}\ln \left( \mathbb{E}\left[ \exp \left( -%
\frac{\overline{X}_{j}}{\beta _{j}}\right) \right] \right) \right) \right) +N%
\overset{\text{Eq.}\eqref{eqkcandidategen}}{=}K\left( \widehat{\lambda },%
\frac{\mathrm{d}\widehat{\mathbf{Q}}}{\mathrm{d}P}\right) 
\label{valueofkopt}
\end{equation}%
which implies%
\begin{equation}
\sum_{n=1}^{N}\mathbb{E}\left[ u_{n}\left( X^{n}+\widehat{Y}^{n}\right) %
\right] =K\left( \widehat{\lambda },\frac{\mathrm{d}\widehat{\mathbf{Q}}}{%
\mathrm{d}P}\right) \,.  \label{eqexputiskcandidate}
\end{equation}%
To sum up we have 
\begin{equation*}
K\left( \widehat{\lambda },\frac{\mathrm{d}\widehat{\mathbf{Q}}}{\mathrm{d}P}%
\right) \overset{\text{Eq.}\eqref{eqexputiskcandidate}}{=}\sum_{n=1}^{N}%
\mathbb{E}\left[ u_{n}\left( X^{n}+\widehat{Y}^{n}\right) \right] \overset{%
\widehat{\mathbf{Y}}\in \mathcal{B}_{A}\cap M^{\Phi }}{\leq }
\end{equation*}%
\begin{equation*}
\sup_{\mathbf{Y}\in \mathcal{B}_{A}\cap M^{\Phi }}\sum_{n=1}^{N}\mathbb{E}%
\left[ u_{n}(X^{n}+Y^{n})\right] \overset{\text{Thm.}\eqref{thmoptimumexists}%
}{=}\min_{\substack{ \lambda >0 \\ \mathbf{Q}\in \mathcal{Q}_{v}}}K\left(
\lambda ,\frac{\mathrm{d}\mathbf{Q}}{\mathrm{d}P}\right) \leq K\left( 
\widehat{\lambda },\frac{\mathrm{d}\widehat{\mathbf{Q}}}{\mathrm{d}P}\right)
\,.
\end{equation*}%
Consequently $\widehat{\mathbf{Y}}$ is the (unique) optimum for the
optimization problem in LHS of \eqref{eqminimaxapplied2}, and $(\widehat{%
\lambda },\widehat{\mathbf{Q}})$ is the (unique) optimum to the minimization
problem in \eqref{eqminimaxapplied3}.

Moreover, setting $\widehat{a}^{n}:=E_{\widehat{Q}^{n}}[\widehat{Y}%
^{n}],\quad n=1,\dots ,N,$ the SORTE (which, as already argued, exists and
is unique) is given by $\left( \widehat{\mathbf{Y}},\widehat{\mathbf{Q}},%
\widehat{\mathbf{a}}\right) $.
\end{proof}

\begin{remark}
We observe that in the terminal part of the proof above we also got an
explicit formula for the maximum systemic utility: 
\begin{equation}  \label{valueofsupexp}
\sup_{\mathbf{Y}\in \mathcal{B}_{A}\cap M^{\Phi }}\sum_{n=1}^{N}\mathbb{E}%
\left[ u_{n}(X^{n}+Y^{n})\right] \overset{\text{Thm.}\eqref{thmoptimumexists}%
}{=}K\left( \widehat{\lambda },\frac{\mathrm{d}\widehat{\mathbf{Q}}}{\mathrm{%
d}P}\right)
\end{equation}
where $K\left( \widehat{\lambda },\frac{\mathrm{d}\widehat{\mathbf{Q}}}{%
\mathrm{d}P}\right)$ is given in \eqref{valueofkopt}.
\end{remark}

\subsection{A toy Example\label{Ex}}

In the following two examples we compare a B\"{u}hlmann's Equilibrium with a
SORTE in the simplest case where $\mathbf{X}=\mathbf{0}:=(0,...,0)$ and $%
A=0. $ In the formula below we use the well known fact:%
\begin{equation*}
\sup_{Y\in L^{1}(Q)}\left\{ \mathbb{E}\left[ u_{n}(Y)\right] \mid
E_{Q}[Y]\leq x\right\} =1-e^{-\alpha _{n}x-H(Q,P)}\,,
\end{equation*}%
where $H(Q,P)=E[\frac{dQ}{dP}\ln (\frac{dQ}{dP})]$ is the relative entropy,
for $Q\ll P$.

\begin{example}[B\"{u}hlmann's equilibrium solution]
As $\mathbf{X}:=\mathbf{0}$ then $\overline{X}_{N}=\sum_{k=1}^{N}X^{k}=0$
and therefore the optimal probability measure $Q_{\mathbf{X}}$ defined in B%
\"{u}hlmann is: 
\begin{equation}
\frac{dQ_{\mathbf{X}}}{d\mathbb{P}}:=\frac{e^{-\frac{1}{\beta }\overline{X}%
_{N}}}{\mathbb{E}\left[ e^{-\frac{1}{\beta }\overline{X}_{N}}\right] }=1%
\text{,}  \label{Q}
\end{equation}%
i.e. $Q_{\mathbf{X}}=P.$ Take $\mathbf{a=0}=(0,...,0)$. We compute%
\begin{equation*}
U_{n}^{Q_{\mathbf{X}}}(0)=U_{n}^{P}(0):=\sup \left\{ \mathbb{E}\left[
u_{n}(0+Y)\right] \mid E_{P}[Y]\leq 0\right\} =1-e^{-\alpha
_{n}0-H(P,P)}=1-1=0\,,
\end{equation*}%
as $H(P,P)=0$, so that 
\begin{equation*}
\sum_{n=1}^{N}U_{n}^{P}(0)=0.
\end{equation*}%
As a consequence, and as $u_{n}(0)=0$, the optimal solution for each single $%
n$ is obviously $Y_{\mathbf{X}}^{n}=0.$

\textbf{Conclusion:} The B\"{u}hlmann's equilibrium solution associated to $%
\mathbf{X}:=\mathbf{0}$ (and $A=0)$ is the couple $(\mathbf{Y}_{\mathbf{X}},%
\mathbf{Q}_{\mathbf{X}})=(\mathbf{0},P)$. Here the vector $\mathbf{a}$ is
taken a priori to be equal to $(0,...,0).$
\end{example}

\begin{example}[SORTE]
\label{ex2}From Theorem \ref{thmformulasexp} with $\mathbf{X}:=\mathbf{0}$
and $A=0$ we obtain for the SORTE that: the optimal probability measure $%
\widehat{\mathbf{Q}}$ coincides again with $P$; the optimal $\widehat{Y}$ is:%
\begin{equation}
\widehat{Y}^{n}=\frac{1}{\alpha _{n}}\left[ \ln (\alpha _{n})-E_{R}\left[
\ln (\alpha )\right] \right] :=\widehat{a}^{n}.  \label{aa}
\end{equation}%
Recalling that $\widehat{\mathbf{Q}}$ is in fact a minimax measure for the
optimization problem $\pi _{0}(\mathbf{0})$ (see the proof of Theorem \ref%
{thmsorteexistsA}), we can say that%
\begin{equation}
S^{P}(0)=S^{\widehat{\mathbb{Q}}}(A)\overset{\text{Lemma.}\ref%
{lemmalinkpiSPi2}}{=}\pi _{0}(\mathbf{0})\overset{\eqref{valueofkopt},%
\eqref{valueofsupexp}}{=}N-\beta e^{-\frac{\xi }{\beta }}  \label{S}
\end{equation}%
Notice that if the $\alpha _{n}$ are equal for all $n,$ then $S^{P}(0)=0,$
but in general 
\begin{equation*}
S^{P}(0)=N-\beta e^{-\frac{\xi }{\beta }}\geq 0.
\end{equation*}%
Indeed, by Jensen inequality:%
\begin{equation*}
e^{-\frac{\xi }{\beta }}=e^{E_{R}[\ln (\alpha )]}\leq E_{R}[e^{\ln (\alpha
)}]=E_{R}[\alpha ]:=\sum_{n=1}^{N}\frac{\frac{1}{\alpha _{n}}\alpha _{n}}{%
\sum_{k=1}^{N}\frac{1}{\alpha _{k}}}=\frac{N}{\beta }.
\end{equation*}%
From (\ref{aa}) we deduce that the $\alpha _{n}$ are equal for all $n$ if
and only if $\widehat{a}^{n}=0$ for all $n$, but in general $\widehat{a}^{n}$
may differ from $0$. As $\widehat{Y}^{n}=\widehat{a}^{n}$, the same holds
also for the optimal solution $\widehat{Y}$. When $\widehat{a}^{n}<0$ a
violation of Individual Rationality occurs.

\textbf{Conclusion:} The SORTE solution associated to $\mathbf{X}:=\mathbf{0}
$ (and $A=0)$ is the triplet $(\widehat{\mathbf{Y}},P,\widehat{\mathbf{a}})$
where $\widehat{\mathbf{Y}}=\widehat{\mathbf{a}}$ is assigned in equation (%
\ref{aa}).
\end{example}

The above comparison shows that a SORTE is not a B\"{u}hlmann equilibrium,
even when $\mathbf{X}:=\mathbf{0}$ and $A=0$. When the $\alpha _{n}$ are all
equal, then the B\"{u}hlmann and the SORTE solution coincide, as all agents\
are assumed to have the same risk aversion.

\begin{remark}
In this example, notice that we may control the risk sharing components $%
Y^{n}$ of agent $n$ in the SORTE by:%
\begin{equation*}
|Y^{n}|\leq \frac{1}{\alpha _{\min }}\left[ \ln (\alpha _{\max })-\ln
(\alpha _{\min })\right] .
\end{equation*}%
Suppose that $\alpha _{\min }<\alpha _{\max }$ and consider the expression
for $\widehat{Y}^{n}=\widehat{a}^{n}$ in (\ref{aa}). If $\alpha _{j}=\alpha
_{\min }$ then the corresponding $\widehat{Y}^{j}<0$ is in absolute value
relatively large (divide by $\alpha _{\min }$), while if $\alpha _{k}=\alpha
_{\max }$ the corresponding $\widehat{Y}^{k}>0$ is in absolute value
relatively small (divide by $\alpha _{\max }$).
\end{remark}

\subsection{Dependence on weights and stability}

\label{secdeponweight} We now provide a detailed study of the dependence on
weights, as introduced in Remark \ref{remgamma}, in the exponential case.
Given $\gamma _{n}\in (0,+\infty ),\,n=1,\dots ,N$ and $u_{1},\dots ,u_{N}$
satisfying Assumption \ref{A00} (a), we recall that $u_{n}^{\gamma
}(x):=\gamma _{n}u_{n}(x),\,n=1,\dots ,N$ and we denote by $v_{n}^{\gamma
}(\cdot )$ their convex conjugates. These functions $u_{n}^{\gamma }$
satisfy Assumption \ref{A00} (a).

In our exponential setup and under closedness under truncation, a different
weighting only results in a translation of both allocations at initial and
terminal time of a SORTE, without affecting the optimal measure:

\begin{proposition}
\label{proptrasl} Consider $u_{1},\dots ,u_{N}$ as given in \eqref{exponut}
and take the associated $u_{1}^{\gamma },\dots ,u_{N}^{\gamma }$ as above.
Suppose $\mathcal{B}$ satisfies Assumption \ref{A00} (b) and is closed under
truncation. Call $\left( \widehat{\mathbf{Y}},\widehat{\mathbf{Q}},\widehat{%
\mathbf{a}}\right) $ the unique SORTE associated to $u_{1},\dots ,u_{N}$,
and similarly define $\left( \widehat{\mathbf{Y}}_{\gamma },\widehat{\mathbf{%
Q}}_{\gamma },\widehat{\mathbf{a}}_{\gamma }\right) $ as the unique SORTE
associated to $u_{1}^{\gamma },\dots ,u_{N}^{\gamma }$. Then%
\begin{equation}
\begin{cases}
\widehat{Y}_{\gamma }^{k}=\widehat{Y}^{k}+g_{k}(\gamma ) & 
\,\,\,\,\,k=1,\dots ,N \\ 
\frac{\mathrm{d}\widehat{Q}_{\gamma }^{k}}{\mathrm{d}P}=\frac{\mathrm{d}%
\widehat{Q}^{k}}{\mathrm{d}P} & \,\,\,\,\,k=1,\dots ,N\notag \\ 
\widehat{a}_{\gamma }^{k}=\widehat{a}^{k}+g_{k}(\gamma ) & 
\,\,\,\,\,k=1,\dots ,N%
\end{cases}
\label{eqsortetrasl}
\end{equation}%
where 
\begin{equation*}
g_{k}(\gamma ):= \frac{1}{\alpha _{k}}\frac{\sum_{n=1}^{N}\frac{1}{\alpha
_{n}}\ln \left( \frac{1}{\gamma _{n}}\right) }{\sum_{n=1}^{N}\frac{1}{\alpha
_{n}}}-\frac{1}{\alpha _{k}}\ln \left( \frac{1}{\gamma _{k}}\right)=\frac{1}{%
\alpha_k}\left(\ln (\gamma _{n})-E_{R}[\ln (\gamma
)]\right)\,\,\,\,\,k=1,\dots ,N\,.
\end{equation*}
\end{proposition}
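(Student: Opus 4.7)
The plan is to verify directly that the triple $(\widehat{\mathbf{Y}} + g(\gamma), \widehat{\mathbf{Q}}, \widehat{\mathbf{a}} + g(\gamma))$ satisfies the three requirements of Definition \ref{carte} for the weighted utilities $u_1^\gamma,\dots,u_N^\gamma$ at the same budget $A$; uniqueness from Theorem \ref{thmsorteuniqueA} will then identify it with $(\widehat{\mathbf{Y}}_\gamma, \widehat{\mathbf{Q}}_\gamma, \widehat{\mathbf{a}}_\gamma)$, yielding \eqref{eqsortetrasl}. The engine of the reduction is the elementary identity
\[
u_n^\gamma(x) = u_n(x - c_n) + (\gamma_n - 1), \qquad c_n := \frac{\ln \gamma_n}{\alpha_n},
\]
obtained by absorbing $\ln\gamma_n$ into the exponent. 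A direct calculation gives $g_n(\gamma) = c_n - \frac{1}{\alpha_n\beta}\sum_m c_m$, hence $\sum_n g_n(\gamma) = 0$; together with $\mathbb{R}^N + \mathcal{B} = \mathcal{B}$ this immediately yields Condition 3.

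Using the identity inside the expectation and the change of variables $Z = Y - c_k$ (respectively $b^k = a^k - c_k$) one obtains the transformation rules
\[
U_k^{Q,\gamma}(b) = U_k^{Q}(b - c_k) + (\gamma_k - 1), \qquad S^{\mathbf{Q},\gamma}(A) = S^{\mathbf{Q}}\Big(A - \sum_n c_n\Big) + \sum_n(\gamma_n - 1),
\]
each with the natural bijection between optimizers. Conditions 1 and 2 then reduce to understanding how the optimizers of $U_k^{Q}$ and $S^{\mathbf{Q}}$ depend on the budget. This is where the exponential structure is crucial: the identity $e^{-\alpha(x+d)} = e^{-\alpha d}e^{-\alpha x}$ shows that the translation $Y \mapsto Y + d$ is a bijection between the feasible sets of $U_k^Q(b)$ and $U_k^Q(b+d)$ that transforms the objective by a strictly positive affine map, hence transports optimizers. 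Using the resulting closed form $U_k^Q(a) = 1 - \exp(-\alpha_k(a + E_Q[X^k]) - H(Q,P))$ and the Lagrangian first-order conditions for $S^{\mathbf{Q}}$, one derives the systemic analogue: the optimizer of $S^{\mathbf{Q}}(A+d)$ is obtained from that of $S^{\mathbf{Q}}(A)$ by the coordinatewise shift $(d/(\alpha_k\beta))_k$.

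Combining the pieces: for Condition 1 I apply the individual shift with $d = g_k - c_k = -\frac{1}{\alpha_k\beta}\sum_n c_n$, which transports $\widehat{Y}^k$ to an optimizer of $U_k^{\widehat{Q}^k}(\widehat{a}^k + g_k - c_k)$, and then the transformation bijection gives $\widehat{Y}^k + g_k$ as an optimizer of $U_k^{\widehat{Q}^k,\gamma}(\widehat{a}^k + g_k)$. For Condition 2 I apply the systemic shift with $d = -\sum_n c_n$: this sends $\widehat{\mathbf{a}}$ to the optimizer of $S^{\widehat{\mathbf{Q}}}(A - \sum_n c_n)$ whose $k$-th coordinate is $\widehat{a}^k + (g_k - c_k)$, and the systemic bijection then yields $\widehat{a}^k + g_k$ as the $k$-th coordinate of the optimizer of $S^{\widehat{\mathbf{Q}},\gamma}(A)$. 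The main obstacle is precisely the bookkeeping between the two shifts $c$ and $g$: the definition $g_k := c_k - \frac{1}{\alpha_k\beta}\sum_n c_n$ is exactly what makes the per-coordinate systemic shift induced by $-\sum_n c_n$ match the offset $g_k - c_k$ produced by the individual change of variables.
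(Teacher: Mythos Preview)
Your argument is correct and complete: the translation identity $u_n^\gamma(x)=u_n(x-c_n)+(\gamma_n-1)$ reduces the weighted problem to a shifted unweighted one, the exponential scaling of $U_k^Q$ and the resulting Lagrangian first-order conditions for $S^{\mathbf{Q}}$ give precisely the coordinate shifts you state, and uniqueness (Theorem~\ref{thmsorteuniqueA}) then forces the identification of the triples. One small point you leave implicit but which holds: the underlying spaces $\mathscr{Q}=\mathcal{Q}_v$ and $\mathcal{L}$ coincide for the weighted and unweighted problems, since $\phi_n^\gamma=\gamma_n\phi_n$ generates the same Orlicz spaces and $v_n^\gamma(y)=\gamma_n v_n(y/\gamma_n)$ together with the RAE condition~\eqref{RAE} preserves the finiteness requirement defining $\mathcal{Q}_v$.

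Your route is genuinely different from the paper's. The paper's general argument is \emph{dual}: it computes $v_n^\gamma$, inserts it into the minimax expression~\eqref{eqminimaxapplied3}, observes that the dual objective in $(\lambda,\mathbf{Q})$ changes only through terms not involving $\mathbf{Q}$ (so $\widehat{\mathbf{Q}}_\gamma=\widehat{\mathbf{Q}}$), and reads off $\widehat{\mathbf{Y}}_\gamma$ from the gradient formula~\eqref{YY}. The paper's alternative for cluster-type $\mathcal{B}$ uses the same translation identity as you but then plugs $X^n-\tfrac{1}{\alpha_n}\ln\gamma_n$ into the explicit formulas of Theorem~\ref{thmformulasexp}. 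Your approach is \emph{primal} and verifies the three SORTE conditions directly, which has the advantage of working for any $\mathcal{B}$ satisfying the standing assumptions without invoking either the minimax machinery or the cluster-specific closed forms. The paper's dual argument, on the other hand, gets invariance of $\widehat{\mathbf{Q}}$ in one stroke and requires less case-by-case bookkeeping between the shifts $c$ and $g$.
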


\begin{proof}
For a general set $\mathcal{B}$, we here provide only a sketch of the proof.
Using the formulas for $v_{1},\dots ,v_{N}$, after some computations one can
write explicitly the minimax expression \eqref{eqminimaxapplied3}. Then use
the gradient formula (\ref{YY}) to deduce (\ref{eqsortetrasl}). A more
direct proof, that works only for sets $\mathcal{B}$ in the form described
in Example \ref{exCh}, is based on the observation that 
\begin{equation*}
u_{n}^{\gamma }(x):=\gamma _{n}u_{n}(x)=\gamma _{n}-\gamma _{n}\exp (-\alpha
_{n}x)=\gamma _{n}-\exp \left( -\alpha _{n}\left[ x-\frac{1}{\alpha _{n}}\ln
(\gamma _{n})\right] \right) .
\end{equation*}%
Hence, $\left( \widehat{\mathbf{Y}}_{\gamma },\widehat{\mathbf{Q}}_{\gamma },%
\widehat{\mathbf{a}}_{\gamma }\right) $ can be obtained by a straightforward
computation from the solution $\left( \widehat{\mathbf{Y}},\widehat{\mathbf{Q%
}},\widehat{\mathbf{a}}\right) $, which is explicitly given in Theorem \ref%
{thmformulasexp}, using $X^{n}-\frac{1}{\alpha _{n}}\ln (\gamma
_{n}),\,n=1,\dots ,N$ in place of $\mathbf{X}$.
\end{proof}

\appendix

\section{Appendix}

\subsection{Orlicz Spaces and Utility Functions\label{secorlicz}}

We consider the utility maximization problem defined on Orlicz spaces, see 
\cite{RaoZen} for further details on Orlicz spaces. This presents several
advantages. From a mathematical point of view, it is a more general setting
than $L^{\infty }$, but at the same time it simplifies the analysis, since
the topology is order continuous and there are no singular elements in the
dual space. Furthermore, it has been shown in \cite{BF08AAP} that the Orlicz
setting is the natural one to embed utility maximization problems, as the
natural integrability condition $\mathbb{E}[u(X)]>-\infty $ is implied by $%
\mathbb{E}[\phi (X)]<+\infty $.

Let $u:\mathbb{R}\rightarrow \mathbb{R}$ be a concave and increasing
function satisfying $\lim_{x\rightarrow -\infty }\frac{u(x)}{x}=+\infty .$
Consider $\phi (x):=-u(-|x|)+u(0).$ Then $\phi :\mathbb{R}\rightarrow
\lbrack 0,+\infty )$ is a strict Young function, i.e., it is finite valued,
even and convex on $\mathbb{R}$ with $\phi (0)=0$ and $\lim_{x\rightarrow
+\infty }\frac{\phi (x)}{x}=+\infty $. The Orlicz space $L^{\phi }$ and
Orlicz Heart $M^{\phi }$ are respectively defined by 
\begin{align}
L^{\phi }& :=\left\{ X\in L^{0}(\mathbb{R})\mid \mathbb{E}[\phi (\alpha
X)]<+\infty \text{ for some }\alpha >0\right\} ,  \label{Lphi} \\
M^{\phi }& :=\left\{ X\in L^{0}(\mathbb{R})\mid \mathbb{E}[\phi (\alpha
X)]<+\infty \text{ for all }\alpha >0\right\} ,  \label{Mphi}
\end{align}%
and they are Banach spaces when endowed with the Luxemburg norm. %
%
%
%
%
The topological dual of $M^{\phi }$ is the Orlicz space $L^{\phi ^{\ast }},$
where the convex conjugate $\phi ^{\ast }$ of $\phi $, defined by 
\begin{equation*}
\phi ^{\ast }(y):=\sup_{x\in \mathbb{R}}\left\{ xy-\phi (x)\right\} ,\ y\in 
\mathbb{R},
\end{equation*}%
is also a strict Young function. Note that 
\begin{equation}
\mathbb{E}[u(X)]>-\infty \text{\ if \ }\mathbb{E}[\phi (X)]<+\infty .
\label{orliz_cond}
\end{equation}

\begin{remark}
\label{remOrlicz}It is well known that $L^{\infty }(\probp;\mathbb{R}%
)\subseteq M^{\phi }\subseteq L^{\phi }\subseteq L^{1}(\probp;\mathbb{R}%
) $. In addition, from the Fenchel inequality $xy\leq \phi (x)+\phi ^{\ast
}(y) $ we obtain 
\begin{equation*}
(\alpha |X|)\left( \lambda \frac{dQ}{d\probp}\right) \leq \phi (\alpha
|X|)+\phi ^{\ast }\left( \lambda \frac{dQ}{d\probp}\right)
\end{equation*}%
for some probability measure $Q\ll P$, and we immediately deduce
that $\frac{dQ}{d\probp}\in L^{\phi ^{\ast }}$ implies $L^{\phi
}\subseteq L^{1}(Q;\mathbb{R})$.
\end{remark}

%
%

Given the utility functions $u_{1},\cdots ,u_{N}:\mathbb{R}\rightarrow 
\mathbb{R}$, satisfying the above conditions, with associated Young
functions $\phi _{1},\cdots ,\phi _{N}$, we define 
\begin{equation}
M^{\Phi }:=M^{\phi _{1}}\times \dots \times M^{\phi _{N}},\quad L^{\Phi
}:=L^{\phi _{1}}\times \dots \times L^{\phi _{N}}\,.  \label{orly-product}
\end{equation}

\subsection{Auxiliary results}

\begin{lemma}
\label{lemmaintegra} Let $v:[0,+\infty )\rightarrow {\mathbb{R}}\cup
\{+\infty \}$ be a convex function, and suppose that its restriction to $%
(0,+\infty )$ is real valued and differentiable. Let $Q\ll P$ be a given
probability measure with $v\left( \lambda \frac{\mathrm{d}Q}{\mathrm{d}P}%
\right) \in L^1(P)$ for all $\lambda >0$. Then

\begin{enumerate}
\item $v^{\prime }$ is defined on $(0,+\infty)$ and real valued there and
extendable to $[0,+\infty)$ by taking $\lim_{x\rightarrow 0}v^{\prime }(x)\in%
{\mathbb{R}}\cup\{-\infty\}$. Also, $\frac{\mathrm{d}Q}{\mathrm{d}P}%
v^{\prime }\left(\lambda \frac{\mathrm{d}Q}{\mathrm{d}P}\right)\in L^1(P)$
for all $\lambda>0$.

\item If $g$ is such that $g+\frac1g\in L^\infty_+(P)$, then $v\left(g \frac{%
\mathrm{d}Q}{\mathrm{d}P}\right)\in L^1(P)$.

\item If $v^{\prime }(0+)=-\infty$, $v^{\prime }(+\infty)=+\infty$ and $v$
is strictly convex $F(\gamma):=\mathbb{E} \left[\frac{\mathrm{d}Q}{\mathrm{d}%
P}v^{\prime }\left(\gamma \frac{\mathrm{d}Q}{\mathrm{d}P}\right)\right]$ is
a well defined bijection between $(0,+\infty)$ and ${\mathbb{R}}$.
\end{enumerate}
\end{lemma}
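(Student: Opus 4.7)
The plan is to treat the three items in order, using a convexity sandwich as the backbone for items 1 and 2, and then assembling a monotonicity-plus-limits argument for the bijectivity claim in item 3. For item 1, existence of $v'$ on $(0,+\infty)$ is immediate from the hypotheses, and its extension to $0$ is by the (possibly $-\infty$) monotone limit. For the integrability claim, I will exploit that $\lambda \mapsto v(\lambda y)$ is convex for each fixed $y>0$, so its derivative $y\,v'(\lambda y)$ is pinched between a backward and a forward difference quotient; evaluating at $y=\tfrac{\mathrm{d}Q}{\mathrm{d}P}$ gives
\begin{equation*}
\frac{v\!\left(\lambda \tfrac{\mathrm{d}Q}{\mathrm{d}P}\right)-v\!\left(\tfrac{\lambda}{2}\tfrac{\mathrm{d}Q}{\mathrm{d}P}\right)}{\lambda/2}\,\leq\,\tfrac{\mathrm{d}Q}{\mathrm{d}P}\,v'\!\left(\lambda\tfrac{\mathrm{d}Q}{\mathrm{d}P}\right)\,\leq\,\frac{v\!\left(2\lambda\tfrac{\mathrm{d}Q}{\mathrm{d}P}\right)-v\!\left(\lambda\tfrac{\mathrm{d}Q}{\mathrm{d}P}\right)}{\lambda}.
\end{equation*}
Both bounds lie in $L^1(P)$ by the standing hypothesis, so $\tfrac{\mathrm{d}Q}{\mathrm{d}P}\,v'(\lambda\tfrac{\mathrm{d}Q}{\mathrm{d}P})\in L^1(P)$.

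For item 2, the assumption $g+1/g\in L^{\infty}_+(P)$ is equivalent to $1/M\leq g\leq M$ $P$-a.s.\ for some $M\geq 1$. Convexity of $v$ on the random interval $[\tfrac{1}{M}\tfrac{\mathrm{d}Q}{\mathrm{d}P},\,M\tfrac{\mathrm{d}Q}{\mathrm{d}P}]$ bounds $v(g\,\tfrac{\mathrm{d}Q}{\mathrm{d}P})$ above by a convex combination of $v(\tfrac{1}{M}\tfrac{\mathrm{d}Q}{\mathrm{d}P})$ and $v(M\tfrac{\mathrm{d}Q}{\mathrm{d}P})$, while the subgradient inequality at $\tfrac{1}{M}\tfrac{\mathrm{d}Q}{\mathrm{d}P}$ provides the matching lower bound featuring $v'(\tfrac{1}{M}\tfrac{\mathrm{d}Q}{\mathrm{d}P})\cdot\tfrac{\mathrm{d}Q}{\mathrm{d}P}$. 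Integrability then follows from the standing hypothesis combined with item 1 applied at $\lambda=1/M$.

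For item 3, well-definedness of $F$ on $(0,+\infty)$ is item 1 with $\lambda=\gamma$. Strict monotonicity holds because $v'$ is strictly increasing (strict convexity of $v$) and the set $\{\tfrac{\mathrm{d}Q}{\mathrm{d}P}>0\}$ has positive $P$-measure since $Q$ is a probability measure. Continuity on any compact $[\gamma_1,\gamma_2]\subset(0,+\infty)$ follows by dominated convergence, using the sandwich from item 1 at the endpoints $\gamma_1,\gamma_2$ as an integrable envelope. For the limits, as $\gamma\downarrow 0$ the integrand $\tfrac{\mathrm{d}Q}{\mathrm{d}P}\,v'(\gamma\tfrac{\mathrm{d}Q}{\mathrm{d}P})$ decreases to $-\infty$ on $\{\tfrac{\mathrm{d}Q}{\mathrm{d}P}>0\}$; splitting into positive and negative parts, I apply dominated convergence to the positive part (dominated by its value at a fixed $\gamma_0$, integrable by item 1) and monotone convergence to the negative part, obtaining $F(\gamma)\to-\infty$. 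The symmetric argument using $v'(+\infty)=+\infty$ yields $F(\gamma)\to+\infty$ as $\gamma\to+\infty$, and bijectivity onto $\mathbb{R}$ then comes from strict monotonicity plus the Intermediate Value Theorem. The only delicate step is justifying these limit passages through the proper positive/negative splitting; everything else reduces to routine convexity inequalities.
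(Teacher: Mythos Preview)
Your proof is correct. The paper itself does not prove this lemma but simply cites Lemma~2 of Biagini--Frittelli \cite{bf}; your convexity-sandwich argument for item~1, the two-sided bound via endpoint values and the subgradient inequality for item~2, and the monotonicity-plus-limits argument for item~3 constitute precisely the standard route (and are in the spirit of the cited reference). One very minor point: in item~1 you should note explicitly that on the set $\{\tfrac{\mathrm{d}Q}{\mathrm{d}P}=0\}$ the product $\tfrac{\mathrm{d}Q}{\mathrm{d}P}\,v'(\lambda\tfrac{\mathrm{d}Q}{\mathrm{d}P})$ is taken to be $0$ by convention (consistent with the sandwich bounds, which both vanish there), and that the standing hypothesis $v(\lambda\tfrac{\mathrm{d}Q}{\mathrm{d}P})\in L^1(P)$ already forces either $v(0)<+\infty$ or $P(\tfrac{\mathrm{d}Q}{\mathrm{d}P}=0)=0$.
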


\begin{proof}
Lemma 2 of \cite{bf}.
\end{proof}

The following dual representation holds:

\begin{theorem}
\label{thmminimax} Let $u_{1}\dots ,u_{n}:{\mathbb{R}}\rightarrow {\mathbb{R}%
}$ be strictly increasing and concave functions. Let $\mathcal{C}\subseteq
M^{\Phi }$ be a convex cone such that for every $i,j=1,\dots ,N,$ $\mathbf{%
e_{i}}-\mathbf{e_{j}}\in \mathcal{C}$. Denote by $\mathcal{C}^{0}$ the polar
of the cone $\mathcal{C}$ in the dual pair $(M^{\Phi },L^{\Phi ^{\ast }})$ 
\begin{equation*}
\mathcal{C}^{0}:=\left\{ \mathbf{Z}\in L^{\Phi ^{\ast }}\text{ s.t. }%
\sum_{j=1}^{N}\mathbb{E}\left[ Y^{j}Z^{j}\right] \leq 0\,\,\forall \,\mathbf{%
Y}\in \mathcal{C}\right\} .
\end{equation*}%
Set 
\begin{equation*}
\mathcal{C}_{1}^{0}:=\left\{ \mathbf{Z}\in \mathcal{C}^{0}\text{ s.t. }%
\mathbb{E}\left[ Z^{1}\right] =\dots =\mathbb{E}\left[ Z^{N}\right]
=1\right\} \text{,\quad }(\mathcal{C}_{1}^{0})^{+}:=\left\{ \mathbf{Z}\in 
\mathcal{C}_{1}^{0}\text{ s.t. }Z^{j}\geq 0\text{ for all }j\right\}
\end{equation*}%
and suppose that 
\begin{equation*}
\sup_{\mathbf{Y}\in \mathcal{C}}\left( \sum_{j=1}^{N}\mathbb{E}\left[
u_{j}\left( X^{j}+Y^{j}\right) \right] \right) <+\infty \,\,\,\,\forall 
\mathbf{X}\in M^{\Phi }.
\end{equation*}%
Then%
\begin{equation*}
\sup_{\mathbf{Y}\in \mathcal{C}}\left( \sum_{j=1}^{N}\mathbb{E}\left[
u_{j}\left( X^{j}+Y^{j}\right) \right] \right) =\min_{\lambda \in {\mathbb{R}%
}_{+},\,\mathbf{Q}\in (\mathcal{C}_{1}^{0})^{+}}\left( \lambda \sum_{j=1}^{N}%
\mathbb{E}\left[ X^{j}\frac{\mathrm{d}Q^{j}}{\mathrm{d}P}\right]
+\sum_{j=1}^{N}\mathbb{E}\left[ v_{j}\left( \lambda \frac{\mathrm{d}Q^{j}}{%
\mathrm{d}P}\right) \right] \right) .
\end{equation*}%
If any of the two expressions above is strictly smaller than $%
\sum_{j=1}^{N}u_{j}(+\infty )$, then 
\begin{equation*}
\sup_{\mathbf{Y}\in \mathcal{C}}\left( \sum_{j=1}^{N}\mathbb{E}\left[
u_{j}\left( X^{j}+Y^{j}\right) \right] \right) =\min_{\lambda \in {\mathbb{R}%
}_{++},\,\mathbf{Q}\in (\mathcal{C}_{1}^{0})^{+}}\left( \lambda
\sum_{j=1}^{N}\mathbb{E}\left[ X^{j}\frac{\mathrm{d}Q^{j}}{\mathrm{d}P}%
\right] +\sum_{j=1}^{N}\mathbb{E}\left[ v_{j}\left( \lambda \frac{\mathrm{d}%
Q^{j}}{\mathrm{d}P}\right) \right] \right) .
\end{equation*}
\end{theorem}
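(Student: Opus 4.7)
The plan is to establish the duality in two directions: an easy Fenchel-based upper bound, and a convex-duality based lower bound that also exhibits the minimizer.

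For the inequality $\sup \leq \inf$, I would apply Fenchel's inequality pointwise: for $\lambda \geq 0$ and $\mathbf{Q} \in (\mathcal{C}_1^0)^+$,
\begin{equation*}
u_j(X^j + Y^j) \leq v_j\left(\lambda \tfrac{\mathrm{d}Q^j}{\mathrm{d}P}\right) + \lambda \tfrac{\mathrm{d}Q^j}{\mathrm{d}P}(X^j + Y^j).
\end{equation*}
Taking expectations, summing over $j$, and using $\mathbf{Q} \in \mathcal{C}^0$ (so that $\sum_j \mathbb{E}[Y^j \frac{\mathrm{d}Q^j}{\mathrm{d}P}] \leq 0$ for $\mathbf{Y} \in \mathcal{C}$), yields the inequality. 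All integrals are well-defined since $\mathbf{X}, \mathbf{Y} \in M^\Phi$ and $\frac{\mathrm{d}\mathbf{Q}}{\mathrm{d}P} \in L^{\Phi^*}$.

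The reverse inequality is the heart of the matter. My plan is to Lagrangianize the constraint: provided $\mathcal{C}$ is weakly closed (otherwise replace it by its closure, which does not change the primal value by continuity of $\mathbf{Y} \mapsto \sum_j \mathbb{E}[u_j(X^j+Y^j)]$ on $M^\Phi$), one has $I_\mathcal{C}(\mathbf{Y}) = \sup_{\mathbf{Z} \in \mathcal{C}^0} \sum_j \mathbb{E}[Y^j Z^j]$. Then
\begin{equation*}
\sup_{\mathbf{Y} \in \mathcal{C}} \sum_j \mathbb{E}[u_j(X^j+Y^j)] = \sup_{\mathbf{Y} \in M^\Phi} \inf_{\mathbf{Z} \in \mathcal{C}^0} \Big( \sum_j \mathbb{E}[u_j(X^j+Y^j)] - \sum_j \mathbb{E}[Y^j Z^j] \Big).
\end{equation*}
A Ky Fan–type minimax theorem (using concavity-convexity of the Lagrangian, the finiteness assumption, and the Mackey topology on $M^\Phi$ paired with $L^{\Phi^*}$, following the approach of \cite{BF02}) allows swapping $\sup$ and $\inf$. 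The inner supremum then decouples across $j$: substituting $W^j = X^j+Y^j$ and invoking Rockafellar's formula for conjugates of integral functionals on Orlicz spaces gives
\begin{equation*}
\sup_{Y^j \in M^{\phi_j}} \mathbb{E}[u_j(X^j+Y^j) - Y^j Z^j] = \mathbb{E}[X^j Z^j] + \mathbb{E}[v_j(Z^j)].
\end{equation*}

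It remains to rewrite the minimization over $\mathcal{C}^0$ in the stated parametric form $(\lambda, \mathbf{Q}) \in \mathbb{R}_+ \times (\mathcal{C}_1^0)^+$ and establish attainment. Since $u_j$ is strictly increasing, $v_j(y) = +\infty$ for $y < 0$; hence only $\mathbf{Z} \geq 0$ contribute. The assumption $\mathbf{e_i}-\mathbf{e_j} \in \mathcal{C}$ forces $\mathbb{E}[Z^i]=\mathbb{E}[Z^j]=:\lambda \geq 0$ for all $\mathbf{Z} \in \mathcal{C}^0$, so writing $\mathbf{Z} = \lambda \mathbf{Q}$ (with $\mathbf{Q} \in (\mathcal{C}_1^0)^+$ when $\lambda > 0$, and $\mathbf{Z}=0$ giving the dual value $\sum_j u_j(+\infty)$ when $\lambda = 0$) produces the claimed form. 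The extra clause about $\lambda \in \mathbb{R}_{++}$ is then immediate: if the primal value is strictly below $\sum_j u_j(+\infty)$, the choice $\lambda=0$ is strictly suboptimal and can be excluded.

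The main obstacle is attainment of the minimum (rather than merely the infimum). My approach would be to take a minimizing sequence $(\lambda_n, \mathbf{Q}_n)$; superlinear growth of $v_j$ (guaranteed by $\lim_{x \to -\infty} u_j(x)/x = +\infty$, an immediate consequence of strict monotonicity and the finiteness assumption, via a de la Vallée-Poussin type argument) bounds $(\lambda_n)$ and forces uniform integrability of the densities $\lambda_n \frac{\mathrm{d}Q_n^j}{\mathrm{d}P}$; then Komlós' theorem extracts a subsequence of convex combinations converging $P$-a.s. to some $(\lambda^*, \mathbf{Q}^*)$. Closedness of $\mathcal{C}^0$ under convex combinations and a.s./weak limits keeps $\mathbf{Q}^*$ admissible, while Fatou's lemma (using $v_j \geq v_j(0) > -\infty$ after a standard shift) yields lower semicontinuity of the dual objective, making $(\lambda^*, \mathbf{Q}^*)$ a minimizer.
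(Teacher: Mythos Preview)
Your overall duality strategy is sound and overlaps substantially with the paper: Fenchel's inequality for the upper bound, Rockafellar's integral-conjugate formula for the inner supremum over $M^\Phi$, and the reparametrization $\mathbf{Z}=\lambda\mathbf{Q}$ via the constraint $\mathbf{e_i}-\mathbf{e_j}\in\mathcal{C}$ all appear in the paper's proof as well. The substantive difference is in how the reverse inequality \emph{together with attainment} is obtained. The paper does not invoke a minimax theorem or a separate compactness argument; instead it sets $\rho(\mathbf{X}):=-\sup_{\mathbf{Y}\in\mathcal{C}}\sum_j\mathbb{E}[u_j(X^j+Y^j)]$, checks that $\rho$ is convex, monotone decreasing, and finite on the Fr\'echet lattice $M^\Phi$, and applies the Extended Namioka--Klee Theorem. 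That theorem delivers automatic $\sigma(M^\Phi,L^{\Phi^*})$-lower semicontinuity and the dual representation $\rho(\mathbf{X})=\max_{\mathbf{Z}\in L^{\Phi^*}_+}\bigl(-\sum_j\mathbb{E}[X^jZ^j]-\alpha(\mathbf{Z})\bigr)$, \emph{with the maximum attained in} $L^{\Phi^*}$. Computing $\alpha$ via Rockafellar and restricting to $\mathcal{C}^0$ then finishes the proof in one stroke.

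Your Koml\'os-based attainment argument, by contrast, has genuine gaps. First, a misattribution: the condition $\lim_{x\to-\infty}u_j(x)/x=+\infty$ (and hence superlinear growth of $v_j$) is not derivable from strict monotonicity plus the finiteness hypothesis; it is hard-wired into the construction of $M^\Phi$, since $\phi_j(x)=-u_j(-|x|)+u_j(0)$ must be a strict Young function. Second, and more seriously, even granting superlinear growth your sketch does not establish (i) that $(\lambda_n)$ is bounded when $\mathbf{X}\neq 0$, since the linear term $\lambda_n\sum_jE_{Q_n^j}[X^j]$ can be arbitrarily negative unless you simultaneously control the densities; (ii) that the Koml\'os limit of the densities lies in $L^{\Phi^*}$ rather than merely in $L^1$, which is what membership in $(\mathcal{C}^0_1)^+$ requires; and (iii) that the polar inequality $\sum_j\mathbb{E}[Y^jZ_n^j]\leq 0$ passes to the a.s.\ limit for every $\mathbf{Y}\in\mathcal{C}\subseteq M^\Phi$, the integrands being signed so that Fatou alone does not suffice. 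These points can likely be repaired with additional work, but the Namioka--Klee route bypasses them entirely: attainment in the correct dual space comes for free from the structure of monotone convex functionals on Fr\'echet lattices.
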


\begin{proof}
$\,$


Observe first that $\mathbf{X}\mapsto \rho (\mathbf{X}):=-\sup_{\mathbf{Y}%
\in \mathcal{C}}\left( \sum_{j=1}^{N}\mathbb{E}\left[ u_{j}\left(
X^{j}+Y^{j}\right) \right] \right) $ is a non increasing, finite valued,
convex functional on the Fr\'{e}chet lattice $M^{\Phi }$. Only convexity is
non-evident: to show it, consider $\mathbf{X},\mathbf{Z}\in M^{\Phi }$ and $%
\mathbf{Y},\mathbf{W}\in \mathcal{C}$. For any $0\leq \lambda \leq 1,$ we
have by concavity 
\begin{eqnarray*}
&&\lambda \sum_{j=1}^{N}\mathbb{E}\left[ u_{j}\left( X^{j}+Y^{j}\right) %
\right] +(1-\lambda )\sum_{j=1}^{N}\mathbb{E}\left[ u_{j}\left(
Z^{j}+W^{j}\right) \right] \\
&\leq &\sum_{j=1}^{N}\mathbb{E}\left[ u_{j}\left( \lambda
(X^{j}+Y^{j})+(1-\lambda )(Z^{j}+W^{j})\right) \right] \\
&=&\sum_{j=1}^{N}\mathbb{E}\left[ u_{j}\left( \lambda X^{j}+(1-\lambda
)Z^{j}+\left( \lambda Y^{j}+(1-\lambda )W^{j}\right) \right) \right] \leq
-\rho (\lambda \mathbf{X}+(1-\lambda )\mathbf{Z})
\end{eqnarray*}%
as $\lambda \mathbf{Y}+(1-\lambda )\mathbf{W}\in \mathcal{C}$. Thus taking
suprema over $\mathbf{Y},\mathbf{W}\in \mathcal{C}$ we get 
\begin{equation*}
\lambda (-\rho (\mathbf{X}))+(1-\lambda )(-\rho (\mathbf{Z}))\leq -\rho
(\lambda \mathbf{X}+(1-\lambda )\mathbf{Z}).
\end{equation*}%
Now the Extended Namioka-Klee Theorem (see \cite{bfnam} Theorem A.3) can be
applied and we obtain 
\begin{equation*}
\rho (\mathbf{X})=\max_{\mathbf{0}\leq \mathbf{Z}\in L^{\Phi ^{\ast
}}}\left( \sum_{j=1}^{N}E_{\mathbb{P}}\left[ X^{j}(-Z^{j})\right] -\alpha (%
\mathbf{Z})\right) ,
\end{equation*}%
where 
\begin{align}
\alpha (\mathbf{Z}) &:=\sup_{\mathbf{X}\in M^{\Phi }}\left(
\sum_{j=1}^{N}E_{\mathbb{P}}\left[ X^{j}(-Z^{j})\right] -\rho (\mathbf{X}%
)\right)\notag \\
&=\sup_{\mathbf{X}\in M^{\Phi }}\left( \sum_{j=1}^{N}E_{\mathbb{P}}%
\left[ X^{j}(-Z^{j})\right] +\sup_{\mathbf{Y}\in \mathcal{C}}\left(
\sum_{j=1}^{N}\mathbb{E}\left[ u_{j}\left( X^{j}+Y^{j}\right) \right]
\right) \right)  \notag \\
&=\sup_{\mathbf{Y}\in \mathcal{C}}\left( \sup_{\mathbf{X}\in M^{\Phi
}}\left( \sum_{j=1}^{N}E_{\mathbb{P}}\left[ X^{j}(-Z^{j})\right] +\left(
\sum_{j=1}^{N}\mathbb{E}\left[ u_{j}\left( X^{j}+Y^{j}\right) \right]
\right) \right) \right)  \notag \\
&=\sup_{\mathbf{Y}\in \mathcal{C}}\left( \sum_{j=1}^{N}E_{\mathbb{P}}\left[
Y^{j}(Z^{j})\right] +\sup_{\mathbf{W}\in M^{\Phi }}\left( \sum_{j=1}^{N}E_{%
\mathbb{P}}\left[ W^{j}(-Z^{j})\right] +\left( \sum_{j=1}^{N}\mathbb{E}\left[
u_{j}\left( W^{j}\right) \right] \right) \right) \right) .  \label{alpha}
\end{align}%
Observe now that $-U(\mathbf{z}):=\sum_{j=1}^{N}-u_{j}(z^{j})$ for $\mathbf{z%
}\in {\mathbb{R}}^{N}$ defines a continuous, convex, proper function whose
Fenchel transform is 
\begin{equation*}
(-U)^{\ast }(\mathbf{w}):=\sup_{\mathbf{z}\in {\mathbb{R}}^{N}}\left(
\langle \mathbf{z},\mathbf{w}\rangle -(-U(\mathbf{z}))\right) =\sup_{\mathbf{%
z}\in {\mathbb{R}}^{N}}\left( \langle \mathbf{z},\mathbf{w}\rangle +U(%
\mathbf{z})\right) =\sup_{\mathbf{z}\in {\mathbb{R}}^{N}}\left( U(\mathbf{z}%
)-\langle \mathbf{z},-\mathbf{w}\rangle \right) =\sum_{j=1}^{N}v_{j}(-w^{j}).
\end{equation*}%
Now we apply Corollary on page 534 of \cite{Rockafellar} with $L=M^{\Phi }$, 
$L^{\ast }=L^{\Phi ^{\ast }}$, $F(\mathbf{x})=-U(\mathbf{x})$ to see that 
\begin{equation*}
\sup_{\mathbf{W}\in M^{\Phi }}\left( \sum_{j=1}^{N}E_{\mathbb{P}}\left[
W^{j}(-Z^{j})\right] +\sum_{j=1}^{N}\mathbb{E}\left[ u_{j}\left(
W^{j}\right) \right] \right) =E_{\mathbb{P}}\left[ \sum_{j=1}^{N}v_{j}(Z^{j})%
\right]
\end{equation*}%
and replacing this in (\ref{alpha}) we get: 
\begin{equation*}
\alpha (\mathbf{Z})=\sup_{\mathbf{Y}\in \mathcal{C}}\left( \sum_{j=1}^{N}E_{%
\mathbb{P}}\left[ Y^{j}Z^{j}\right] +E_{\mathbb{P}}\left[
\sum_{j=1}^{N}v_{j}(Z^{j})\right] \right) .
\end{equation*}%
Now observe that there are two possibilities:

\begin{itemize}
\item either $\mathbf{Z}\in\mathcal{C}^0$, and in this case $\alpha(\mathbf{Z%
})=E_\mathbb{P} \left[\sum_{j=1}^Nv_j(Z^j)\right]$ since $\mathbf{0}\in%
\mathcal{C}$

\item or $\alpha (\mathbf{Z})=+\infty $, since $v_{1},\dots ,v_{N}$ are
bounded from below.
\end{itemize}

Hence 
\begin{eqnarray}
-\sup_{\mathbf{Y}\in \mathcal{C}}\left( \sum_{j=1}^{N}\mathbb{E}\left[
u_{j}\left( X^{j}+Y^{j}\right) \right] \right) &=&\max_{\mathbf{0}\leq 
\mathbf{Z}\in L^{\Phi ^{\ast }}}\left( \sum_{j=1}^{N}E_{\mathbb{P}}\left[
X^{j}(-Z^{j})\right] -\alpha (\mathbf{Z})\right)  \notag \\
&=&\max_{\mathbf{0}\leq \mathbf{Z}\in \mathcal{C}^{0}}\left( -\left(
\sum_{j=1}^{N}E_{\mathbb{P}}\left[ X^{j}Z^{j}\right] +E_{\mathbb{P}}\left[
\sum_{j=1}^{N}v_{j}(Z^{j})\right] \right) \right)  \notag \\
&=&-\min_{\mathbf{0}\leq \mathbf{Z}\in \mathcal{C}^{0}}\left(
\sum_{j=1}^{N}E_{\mathbb{P}}\left[ X^{j}Z^{j}\right] +E_{\mathbb{P}}\left[
\sum_{j=1}^{N}v_{j}(Z^{j})\right] \right) .  \label{eqminimaxugly1}
\end{eqnarray}%
Moreover, since for every $i,j=1,\dots ,N$ $\mathbf{e_{i}}-\mathbf{e_{j}}\in 
\mathcal{C}$ we can argue as in Lemma \ref{rempolarisnice} to deduce that $%
\mathcal{C}^{0}\cap (L_{+}^{0})^{N}={\mathbb{R}_{+}}\cdot (\mathcal{C}%
_{1}^{0})^{+}$. Replacing this in the expression \eqref{eqminimaxugly1} we
get 
\begin{equation*}
\sup_{\mathbf{Y}\in \mathcal{C}}\left( \sum_{j=1}^{N}\mathbb{E}\left[
u_{j}\left( X^{j}+Y^{j}\right) \right] \right) =\min_{\lambda \in {\mathbb{R}%
}_{+},\,\mathbf{Q}\in (\mathcal{C}_{1}^{0})^{+}}\left( \lambda \sum_{j=1}^{N}%
\mathbb{E}\left[ X^{j}\frac{\mathrm{d}Q^{j}}{\mathrm{d}P}\right]
+\sum_{j=1}^{N}\mathbb{E}\left[ v_{j}\left( \lambda \frac{\mathrm{d}Q^{j}}{%
\mathrm{d}P}\right) \right] \right) .
\end{equation*}%
$\,$To prove the last claim, observe that if the optimum $\lambda $ in the
right hand side was $0$, we would have 
\begin{equation*}
\sup_{\mathbf{Y}\in \mathcal{C}}\left( \sum_{j=1}^{N}\mathbb{E}\left[
u_{j}\left( X^{j}+Y^{j}\right) \right] \right) =\sum_{j=1}^{N}v_{j}\left(
0\right) =\sum_{j=1}^{N}u_{j}(+\infty ),
\end{equation*}%
which contradicts our hypotheses.
\end{proof}

\begin{theorem}
\label{thmweirdclosure} Let $u_{1},\dots ,u_{N}$ satisfy Assumption \ref{A00}%
. Let $K\subseteq M^{\Phi }$ be a convex cone such that for all $i,j\in
\{1,\dots ,N\}$ $\mathbf{e_{i}}-\mathbf{e_{j}}\in K$ and suppose that $%
\mathcal{Q}_{v}^{e}\neq \emptyset $, where 
\begin{equation*}
\mathcal{Q}_{v}^{e}:=\left\{ \mathbf{Q}\sim P\mid \frac{\mathrm{d}Q^{j}}{%
\mathrm{d}P}\in L^{\Phi _{j}^{\ast }},\mathbb{E}\left[ v_{j}\left( \frac{%
\mathrm{d}Q^{j}}{\mathrm{d}P}\right) \right] <+\infty ,\sum_{j=1}^{N}E_{{Q}%
^{j}}\left[ k^{j}\right] \leq 0\text{ \ }\forall \mathbf{k}\in K\right\}
\subseteq L^{\Phi ^{\ast }}.
\end{equation*}%
Then denoting by $cl_{\mathbf{Q}}(\dots )$ the closure in $L^{1}\left(
Q^{1}\right)\times\dots\times L^{1}\left( Q^{N}\right) $ with respect to the
norm $\left\Vert \mathbf{X}\right\Vert _{\mathbf{Q}}:=\sum_{j=1}^{N}\left%
\Vert X^{j}\right\Vert _{L^{1}(Q^{j})}$ we have 
\begin{equation*}
\bigcap_{\mathbf{Q}\in \mathcal{Q}_{v}^{e}}cl_{\mathbf{Q}}\left(
K-L_{+}^{1}\left( \mathbf{Q}\right) \right) =\left\{ \mathbf{W}\in \bigcap_{%
\mathbf{Q}\in \mathcal{Q}_{v}^{e}}L^{1}\left(\mathbf{Q}\right) \mid
\sum_{j=1}^{N}E_{{Q}^{j}}\left[ W^{j}\right] \leq 0\,\,\forall \,\mathbf{Q}%
\in \mathcal{Q}_{v}^{e}\right\} .
\end{equation*}
\end{theorem}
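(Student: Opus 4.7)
The plan is to prove the two inclusions separately. The inclusion $\subseteq$ is the routine one: if $\mathbf{W}$ lies in $cl_{\mathbf{Q}}(K - L^1_+(\mathbf{Q}))$ for some $\mathbf{Q}\in\mathcal{Q}_v^e$, take an approximating sequence $\mathbf{k}_n - \mathbf{h}_n \to \mathbf{W}$ in $L^1(\mathbf{Q})$-norm with $\mathbf{k}_n \in K$ and $\mathbf{h}_n\in L^1_+(\mathbf{Q})$. Integration against $\mathbf{Q}$ is continuous in this norm, and by definition of $\mathcal{Q}_v^e$ we have $\sum_j E_{Q^j}[k_n^j]\le 0$, while $\sum_j E_{Q^j}[h_n^j]\ge 0$; passing to the limit yields $\sum_j E_{Q^j}[W^j]\le 0$. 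Since this holds for every $\mathbf{Q}\in \mathcal{Q}_v^e$, the inclusion follows.

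For the hard inclusion $\supseteq$, I will argue by contradiction via Hahn–Banach separation. Fix $\mathbf{Q}_0\in\mathcal{Q}_v^e$ and suppose $\mathbf{W}$ satisfies the RHS condition but $\mathbf{W}\notin cl_{\mathbf{Q}_0}(K-L^1_+(\mathbf{Q}_0))$. Because the latter set is a closed convex cone in $L^1(\mathbf{Q}_0)$, there exists $\mathbf{Z}\in L^\infty(\mathbf{Q}_0)$ with $\sum_j E_{Q_0^j}[W^j Z^j]>0$ while $\sum_j E_{Q_0^j}[(k^j-h^j)Z^j]\le 0$ for all $\mathbf{k}\in K$ and $\mathbf{h}\in L^1_+(\mathbf{Q}_0)$. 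Testing $\mathbf{k}=0$ with arbitrary $\mathbf{h}\ge 0$ forces $\mathbf{Z}\ge 0$ $\mathbf{Q}_0$-a.s.; testing with $\mathbf{k}=\mathbf{e}_i-\mathbf{e}_j\in K$ forces $E_P[Z^i dQ_0^i/dP] = E_P[Z^j dQ_0^j/dP]=:\alpha$ for all $i,j$. If $\alpha=0$ then $Z^j\equiv 0$ $P$-a.s. (using $\mathbf{Q}_0\sim P$), which already contradicts $\sum_j E_{Q_0^j}[W^jZ^j]>0$. So assume $\alpha>0$ and define the probability vector $\mathbf{Q}'$ by $dQ'^{j}/dP := Z^j (dQ_0^j/dP)/\alpha$.

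The bridging step is to introduce the convex combinations $\mathbf{Q}_\lambda:=(1-\lambda)\mathbf{Q}_0+\lambda\mathbf{Q}'$ for $\lambda\in(0,1)$ and show they lie in $\mathcal{Q}_v^e$. Equivalence $\mathbf{Q}_\lambda\sim P$ is immediate since $(1-\lambda)dQ_0^j/dP>0$ $P$-a.s. The polar condition $\sum_j E_{Q_\lambda^j}[k^j]\le 0$ on $K$ holds by convexity, as both $\mathbf{Q}_0$ and $\mathbf{Q}'$ satisfy it. The delicate ingredient is integrability $\mathbb{E}[v_j(dQ_\lambda^j/dP)]<\infty$. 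Since $Z^j$ is $P$-essentially bounded by some $M$, one has pointwise $dQ_\lambda^j/dP = \theta_j\, dQ_0^j/dP$ with $\theta_j\in[1-\lambda,\,1-\lambda+\lambda M/\alpha]$; by the chord bound for the convex function $v_j$,
\[
v_j\bigl(\theta_j\, dQ_0^j/dP\bigr)\ \le\ v_j\bigl((1-\lambda)dQ_0^j/dP\bigr)\ +\ v_j\bigl((1-\lambda+\lambda M/\alpha)dQ_0^j/dP\bigr),
\]
and both terms on the right have finite expectation by the (RAE)-type assumption \eqref{RAE}, which guarantees $\mathbb{E}[v_j(\mu\, dQ_0^j/dP)]<\infty$ for every $\mu>0$. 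Hence $\mathbf{Q}_\lambda\in\mathcal{Q}_v^e$.

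Finally, I apply the RHS hypothesis on $\mathbf{W}$ to each $\mathbf{Q}_\lambda$:
\[
0\ \ge\ \sum_{j=1}^N E_{Q_\lambda^j}[W^j]\ =\ (1-\lambda)\sum_{j=1}^N E_{Q_0^j}[W^j]\ +\ \lambda \sum_{j=1}^N E_{Q'^j}[W^j].
\]
The second sum makes sense and is finite because $dQ'^j/dP\le (M/\alpha)\,dQ_0^j/dP$ implies $\mathbf{W}\in L^1(\mathbf{Q}')$. Letting $\lambda\to 1^-$ gives $\sum_j E_{Q'^j}[W^j]\le 0$. But $\sum_j E_{Q'^j}[W^j]=\alpha^{-1}\sum_j E_{Q_0^j}[W^jZ^j]>0$ by the separation, a contradiction. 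So $\mathbf{W}\in cl_{\mathbf{Q}_0}(K-L^1_+(\mathbf{Q}_0))$ for every $\mathbf{Q}_0\in\mathcal{Q}_v^e$, which is the desired inclusion. I expect the main obstacle to be precisely the $v_j$-integrability verification for $\mathbf{Q}_\lambda$, and handling the limit $\lambda\to 1^-$; both are resolved by exploiting boundedness of $\mathbf{Z}$ together with the Reasonable Asymptotic Elasticity hypothesis.
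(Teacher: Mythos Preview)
Your proof is correct and follows essentially the same route as the paper's: Hahn--Banach separation in $L^1(\mathbf{Q}_0)$, normalization of the separating functional to a probability vector $\mathbf{Q}'$, convex combinations $\mathbf{Q}_\lambda$ shown to lie in $\mathcal{Q}_v^e$, and a limiting contradiction. The only cosmetic remark is that your ``chord bound'' inequality is not literally true as written (convexity gives $v_j(\theta_j y)\le \mu\, v_j(ay)+(1-\mu)\, v_j(by)$ with $\mu\in[0,1]$, not the additive sum), but since each $v_j$ is bounded from below this still yields the required integrability; the paper handles this step by invoking Lemma~\ref{lemmaintegra}.2 instead.
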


\begin{proof}
We modify the procedure in \cite{bf} Theorem 4. The inclusion ($LHS\subseteq
RHS$) can be checked directly. As to the opposite one ($RHS\subseteq LHS$),
suppose we had a $\mathbf{k}\in RHS$ and a $\mathbf{Q}\in \mathcal{Q}%
_{v}^{e} $ with $\mathbf{k}\notin cl_{\mathbf{Q}}\left( K-L_{+}^{1}\left( 
\mathbf{Q}\right) \right) $, that is $\mathbf{k}\notin LHS$. We stress that
by construction 
\begin{equation}
\sum_{j=1}^{N}E_{{Q}^{j}}\left[ k^{j}\right] \leq 0\,\,\,\forall \mathbf{Q}%
\in \mathcal{Q}_{v}^{e}\,.  \label{eqforcontrad}
\end{equation}%
In the dual system 
\begin{equation*}
\left( L^{1}(\mathbf{Q}),L^{\infty}(\mathbf{Q})\right)
\end{equation*}%
the set $cl_{\mathbf{Q}}\left( K-L_{+}^{1}\left(\mathbf{Q}\right) \right) $
is convex and $\sigma \left( L^{1}(\mathbf{Q}),L^{\infty}(\mathbf{Q})\right)$%
-closed by compatibility of the latter topology with the norm topology. Thus
we can use Hahn-Banach Separation Theorem to get a class $\widehat{\mathbf{%
\xi }}\in L^{\infty }(\mathbf{Q})$ with 
\begin{equation}
0=\sup_{\mathbf{W}\in (K-L_{+}^{1}\left( \mathbf{Q}\right) )}\left(
\sum_{j=1}^{N}\mathbb{E}\left[ \widehat{\xi }^{j}W^{j}\frac{\mathrm{d}Q^{j}}{%
\mathrm{d}P}\right] \right) <\sum_{j=1}^{N}\mathbb{E}\left[ \widehat{\xi }%
^{j}k^{j}\frac{\mathrm{d}Q^{j}}{\mathrm{d}P}\right].  \label{eqseparation}
\end{equation}%
We now work componentwise. First observe that 
\begin{equation*}
\lbrack -1_{\widehat{\xi }^{j}<0}]_{j=1}^{N}\in 0-L_{+}^{\infty }\left( 
\mathbf{Q}\right) \subseteq K-L_{+}^{1}\left( \mathbf{Q}\right),
\end{equation*}%
so that $\widehat{\xi }^{j}\geq 0$ $Q^{j}$-a.s. for every $j=1,\dots ,N$.
Hence $\widehat{\xi }^{j}\frac{\mathrm{d}Q^{j}}{\mathrm{d}P}\geq 0$ $P$-a.s.
for every $j=1,\dots ,N$.

Moreover, since for all $i,j\in \{1,\dots ,N\}$ $\mathbf{e_{i}}-\mathbf{e_{j}%
}\in K$, we have 
\begin{equation}
\mathbb{E}\left[ \widehat{\xi }^{1}\frac{\mathrm{d}Q^{1}}{\mathrm{d}P}\right]
=\dots =\mathbb{E}\left[ \widehat{\xi }^{N}\frac{\mathrm{d}Q^{N}}{\mathrm{d}P%
}\right]\,.  \label{eqchainequalities}
\end{equation}%
It follows that for every $j=1,\dots ,N$ 
\begin{equation*}
P\left( \widehat{\xi }^{j}\frac{\mathrm{d}Q^{j}}{\mathrm{d}P}>0\right) >0
\end{equation*}%
since if this were not the case all the terms in equation %
\eqref{eqchainequalities} would be null, which would yield $\widehat{\xi }%
^{1}\frac{\mathrm{d}Q^{1}}{\mathrm{d}P}=\dots =\widehat{\xi }^{N}\frac{%
\mathrm{d}Q^{N}}{\mathrm{d}P}=0$, a contradiction with \eqref{eqseparation}.

Hence the vector 
\begin{equation*}
\frac{\mathrm{d}Q_{1}^{j}}{\mathrm{d}P}:=\frac{1}{\mathbb{E}\left[ \widehat{%
\xi }^{j}\frac{\mathrm{d}Q^{j}}{\mathrm{d}P}\right] }\widehat{\xi }^{j}\frac{%
\mathrm{d}Q^{j}}{\mathrm{d}P}
\end{equation*}%
is well defined and identifies a vector of probability measures $%
[Q_{1}^{1},\dots ,Q_{1}^{N}]$. We trivially have that 
\begin{equation*}
Q_{1}^{j}\ll P,\frac{\mathrm{d}Q_{1}^{j}}{\mathrm{d}P}\in L^{\Phi _{j}^{\ast
}}\,,
\end{equation*}%
and by equation \eqref{eqseparation}, together with \eqref{eqchainequalities}
\begin{equation}
\sup_{\mathbf{W}\in K}\left( \sum_{j=1}^{N}\mathbb{E}\left[ W^{j}\frac{%
\mathrm{d}Q_{1}^{j}}{\mathrm{d}P}\right] \right) \leq 0<\sum_{j=1}^{N}%
\mathbb{E}\left[ k^{j}\frac{\mathrm{d}Q_{1}^{j}}{\mathrm{d}P}\right]\,.
\label{eqQ1inpolar}
\end{equation}%
We observe that if we could prove $\mathbf{Q}_{1}\in \mathcal{Q}_{v}^{e}$,
we would get a contradiction with (\ref{eqforcontrad}). However this needs
not to be true, since we cannot guarantee $Q_{1}^{1},\dots ,Q_{1}^{N}\sim P$.

As $\mathbf{Q}\in \mathcal{Q}_{v}^{e}$, we have $\mathbf{Q}\sim P$, and for $%
\mathbf{Q}_{1}$ above we have $\mathbf{Q}_{1}\ll \mathbf{Q}$, $\frac{\mathrm{%
d}Q_{1}^{k}}{\mathrm{d}Q^{k}}\in L^{\infty }(Q^{k})=L^{\infty }(P)$. Take $%
\lambda \in (0,1]$ and define $\mathbf{Q}_{\lambda }$ via 
\begin{equation*}
\frac{\mathrm{d}Q_{\lambda }^{k}}{\mathrm{d}P}:=\lambda \frac{\mathrm{d}Q^{k}%
}{\mathrm{d}P}+(1-\lambda )\frac{\mathrm{d}Q_{1}^{k}}{\mathrm{d}P}\,.
\end{equation*}%
We now prove that $\mathbf{Q_{\lambda }}\in \mathcal{Q}_{v}^{e}$. It is easy
to check that 
\begin{equation*}
0<\lambda \leq \frac{\mathrm{d}Q_{\lambda }^{k}}{\mathrm{d}Q^{k}}\leq
(1-\lambda )\frac{\mathrm{d}Q_{1}^{k}}{\mathrm{d}Q^{k}}+\lambda\,,
\end{equation*}%
so that Lemma \ref{lemmaintegra}.2. with $g=g^{k}:=\frac{\mathrm{d}%
Q_{\lambda }^{k}}{\mathrm{d}Q^{k}}$, together with $\mathbb{E}\left[
v_{k}\left( \frac{\mathrm{d}Q^{k}}{\mathrm{d}P}\right) \right] <+\infty
\,\forall \,k=1,\dots ,N$ ($\mathbf{Q}\in \mathcal{Q}_{v}^{e}$ by
construction), yields 
\begin{equation*}
\mathbb{E}\left[ v_{k}\left( \frac{\mathrm{d}Q_{\lambda }^{k}}{\mathrm{d}P}%
\right) \right] =\mathbb{E}\left[ v_{k}\left( \frac{\mathrm{d}Q_{\lambda
}^{k}}{\mathrm{d}Q^{k}}\frac{\mathrm{d}Q^{k}}{\mathrm{d}P}\right) \right] =%
\mathbb{E}\left[ v_{k}\left( g^{k}\frac{\mathrm{d}Q^{k}}{\mathrm{d}P}\right) %
\right] <+\infty ,\,\,\,\forall \,k\in \{1,\dots ,N\},\,\,\lambda \in
(0,1]\,.
\end{equation*}%
Moreover $\mathbf{Q}\in \mathcal{Q}_{v}^{e}$ and $\lambda >0$ imply $\mathbf{%
Q}_{\lambda }^{k}\sim P$ for all $k=1,\dots ,N$. This, together with
equation \eqref{eqQ1inpolar}, yields 
\begin{equation*}
\sum_{j=1}^{N}\mathbb{E}\left[ W^{j}\frac{\mathrm{d}Q_{\lambda }^{j}}{%
\mathrm{d}P}\right] \leq 0\,\,\,\forall \mathbf{W}\in K,\forall \,\lambda
\in (0,1]\,.
\end{equation*}%
We can conclude that $\mathbf{Q}_{\lambda }\in \mathcal{Q}_{v}^{e},\,\forall
\lambda \in (0,1]$. At the same time 
\begin{equation*}
\sum_{j=1}^{N}\mathbb{E}\left[ k^{j}\frac{\mathrm{d}Q_{\lambda }^{j}}{%
\mathrm{d}P}\right] =\lambda \sum_{j=1}^{N}\mathbb{E}\left[ k^{j}\frac{%
\mathrm{d}Q^{j}}{\mathrm{d}P}\right] +(1-\lambda )\sum_{j=1}^{N}\mathbb{E}%
\left[ k^{j}\frac{\mathrm{d}Q_{1}^{j}}{\mathrm{d}P}\right] %
\xrightarrow[\lambda\rightarrow 0 ]{}\sum_{j=1}^{N}\mathbb{E}\left[ k^{j}%
\frac{\mathrm{d}Q_{1}^{j}}{\mathrm{d}P}\right] \overset{\text{Eq.}%
\eqref{eqQ1inpolar}}{>}0\,,
\end{equation*}%
which gives a contradiction with Equation \eqref{eqforcontrad}. We conclude
that $RHS\subseteq LHS$.
\end{proof}

\bibliographystyle{abbrv}
\bibliography{BibSORTE}

\end{document}